\newcommand{\indep}{\mathbin{\rotatebox[origin=c]{90}{$\models$}}}
\newcommand{\dx}[1]{\ \text{d} #1}
\newcommand{\E}{\mathbb{E}}
\newcommand{\indicator}[1]{\mathds{1}\left\{ #1 \right\}}
\newcommand{\X}{\mathbf{X}}
\newcommand{\bt}{\mathbf{t}}
\newcommand{\Y}{\mathbf{Y}}
\newcommand{\x}{\mathbf{x}}
\newcommand{\bL}{\mathbf{L}}
\newcommand{\bl}{\mathbf{l}}
\newcommand{\bLi}{\mathbf{L}_{(i)}}
\newcommand{\bli}{\mathbf{l}_{(i)}}
\newcommand{\bT}{\mathbf{T}}
\newcommand{\DE}{DE}
\newcommand{\IDE}{IDE}
\newcommand{\AR}{AR}
\newcommand{\VE}{VE}
\newcommand{\CE}{CE}
\newcommand{\SE}{SE}
\newcommand{\IE}{IE}
\newcommand{\Xminusi}{\mathbf{X}_{(i)}}
\newcommand{\xminusi}{\mathbf{x}_{(i)}}
\newcommand{\xminusj}{\mathbf{x}_{(j)}}
\newcommand{\xik}{\mathbf{x}_{(i)}^\mathbf{k}}
\DeclareMathAlphabet\mathbfcal{OMS}{cmsy}{b}{n}
\newcommand{\bH}{\mathbfcal{H}}
\newcommand{\bh}{\mathbf{h}}
\newcommand{\bHi}{\mathbfcal{H}_{(i)}}
\newcommand{\bhi}{\mathbf{h}_{(i)}}
\newcommand{\bhik}{\mathbf{h}_{(i)}^\mathbf{k}}
\newcommand{\redsout}{\bgroup\markoverwith{\textcolor{red}{\rule[0.5ex]{2pt}{0.4pt}}}\ULon}
\newtheorem{thm}{Theorem}
\newtheorem{lem}{Lemma}
\newtheorem{defn}{Definition}
\newtheorem*{defn*}{Definition}
\newtheorem{prop}{Proposition}
\newtheorem{cor}{Corollary}
\newtheorem{assumption}{Assumption}
\newtheoremstyle{break}
  {\topsep}{\topsep}%
  {\itshape}{}%
  {\bfseries}{}%
  {\newline}{}%
\theoremstyle{break}
\title{Causal identification of infectious disease intervention effects \\ in a clustered population}
\author{Xiaoxuan Cai$^1$, Eben Kenah$^2$, and Forrest W. Crawford$^{1,3,4,5}$ \\[1em]
 \normalsize 1. Department of Biostatistics, Yale School of Public Health \\
 \normalsize 2. Division of Biostatistics, College of Public Health, The Ohio State University \\
 \normalsize 3. Department of Statistics \& Data Science, Yale University \\
 \normalsize 4. Department of Ecology and Evolutionary Biology, Yale University \\
 \normalsize 5. Yale School of Management}
\date{\today}
\begin{document}
\maketitle


\begin{abstract}
\noindent Causal identification of treatment effects for infectious disease outcomes in interconnected populations is challenging because infection outcomes may be transmissible to others, and treatment given to one individual may affect others' outcomes. Contagion, or transmissibility of outcomes, complicates standard conceptions of treatment interference in which an intervention delivered to one individual can affect outcomes of others.    
  For example, a vaccine given to an individual may affect their risk of infection given exposure to an infectious source as well as their infectiousness if they become infected.  
  Several statistical frameworks have been proposed to measure causal treatment effects in this setting, including structural transmission models, mediation-based partnership models, and randomized trial designs. 
  However, existing estimands for infectious disease intervention effects are of limited conceptual usefulness: 
  Some are parameters in a structural model whose causal interpretation is unclear, others are causal effects defined only in a restricted two-person setting, and still others are nonparametric estimands that arise naturally in the context of a randomized trial but may not measure any biologically meaningful effect.  
  In this paper, we describe a unifying formalism for defining nonparametric structural causal estimands and an identification strategy for learning about infectious disease intervention effects in clusters of interacting individuals when infection times are observed. 
  The estimands generalize existing quantities and provide a framework for causal identification in randomized and observational studies, including situations where only binary infection outcomes are observed.  
  A semiparametric class of pairwise Cox-type transmission hazard models is used to facilitate statistical inference in finite samples.  
  A comprehensive simulation study compares existing and proposed estimands under a variety of randomized and observational vaccine trial designs. \\[1em]
  \textbf{Keywords:} 
  causal inference,
  contagion,
  direct effect,
  indirect effect,
  interference,
  vaccine
\end{abstract}



\section{Introduction}

Infectious diseases represent a leading cause of illness, disability, and death worldwide~\citep{world2009global,cdc2011}.  
Major successes in the control and prevention of infectious diseases have been achieved by vaccination programs, and evaluation of vaccine efficacy is of major interest for epidemiologists and statisticians~\citep{halloran2010design}.  
However, when trial participants come from the same population, the infection outcomes of study participants may not be independent, posing problems for definition and identification of causal vaccine effects.   
This dependence arises from two distinct sources.  
First, infection itself---with or without vaccination---may be transmissible between subjects. 
Second, vaccination of an individual may alter both their susceptibility to infection and their infectiousness if infected.
The complex interaction of vaccination patterns with individual and group-level risks of infection makes causal inference for vaccine effects a uniquely challenging problem.

Statisticians and infectious disease epidemiologists have devoted a great deal of effort to studying this problem.  
Causal dependence in infectious disease outcomes arises in several different but overlapping forms, including herd immunity~\citep{longini1988statistical,struchiner1990behaviour,halloran1991direct,o2014estimating}, spillover \citep{sobel2006randomized,vanderweele2011bounding}, dependent happenings~\citep{struchiner1990behaviour,halloran1991study,halloran1991direct,halloran1995causal,eisenberg2003bias,hudgens2008toward,vanderweele2011bounding,o2014estimating}, interference~\citep{cox1958,halloran1991study,sobel2006randomized,rosenbaum2007interference,hudgens2008toward,vanderweele2010direct,vanderweele2011bounding,tchetgen2012causal,o2014estimating}, and indirect effects~\citep{halloran1991study,halloran1991direct,halloran1995causal,rosenbaum2007interference,pitzer2012linking,vanderweele2011bounding,vanderweele2012components,ogburn2017vaccines}.  
Specifically, treatment of one individual may affect the distribution of their infection time when they are exposed to an infectious source.
If this individual becomes infected, their infection may subsequently increase the risk of infection in others, and this change in risk may depend on treatment in both the infectious and susceptible individuals.
This dependence between infection outcomes creates identification problems in causal inference for intervention effects. 
Using principles of causal reasoning \citep{halloran1995causal,sobel2006randomized,vanderweele2011bounding,o2014estimating,eck2019randomization,cai2021identification}, evidence from simulation \citep{struchiner1990behaviour,koopman1991assessing,halloran1994exposure,eisenberg2003bias,staples2015incorporating,morozova2018risk}, and the behavior of dynamic transmission models \citep{halloran1994exposure,halloran1997study,longini1982estimating,koopman1991assessing,eisenberg2003bias,koopman2004modeling,pitzer2012linking,rhodes1996counting,morozova2018risk}, researchers have shown that ignoring this dependence may result in biased estimates of causal vaccine effects. 

Some researchers have attempted to circumvent this problem by randomization, proposing design-based inferential frameworks for estimating vaccine effects. 
Halloran and colleagues proposed a two-stage block randomization design and formalized the ``direct'' and ``indirect'' effects of treatment~\citep{struchiner1990behaviour,halloran1991direct,halloran1991study,halloran1995causal,hudgens2006causal,hudgens2008toward}. 
Following ideas from mediation analysis, \citet{vanderweele2011bounding,vanderweele2012components,ogburn2017vaccines} and \citet{halloran2012causal} restricted the identification problem to an asymmetric partnership setting (i.e., pairs where infection can be transmitted in only one direction), and they defined a causal direct effect, contagion effect, and infectiousness effect. 
\citet{hudgens2006causal} and \citet{halloran2012causal} applied a principal stratification strategy to define a causal infectiousness effect, by contrasting the secondary attack risks of individuals exposed to treated versus untreated infected neighbors, providing a non-parametric bound for this infectiousness effect. 
These methods do not require complete knowledge of the causal structure between outcomes, but they also ignore individual characteristics and the dynamics of of infectious disease transmission. 
\citet{cai2021identification} study causal identification in the symmetric partnership setting, where infection can be transmitted in both directions. 
Their continuous-time strategy clarifies the causal structure of contagion and vaccine effects in groups of two, when both individuals have covariates, treatments, and can transmit infections to each other, but its usefulness is limited to the partnership scenario.  The identification problem remains open for clusters of size greater than two.

Statistical measures of association have long been used to summarize vaccine effects. 
\citet{halloran1991direct,halloran1994exposure,halloran1995causal,rhodes1996counting,halloran1997study,o2014estimating} described summary statistics based on attack risks, secondary attack risks, and per-contact-per-time conditional (secondary) attack rates of different treatment groups. 
However, the causal interpretation of these quantities depends on assumptions of homogeneous mixing \cite{halloran1991direct,halloran1997study,rhodes1996counting}, equal contact rates \cite{halloran1997study,rhodes1996counting}, constant prevalence of infection \cite{halloran1997study}, comparable exposure to infection between treated and untreated \cite{halloran1991direct,halloran1995causal,rhodes1996counting}, or detailed information about ``who-infected-whom'' \cite{longini1982estimating,rhodes1996counting,halloran1995causal}, and it remains unclear what mechanistic or causal features of the infection process are revealed by these measures. 

Mechanistic models serve as another major tool to represent transmission dynamics of an infectious disease, but it is unclear whether parameters in these models are interpretable as causal effects. 
\citet{longini1982estimating,longini1988statistical} and \citet{longini1982household} proposed the chain binomial model for longitudinal data and characterized disease transmission using two terms: exogenous risk from outside the cluster and endogenous risk per infective-susceptible contact within the cluster. 
However, the chain binomial model only applies to longitudinal data of fixed time intervals, and it can be difficult for these models to deal with individual heterogeneity. 
\citet{halloran1994exposure,halloran1997study} and \citet{rhodes1996counting} further consider models for infection hazards on continuous timescale, also assuming homogeneous mixing~\citep{halloran1991direct,rhodes1996counting,halloran1997study}, equal contact rates \cite{rhodes1996counting,halloran1997study}, constant prevalence \cite{halloran1997study}, or detailed information about ``who-infected-whom''\cite{longini1982estimating,rhodes1996counting}.
\citet{kenah2011contact,kenah2015semiparametric,kenah2013non} and \citet{kenah2008generation} proposed pairwise hazard models for transmission, which account for individual covariates and transmission dynamics over time.  However, it remains unclear whether parameters (i.e. a coefficient on the vaccination status of an individual) in mechanistic transmission models may have a causal interpretation. 

In this paper, we describe a nonparametric causal framework for defining and identifying intervention effects for transmissible infectious disease outcomes in clusters. 
These methods incorporate individual treatments and covariates that may affect both susceptibility and infectiousness. 
We propose new causal estimands---contagion, susceptibility, and infectiousness effects---using contrasts of potential infection outcomes under different treatment conditions.  
Two key insights distinguish this approach from others: 
First, we define potential infection outcomes holding constant infection times of other individuals, thereby blocking feedback from the infection status of a given individual to that of others who might infect them. 
Second, we define marginal effects by integrating over the infection times of others using a variant of Robins's G-computation formula~\citep{robins1986new} to obtain average potential infection outcomes under the infection risk an individual would experience \emph{if they were unable to infect others}. 
We describe a semiparametric hazard framework developed by~\citet{kenah2008generation} and \citet{kenah2011contact,kenah2013non,kenah2015semiparametric} for statistical estimation of parameters and causal effects.  The performance of the proposed causal estimands is illustrated and compared with several other methods.  We conclude with practical recommendations for targeting meaningful causal intervention effects for infectious disease outcomes. 


\section{Setting and notation}

Consider a population of independent clusters of individuals susceptible to infection by an infectious disease. 
Following the conventional partial interference assumption~\citep{sobel2006randomized,hudgens2008toward}, we assume there is no transmission between clusters. 
We develop notation and definitions for a generic cluster of $n$ individuals.  Let $T_i$ be the random time at which $i$ becomes infected, and let $\bT=(T_1, \ldots, T_n)$ be the vector of infection times of the cluster. 
Let $Y_i(t)= \indicator{T_i \leq t}$ be the binary indicator of prior infection of $i$ at time~$t$, and let $\Y(t) = \big(Y_1(t), \ldots, Y_n(t)\big)$ be the vector of infection indicators at time $t$ of the cluster. 
Let $X_{i}$ be the binary treatment (e.g. vaccination) status of $i$, and let $\X = (X_{1},\ldots,X_{n})$ be the treatment vector for the cluster. 
Similarly, let $\bL_i$ be the vector of measured covariates of $i$, and let $\bL = (\bL_1,\ldots,\bL_n)$ be the covariates for the cluster.  
Denote the vector $\bT$ excluding $T_i$ as $\bT_{(i)}=(T_1, \ldots, T_i, T_{i+1}, \ldots, T_n)$, and define the vectors $\Y_{(i)}(t)$, $\Xminusi$, and $\bL_{(i)}$ similarly. 

It is useful to represent the history of infections in the cluster up to a particular time~$t$. 
For a time $t > 0$, let $H_i(t) = \{Y_i(s):0 \le s<t\}$ be the infection history of individual~$i$ up to time $t$, with a realization denoted by $h_i(t)$, and let $\bH(t) = \{Y_i(s): 0 \le s < t, i = 1,\ldots,n\}$ be the binary infection history of the cluster up to time $t$, with a realization denoted by $\bh(t)$. Specifically, $\bH(t)$ represents the pattern of infections that have occurred in the cluster, up to time $t$. 
Let $\bHi(t) = \{Y_j(s): 0 \le s < t, j \neq i\}$ be the infection history of all individuals except~$i$ up to time $t$, with its realization $\bhi(t)$. 
To specify a complete infection history for all times $0\le t<\infty$, we will write $H_i = H_i(\infty)$, $\bH = \bH(\infty)$, and $\bHi = \bHi(\infty)$. 
We can partition the joint infection history in the cluster into $\bH(t)  = \big(H_i(t), \bHi(t)\big)$.

Informally, every individual's potential infection time depends on their own treatment as well as the treatments and infection times of others in the same cluster.
To express these simultaneous relationships, we define the potential infection outcome $Y_i(t)$ (or $T_i$) under the joint treatment $\X=\x$ and others' infection histories $\bHi = \bhi$. 
While $Y_i(t)$ and $T_i$ may depend on $\bHi$, the history $\bHi$ also depends on $Y_i(t)$ (or equivalently, $T_i$ or $H_i$) because infection of $i$ prior to $t$ might have caused infection of other individuals, affecting $\bHi(t)$. 
For this reason, we will define the potential infection outcomes $T_i\left(\x, \bhi\right)$ and $Y_i(t;\x,\bhi)=\indicator{T_i(\x,\bhi) \le t}$ under joint treatments~$\x$ and a \textit{deterministic} infection history~$\bhi$, where $\bhi$ is chosen in advance and cannot be affected by $T_i$ or $Y_i(t)$. 
By fixing the infection histories $\bhi$ of individuals other than $i$, we block feedback from the infection time of $i$ to the infection histories of others.

Following the strategy developed for two-person partnerships by~\citet{cai2021identification}, we decompose the potential infection time $T_i(\x, \bhi)$ of a given individual $i$ into a series of latent infection waiting times following the infections of individuals other than $i$.  
To illustrate, let $(t_{(i)}^{1},t_{(i)}^{2},\ldots,t_{(i)}^{n-1})$ be the ordered history of others' infection times specified by $\bhi$, where $t_{(i)}^k$ is the $k^\text{th}$ infection time among individuals other than $i$, ordered from earliest to latest. 
Denote the individual whose infection time is $t_{(i)}^k$ by $\varphi^k_i$, indexing the $k^{\text{th}}$ infected cluster member for $i$.  
Similarly, we rearrange the treatments for individuals except $i$ by the order of their infection times as $(x_{(i)}^{1}, \ldots, x_{(i)}^{n - 1})$, so that $x_{(i)}^{k} = x_{\varphi^k_i}$. 
For ease of notation, we denote $t_{(i)}^0=0$ and $t_{(i)}^{n}=\infty$, for all $i$.
The necessary reordering is contained in the history $\bhi$.

Because individuals are assumed to be uninfected at time $0$, let $I_i^0(\x,\bhi)$ be the potential time to infection of $i$ when no other cluster members are infected.  
If $i$ is infected before $t_{(i)}^{1}$, then $i$ is the first cluster member to be infected, and $i$ realizes the infection time $T_i(\x,\bhi)=I_i^0(\x,\bhi)$. 
Otherwise, the initial infection time $I_i^0(\x,\bhi)$ is censored by the infection of individual $\varphi^1_i$ at time $t_{(i)}^{1}$.  
If $I_i^0(\x,\bhi)$ is censored at $t_{(i)}^1$, let $I_i^1(\x,\bhi)$ be the additional time to infection of $i$ beyond $t_{(i)}^1$. 
If $i$ is subsequently infected before the next infection among its fellow cluster members, which occurs in individual $\varphi^2_i$ at time $t_{(i)}^2$, then $i$ realizes the infection time $T_i(\x,\bhi)=t_{(i)}^1 + I_i^1(\x,\bhi)$. 
Otherwise, the waiting time $I_i^1(\x,\bhi)$ is censored by infection of $\varphi^2_i$ at $t_{(i)}^{2}$.  
Following the same reasoning, let $I_i^k(\x,\bhi)$ be the additional time to infection of $i$ after $t_{(i)}^{k}$. 
If $i$ is infected before the $(k+1)^{\text{th}}$ infection in fellow cluster members at $t_{(i)}^{k+1}$, then $i$ realizes the infection time $T_i(\x,\bhi)=t_{(i)}^k + I_i^k(\x,\bhi)$, and otherwise $I_i^k(\x,\bhi)$ is censored. 
Finally, if $i$ is last to be infected, then $I_i^{n-1}(\x,\bhi)$ is the remaining time to infection beyond $t_{(i)}^{n-1}$.  
We can rewrite $T_i(\x,\bhi)$ in terms of its relationship to others' specified infection times in $\bhi$: 
\begin{multline}
    T_i(\x,\bhi) =  
      \begin{cases}
        I^0_i(\x,\bhi) & \textrm{if } I^0_i(\x,\bhi) < t_{(i)}^{1} \\
        t_{(i)}^{1} + I^1_i(\x,\bhi) & \textrm{if } I^0_i(\x,\bhi) \ge t_{(i)}^{1}, I^1_i(\x,\bhi) < t_{(i)}^{2}-t_{(i)}^{1}\\
        \vdots  & \vdots \\
        t_{(i)}^{n-1}+ I^{n-1}_i(\x,\bhi) & \textrm{if }I^0_i(\x,\bhi) \ge t_{(i)}^{1},\ \ldots,\ I^{n-2}_i(\x,\bhi) \ge t_{(i)}^{n - 1}-t_{(i)}^{n - 2}    
      \end{cases} \\[0.7em] 
      = \sum_{j=0}^{n-1} \left[\left( t_{(i)}^{j}+I_i^j(\x,\bhi) \right) \indicator{I_i^{j}(\x,\bhi) < t_{(i)}^{j+1}-t_{(i)}^{j}} \prod_{k=0}^{j-1} \indicator{ I_i^k(\x,\bhi) \ge t_{(i)}^{k+1}-t_{(i)}^{k}} \right]
    \label{eq:Tdecomposition}
\end{multline}
We emphasize that the decomposition in \eqref{eq:Tdecomposition} is purely notational and places no restrictions on the joint distribution of infection times in the cluster. 
Rather, the decomposition permits specification of causal assumptions for the latent times $I_i^k(\x,\bhi)$ that permit identification of intervention effects on potential infection outcomes.  
Because $\bhi$ is a deterministic infection history, stochasticity in $T_i(\x,\bhi)$ is entirely due to the random waiting times $I_i^k(\bhi,\x)$. 
Figure \ref{fig:Tdefinition} illustrates the decomposition of $T_i(\x,\bhi)$ in terms of the waiting times $I_i^k(\x,\bhi)$ in \eqref{eq:Tdecomposition}.

\begin{figure}
  \centering
  \includegraphics[width=\textwidth]{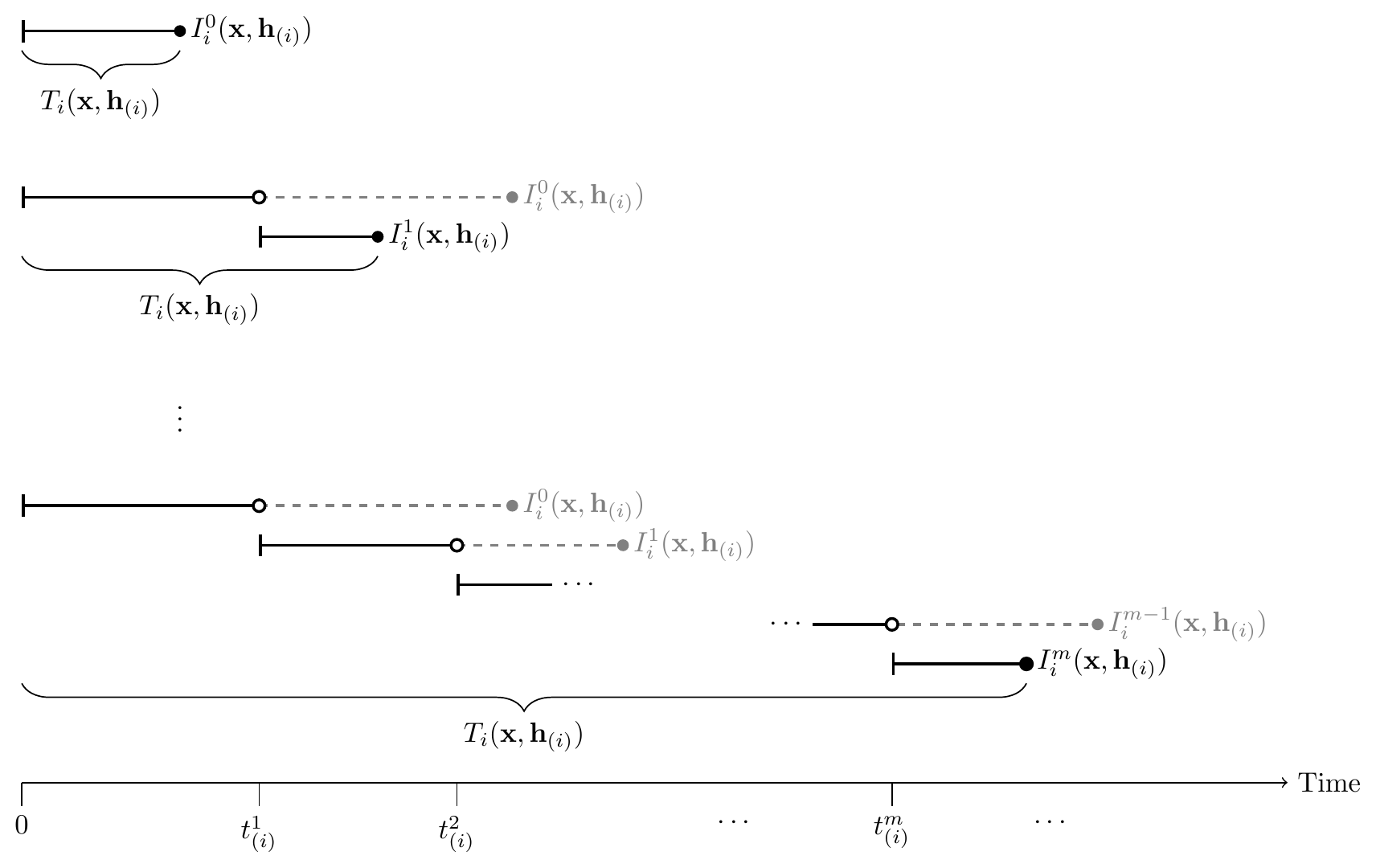}
  \caption{Illustration of the decomposition of the infection time $T_i(\x,\bhi)$ of $i$ into a sequence of waiting times according to \eqref{eq:Tdecomposition} given infection history $\bhi$.  
    If $I_i^0(\x,\bhi)$ occurs before $t_{(i)}^1$, then $T_i(\x,\bhi)=I_i^0(\x,\bhi)$ is observed. 
    If $I_i^0(\x,\bhi) \ge t_{(i)}^1$, then $I_i^0(\x,\bhi)$ is censored, and the waiting time $I_i^1(\x,\bhi)$ elapses following $t_{(i)}^1$. 
    If $I_i^1(\x,\bhi)<t_{(i)}^2$, then $T_i(\x,\bhi)=t_{(i)}^1+I_i^1(\x,\bhi)$ is observed.
 	If $I_i^1(\x,\bhi) \ge t_{(i)}^2$, then $I_i^1(\x,\bhi)$ is censored, and the waiting time $I_i^2(\x,\bhi)$ elapses following $t_{(i)}^2$.
    Likewise, if $I_i^k(\x,\bhi)$ is censored for every $k<n-1$, then $T_i(\x,\bhi)= t_{(i)}^{n-1} + I_i^{n-1}(\x,\bhi)$ is observed following $t_{(i)}^{n-1}$.}
  \label{fig:Tdefinition}
\end{figure}

If our interest is in evaluating potential infection outcomes as a function of treatments alone, we may view the history $\bhi$ as a nuisance variable, to be averaged out.
If we treat $\bhi$ as a realization of the random infection times $\bHi$ of others, what distribution should it have?  
In non-interventional settings, the distribution of $\bHi$ depends on $H_i$ (equivalently, $T_i$), because $i$ might infect others following their own infection. 
Therefore the marginal distribution of $\bHi$ is not of interest as a marginalizing distribution for the potential infection outcome of $i$, because it contains information about the outcome of $i$.  
Instead, we block this feedback by constructing an alternative marginalizing variable, denoted $\bHi^*$, whose distribution is invariant to the infection time of $i$.

Let $\bHi^*(t)$ be the random history of infection times in individuals other than $i$ up to time $t$ under the condition that $i$ cannot transmit infection to any other cluster member. 
Let $\bHi^*(t;\x_{(i)})$ be the corresponding potential infection history under this circumstance. 
The alternative history $\bHi^*(t;\x_{(i)})$ has a special connection to the observational infection history $\bH_{(i)}(\x)$: Before $i$ is infected, $i$ cannot transmit infection to others, so $\bHi(\x)$ is equivalent to $\bHi^*(\xminusi)$ in distribution. 
Once $i$ is infected and can transmit infection to others, this alters the distribution of $\bHi(\x)$ but not $\bHi^*(\xminusi)$, so they are no longer equal in distribution. 
Since we are evaluating the potential infection outcome of $i$, the difference in the evolution of the cluster-level infection after the infection of $i$ is inconsequential.
Therefore, $\bHi^*(\xminusi)$, the infection history in an otherwise identical group where $i$ cannot transmit infection, is equal in distribution to the observed $\bHi(\x)$ before the infection of $i$, and it imitates the observable exposure to infection for $i$ due to others' infections. 
We formalize this intuition in identification results presented in the next Section. 


\section{Identification of average potential infection outcomes}

\subsection{Assumptions}

By the end of the observation at time $t$, we observe $\big(\X, \bL, \bH(t)\big)$.  
We state the following assumptions for a generic individual $i$.
\begin{assumption}[Exclusion restriction for future infection times] 
  For $k = 1, \ldots, n - 1$, we have
  $I_i^0(\x,\bhi)=I_i^0 (x_i)$ and 
  $I_i^k(\x,\bhi)=I_i^k \big(x_i, \xminusi, \bhi(t_{(i)}^k)\big)$.
\label{as:exclusion}
\end{assumption}
Assumption~\ref{as:exclusion} states that the latent time to infection of $i$ following the $k$th infection time $t_{(i)}^k$ depends only on the treatments $\x$ assigned at baseline and others' infection history up to $t_{(i)}^k$, $\bhi(t_{(i)}^k)$.
It does not depend on the infection history after $t_{(i)}^k$.
\begin{assumption}[Independence of latent times to infection]
  For a given history $\bhi(t_{(i)}^k)$ up to the $k$th infection and each $j = \varphi_i^{k + 1}, \ldots, \varphi_i^{n - 1}$ infected after time $t_{(i)}^k$,
  \begin{equation*}
    I_i^k(x_i, \xminusi, \bhi (t_{(i)}^k)) 
    \indep I_j^{k}(x_j, \xminusj, \bhi (t_{(i)}^k)) \mid \bL.
  \end{equation*}
\label{as:independence}
\end{assumption}
Assumption~\ref{as:independence} states that under a given infection history up to the $k^{\text{th}}$ infection, the latent time to infection for individual $i$ is conditionally independent of the latent time to infection for any other uninfected individual $j$ given $\bL$.  In other words, the competing (simultaneously elapsing) infection waiting times $I_i^k(x_i, \xminusi, \bhi (t_{(i)}^k))$ and $I_j^{k}(x_j, \xminusj, \bhi (t_{(i)}^k))$ are conditionally independent. 
\begin{assumption}[Previous exposure exchangeability]
  For all $t > 0$ and each $k = 0, \ldots, n - 1$,
\begin{equation*}
    I_i^k(x_i,\xminusi,\bhi(t_{(i)}^k)) 
    \indep \bHi^{*}(t_{(i)}^k; \xminusi) \mid \bL.
\end{equation*}

\label{as:prevexchangeability}
\end{assumption}
Assumption~\ref{as:prevexchangeability} states that the latent time to infection $I_i^k\big(x_i,\xminusi,\bhi(t_{(i)}^k)\big)$ under the deterministic infection history $\bhi(t_{(i)}^k)$ up to $t_{(i)}^k$ is conditionally independent of the random infection history $\bHi^{*}(t_{(i)}^k; \xminusi)$ given $\bL$. 
If we interpret $\bHi^{*}(t;\xminusi)$ as the exposure of $i$ to infection up to time $t$, then this assumption corresponds to no unmeasured confounding for the effect of prior exposure to infection on the latent time to infection.
In this way, Assumption \ref{as:prevexchangeability} is a conventional treatment exchangeability assumption, where the ``treatment'' is prior exposure to infection. 
\begin{assumption}[Treatment exchangeability]
  For $k = 0, 1, \ldots, n - 1$,
  \begin{equation*}
    I_i^k(x_i, \xminusi, \bhi(t_{(i)}^k) \big) \indep \X \mid \bL.
  \end{equation*}
  \label{as:txexchangeability}
\end{assumption}
Assumption~\ref{as:txexchangeability} states that the potential latent time to infection $I_i^k\big(x_i,\xminusi,\bhi(t_{(i)}^k)\big)$ is conditionally independent of the joint treatment assignment $\X$ given $\bL$.
This corresponds to no unmeasured confounding of the effect of treatment on the latent time to infection.
\begin{assumption}[Consistency]
  When $X_i = x_i$, $\Xminusi = \xminusi$, $\bHi(t_{(i)}^k) = \bhi(t_{(i)}^k)$ and $T_i > t_{(i)}^k$,
  \begin{equation*}  
    I_i^k=I_i^k(x_i, \xminusi,\bhi(t_{(i)}^k)).
  \end{equation*}
\label{as:consistency}
\end{assumption}
Assumption~\ref{as:consistency} states that $I_i^k(x_i, \xminusi,\bhi(t_{(i)}^k))$ is the realized latent time to infection when the observed treatments and infection history up to $t_{(i)}^k$ match those specified and $i$ remains susceptible at $t_{(i)}^k$. 
\begin{assumption}[Positivity]
  $\Pr\big(I_i^k \le t \,\big|\, \X = \x, \bHi(t_{(i)}^k) = \bhi(t_{(i)}^k), \bL  = \bl\big)$ is positive and strictly increasing in $t$ and 
  \begin{equation*}
    \Pr\big(\X = \x, \bHi(t_{(i)}^k) = \bhi(t_{(i)}^k) \,\big|\, \bL = \bl\big) 
    \in (0, 1)
  \end{equation*}
  for $k = 0, \ldots, n - 1$ and all $\x$, $\bhi$, and $\bl$.
 
  \label{as:positivity}
\end{assumption}
Assumption~\ref{as:positivity} states that it is possible for the latent potential infection times to take any positive value and that any treatment allocation is possible.

A final assumption, which is only used to identify certain exposure-marginalized estimands, is given below. 

\begin{assumption}[Cross-world independence] 
  For $\xminusi \neq \xminusi'$ for $k=0,\ldots,n-1$,
  \begin{equation*}
    I_i^k(x_i, \xminusi, \bhi(t_{(i)}^k)) 
    \indep \bHi^{*}(t_{(i)}^k; \xminusi') \mid \bL
  \end{equation*}

  \label{as:crossworld}
\end{assumption}
Assumption \ref{as:crossworld} states that, given $\bL = \bl$, the potential latent time to infection $I_i^k(x_i, \xminusi, \bhi(t_{(i)}^k))$ is conditionally independent of the the infection history among individuals other than $i$ under treatments different from $\xminusi$.


\subsection{Identification of average potential infection outcomes}

We first seek to identify the distribution of the infection time $I_i^k(\x,\bhi)$ of $i$ following the $k$th infection among other cluster members. Let 
\begin{equation*}
\begin{split}
  F_{I_i^k}(s \,|\, \x,\bhi,\bl) 
  & = \Pr\big(I_i^k(x_i, \xminusi, \bhi) < s \,\big|\, 
  \bL = \bl\big) \\
  & = \Pr\big(I_i^k(x_i, \xminusi, \bhi(t_{(i)}^k)) < s \,\big|\, \bL = \bl\big)
  \end{split}
\end{equation*}
be the cumulative distribution function of $I_i^k(\x,\bhi)$ given $\bL=\bl$ for $k=1, \ldots, n-1$, and let 
\begin{equation*}
\begin{split}
  F_{I_i^0}(s|\x,\bhi,\bl) & = \Pr\big(I_i^0(\x,\bhi )<s \,\big|\, \bL_i = \bl_i) \\
  & = \Pr\big(I_i^0(x_i)<s \,\big|\, \bL_i = \bl_i)
\end{split}
\end{equation*}
be the cumulative distribution of $I_i^0(\x,\bhi)$ under $\bL=\bl$. 
The following result shows that these distributions are identified via observation of $\big(\X, \bL, \bH(t)\big)$.

\begin{lem}[Identification of latent time to infection under competing risks] 
  Suppose Assumptions \ref{as:exclusion}-\ref{as:positivity} hold and $\bhi$ is a deterministic infection history with ordered infection times $t_{(i)}^1,\ldots,t_{(i)}^{n - 1}$.  
  Then the distribution of $I_i^k(\x,\bhi)$ given $\bL=\bl$ is identified by 
  \begin{equation}
    F_{I_i^k}(s|\x,\bhi,\bl) = 1-\exp\left[ -\int_{t_{(i)}^k}^{t_{(i)}^k+s} \frac{f_i^k(u|\x,\bhi,\bl)}{S_i^k(u|\x,\bhi,\bl)} du \right]  \text{ for $k=0,\ldots,n-1$}\\
    \label{eq:lem2}
  \end{equation}
  where
  \begin{align*}
    S_i^0(u \,|\, \x,\bhi,\bl) 
    &= \Pr\big(T_i > u,T_{\varphi_i^{1}} > u, \ldots, T_{\varphi_i^{n-1}} > u 
      \,\big|\, \X = \x,\bL = \bl\big) \\
    S_i^k(u \,|\, \x, \bhi, \bl\big) 
    &= \Pr\big(T_i > u, T_{\varphi_i^{k+1}} > u, \ldots, T_{\varphi_i^{n-1}} > u \,\big|\, T_{\varphi_i^1}=t_{(i)}^1, \ldots, T_{\varphi_i^{k}}=t_{(i)}^{k},\X=\x,\bL=\bl\big)
  \end{align*}
  and
  \begin{align*}
    f_i^0(u \,|\, \x, \bhi, \bl) 
    &= p\big(T_i=u,T_{\varphi_i^{1}}>u,\ldots,T_{\varphi_i^{n-1}}>u
      \,\big|\, \X=\x,\bL=\bl\big) \\
    f_i^k(u \,|\, \x, \bhi,\bl) 
    &= p\big(T_i = u, T_{\varphi_i^{k+1}} > u, \ldots, T_{\varphi_i^{n-1}} > u
      \,\big|\, T_{\varphi_i^1} = t_{(i)}^1, \ldots,T_{\varphi_i^{k}} = t_{(i)}^{k}, \X=\x, \bL=\bl) 
  \end{align*}
  for any fixed values of $\x$, $\bhi$ and $\bl$.
  \label{Iidentification}
\end{lem}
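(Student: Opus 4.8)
The plan is to recognize the identification of $I_i^k(\x,\bhi)$ as a competing-risks problem and to show that the observable cause-specific hazard for the infection of $i$ equals the marginal hazard of the latent potential waiting time $I_i^k$; the claimed exponential formula \eqref{eq:lem2} then follows from the standard hazard-to-survival relationship.

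First I would fix $k$ and set up the risk set. Immediately after the $k$th infection among the others (at time $t_{(i)}^k$), the still-susceptible members are $i$ together with $\varphi_i^{k+1},\ldots,\varphi_i^{n-1}$, and each carries a latent residual waiting time measured from $t_{(i)}^k$: for $i$ this is $I_i^k(x_i,\xminusi,\bhi(t_{(i)}^k))$, and for the others these are the $I_j^k$ appearing in Assumption~\ref{as:independence}. Writing the hazard of $I_i^k$ given $\bL=\bl$ as $\lambda_i^k(v)$ for $v\ge 0$ measured from $t_{(i)}^k$, the continuity and strict monotonicity guaranteed by Assumption~\ref{as:positivity} let me write $F_{I_i^k}(s\mid\x,\bhi,\bl)=1-\exp\!\big(-\int_0^s \lambda_i^k(v)\,dv\big)$; the change of variables $u=t_{(i)}^k+v$ then produces the integral limits in \eqref{eq:lem2}. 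So the entire task reduces to identifying $\lambda_i^k$ from the observed $(\X,\bL,\bH(t))$.

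Next I would connect $\lambda_i^k$ to the data. By definition of $S_i^k$ and $f_i^k$, the ratio $f_i^k/S_i^k$ is the cause-specific hazard for $i$: the instantaneous rate at which $i$, and no one else in the risk set, is infected at $u$, given that all risk-set members survive to $u$, that the first $k$ infections among others occurred exactly at $t_{(i)}^1,\ldots,t_{(i)}^k$, and that $\X=\x$, $\bL=\bl$. I would translate this into the potential-outcome scale in three steps. Consistency (Assumption~\ref{as:consistency}) identifies the observed residual infection time of $i$ on the event $\{T_i>t_{(i)}^k,\ \bHi(t_{(i)}^k)=\bhi(t_{(i)}^k),\ \X=\x\}$ with the latent $I_i^k$, and positivity (Assumption~\ref{as:positivity}) ensures this conditioning event has positive probability. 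Treatment exchangeability (Assumption~\ref{as:txexchangeability}) removes any dependence of the latent-time distribution on conditioning upon $\X=\x$ beyond $\bL$, while previous-exposure exchangeability (Assumption~\ref{as:prevexchangeability}) justifies conditioning on the realized prior infection times $t_{(i)}^1,\ldots,t_{(i)}^k$, which before $i$'s infection coincide in distribution with the exposure history $\bHi^{*}(t_{(i)}^k;\xminusi)$, without inducing selection on the outcome of $i$. The exclusion restriction (Assumption~\ref{as:exclusion}) guarantees that $I_i^k$ depends only on $(x_i,\xminusi,\bhi(t_{(i)}^k))$, so the part of $\bhi$ after $t_{(i)}^k$ is irrelevant.

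The crux is to equate this cause-specific hazard with the marginal hazard $\lambda_i^k$, and this is exactly where Assumption~\ref{as:independence} enters. Expanding $f_i^k/S_i^k$ as the $\Delta\to 0$ limit of $p\big(u\le T_i<u+\Delta,\ T_{\varphi_i^{k+1}}>u,\ldots,T_{\varphi_i^{n-1}}>u\mid \text{all survive }u,\ \ldots\big)/\Delta$ and rewriting the survival of the competing members through their latent times $I_j^k$, the conditional independence of $I_i^k$ from the competing latent times given $\bL$ makes the competing-survival factor cancel between the numerator and the conditioning set, leaving precisely the marginal hazard of $I_i^k$ given $\bL=\bl$. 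This is the familiar fact that independent competing risks have cause-specific hazard equal to net hazard, and I expect it to be the main obstacle: the bookkeeping must show the competing-survival factor cancels cleanly, rather than leaving a residual dependence, which requires invoking the independence of $I_i^k$ from the competing times at each instant $u$ and using that the risk set is fixed between consecutive infection times. Finally, the base case $k=0$ is identical but simpler: the risk set is the whole cluster from time $0$, no prior infections are conditioned upon, and the exclusion restriction gives $I_i^0(\x,\bhi)=I_i^0(x_i)$ so only $\bL_i=\bl_i$ is needed. Substituting the identified $\lambda_i^k$ into the exponential completes the proof.
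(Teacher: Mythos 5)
Your proposal is correct and follows essentially the same route as the paper's proof: both rest on the hazard-to-survival identity (justified by positivity) combined with the independent-competing-risks fact that the cause-specific hazard $f_i^k/S_i^k$ equals the net hazard of the latent waiting time $I_i^k$, with Assumptions~\ref{as:exclusion}--\ref{as:consistency} invoked in the same roles to pass between latent and observed quantities. The only difference is direction of traversal---the paper starts from the latent marginal hazard and multiplies numerator and denominator by the competitors' survival functions to reach the observable ratio, while you start from the observable cause-specific hazard and cancel that factor---which is the same argument read in reverse.
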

For $k = n - 1$, \eqref{eq:lem2} simplifies to
\begin{equation*}
  F_{I_i^{n-1}}(s \,\big|\, \x,\bhi,\bl) 
  = \Pr\big(T_i \le t_{(i)}^{n-1} + s \,\big|\, T_{\varphi_i^1}=t_{(i)}^1, \ldots, T_{\varphi_i^{n-1}} = t_{(i)}^{n-1}, T_i > t_{(i)}^{n-1}, \X = \x,\bL = \bl\big). 
\end{equation*}
Lemma \ref{Iidentification} can be understood as a standard competing risks identification result: after the $i^\text{th}$ infection at $t_{(i)}^k$, the infection of $i$ and the infections of the other still-uninfected individuals serve as competing events; in order to observe $I_i^k$ and estimate $F_{I_i^k}$, $i$ must be infected first among the individuals in the cluster who remain uninfected. 

In the following main result, we show that the average potential outcome $\E[Y_i(t;\bhi,\x)]$ under a particular deterministic exposure history $\bhi$ and treatment allocation $\x$ can be identified nonparametrically. 

\begin{thm}[Identification of exposure-and-treatment-controlled average potential infection outcomes]
  Suppose Assumptions \ref{as:exclusion}--\ref{as:positivity} hold. 
  Then 
  \[
    \E\big[Y_i(t; \bhi, \x) \,\big|\, \bL = \bl\big] = \\
    \sum_{j=0}^{n-1} \bigg[F_{I_i^j}(\min\{t,t_{(i)}^{j+1}\} - t_{(i)}^{j} \,|\, \x, \bhi, \bl\big) 
    \prod_{k=0}^{j-1} \big(1 - F_{I_i^k}(t_{(i)}^{k+1} - t_{(i)}^{k} \,|\, \x,\bhi, \bl)\big) \bigg]\\
  \]
  where $F_{I^j_i}(s\,|\, \x, \bhi,\bl)$ is given by Lemma \ref{Iidentification}, $F_{I^j_i}(s \,|\, \x, \bhi, \bl) = 0$ when $s \le 0$, and the product term is one when $j=0$.
  \label{mainidentification}
\end{thm}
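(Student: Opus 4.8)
The plan is to begin from $Y_i(t;\bhi,\x)=\indicator{T_i(\x,\bhi)\le t}$ and substitute the pathwise decomposition \eqref{eq:Tdecomposition} of $T_i(\x,\bhi)$, so that the potential outcome is expressed entirely through the latent waiting times $I_i^0,\ldots,I_i^{n-1}$. Write $b_k=t_{(i)}^{k+1}-t_{(i)}^{k}$ for the length of the $k$th inter-infection segment and $a_j=\min\{t,t_{(i)}^{j+1}\}-t_{(i)}^{j}$. Individual $i$ is infected within segment $(t_{(i)}^{j},t_{(i)}^{j+1}]$ exactly when it survives every earlier segment, $I_i^k\ge b_k$ for $k<j$, and then fails inside segment $j$, $I_i^j<b_j$; intersecting with $\{T_i\le t\}$ caps the failure at $a_j$. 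This gives
\[
  Y_i(t;\bhi,\x)=\sum_{j=0}^{n-1}\indicator{I_i^j(\x,\bhi)<a_j}\prod_{k=0}^{j-1}\indicator{I_i^k(\x,\bhi)\ge b_k},
\]
a sum of disjoint events whose union is $\{T_i(\x,\bhi)\le t\}$. I would invoke the strict monotonicity in Assumption \ref{as:positivity} to ensure the latent distributions are continuous, so that boundary ties such as $I_i^k=b_k$ are null and the replacement of strict by non-strict inequalities is immaterial.

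Taking the conditional expectation given $\bL=\bl$ and exchanging it with the finite sum by linearity reduces the claim to the per-segment identity
\[
  \E\!\left[\indicator{I_i^j<a_j}\prod_{k=0}^{j-1}\indicator{I_i^k\ge b_k}\,\middle|\,\bL=\bl\right]
  =F_{I_i^j}(a_j\mid\x,\bhi,\bl)\prod_{k=0}^{j-1}\bigl(1-F_{I_i^k}(b_k\mid\x,\bhi,\bl)\bigr),
\]
for each $j$, where the marginal factors $F_{I_i^k}$ are exactly the distributions identified in Lemma \ref{Iidentification}. The content here is that the joint probability of surviving segments $0,\ldots,j-1$ and then being infected within segment $j$ factors into the product of per-segment probabilities.

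This factorization is the heart of the argument and the step I expect to be the main obstacle. I would justify it by establishing that the latent clocks $I_i^0,\ldots,I_i^{n-1}$ are mutually conditionally independent given $\bL$, so that the joint probability of the segment-survival events and the segment-$j$ failure event splits into the product of their marginals; the exclusion restriction of Assumption \ref{as:exclusion} (each $I_i^k$ depends only on the history through $t_{(i)}^{k}$) together with the conditional-independence structure underlying Lemma \ref{Iidentification} is what makes these clocks separately governed rather than deterministically linked through a single survival time. An equivalent route, which I find cleaner, is to use the exponential form of Lemma \ref{Iidentification}: writing $1-F_{I_i^k}(b_k)=\exp\!\bigl(-\int_{t_{(i)}^{k}}^{t_{(i)}^{k+1}}f_i^k/S_i^k\bigr)$, the cause-specific hazards concatenate across segments into a single piecewise infection hazard $\lambda_i$ of $i$ under the deterministic history $\bhi$, so that $\prod_{k<j}(1-F_{I_i^k}(b_k))=\exp(-\int_0^{t_{(i)}^{j}}\lambda_i)$ equals the survival probability $\Pr(T_i(\x,\bhi)>t_{(i)}^{j}\mid\bL=\bl)$ and the $j$th summand equals $\Pr(t_{(i)}^{j}<T_i(\x,\bhi)\le\min\{t,t_{(i)}^{j+1}\}\mid\bL=\bl)$. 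Summing over $j$ then telescopes to $\Pr(T_i(\x,\bhi)\le t\mid\bL=\bl)=\E[Y_i(t;\bhi,\x)\mid\bL=\bl]$, which is the assertion. Finally I would record the boundary bookkeeping: for $j$ with $t_{(i)}^{j}\ge t$ we have $a_j\le0$, so $F_{I_i^j}(a_j)=0$ under the stated convention and those terms drop out, while the empty product at $j=0$ equals one.
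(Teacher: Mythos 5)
Your proof follows the same skeleton as the paper's: the identical pathwise decomposition of $\{T_i(\x,\bhi)\le t\}$ into disjoint segment events, the same reduction to a per-segment identity, and the same assembly from the marginals of Lemma~\ref{Iidentification}. The genuine gap is in the one step you yourself flag as the heart of the argument: the factorization (in your notation $a_j=\min\{t,t_{(i)}^{j+1}\}-t_{(i)}^{j}$, $b_k=t_{(i)}^{k+1}-t_{(i)}^{k}$)
$\Pr\bigl(I_i^j<a_j,\ I_i^k\ge b_k\text{ for }k<j\mid\bL=\bl\bigr)=F_{I_i^j}(a_j\mid\x,\bhi,\bl)\prod_{k<j}\bigl(1-F_{I_i^k}(b_k\mid\x,\bhi,\bl)\bigr)$.
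Neither ingredient you cite can deliver it. Assumption~\ref{as:exclusion} is a functional exclusion restriction---it constrains which arguments $I_i^k$ may depend on---and carries no independence content. The ``conditional-independence structure underlying Lemma~\ref{Iidentification}'' is Assumption~\ref{as:independence}, which is independence of the competing clocks of \emph{different individuals} within the same segment ($I_i^k\indep I_j^k$ for $j\neq i$); it says nothing about one individual's clocks $I_i^0,\ldots,I_i^{n-1}$ across segments, which is exactly what a product-of-marginals step needs. Your ``cleaner'' alternative route is circular: the claim that $\prod_{k<j}\bigl(1-F_{I_i^k}(b_k)\bigr)$ equals $\Pr\bigl(T_i(\x,\bhi)>t_{(i)}^{j}\mid\bL=\bl\bigr)$ \emph{is} the factorization to be proven; concatenating the cause-specific hazards and exponentiating shows only that the product of marginal survivals has an exponential form, not that this product equals the joint survival probability of possibly dependent clocks.

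The paper closes this step by a route your argument never enters: it passes through the observational world. Using Assumptions~\ref{as:prevexchangeability} and~\ref{as:txexchangeability} it first writes $\E[Y_i(t;\x,\bhi)\mid\bL=\bl]=\E[Y_i(t;\x,\bhi)\mid\bHi^*=\bhi,\X=\x,\bL=\bl]$; it then applies the law of total probability and the chain rule---identities requiring no independence---so the joint segment event becomes a product of sequential \emph{conditional} probabilities; Assumption~\ref{as:exclusion} together with the distributional equivalence of $\bHi(t_{(i)}^j)$ and $\bHi^*(t_{(i)}^j)$ given $T_i>t_{(i)}^j$ truncates each conditioning history; Assumption~\ref{as:independence} is invoked to drop the earlier survival events from each factor; and Assumption~\ref{as:consistency} converts the potential waiting times into the observed ones identified by Lemma~\ref{Iidentification}. (Whether Assumption~\ref{as:independence} literally licenses that decoupling is debatable, but that is the paper's explicit route, and it is applied to observational conditional probabilities after consistency, not asserted as mutual independence of the potential-outcome clocks.) Symptomatically, your proof uses Assumptions~\ref{as:prevexchangeability}--\ref{as:consistency} nowhere outside Lemma~\ref{Iidentification}, while the paper's proof of this theorem needs all of them again. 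To repair your argument, either reproduce the paper's sequential-conditioning derivation, or add cross-segment conditional independence of $(I_i^0,\ldots,I_i^{n-1})$ given $\bL$ as an explicit hypothesis and acknowledge that it is not among Assumptions~\ref{as:exclusion}--\ref{as:positivity}.
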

A simplified version of Theorem \ref{mainidentification}, derived under a two-person partnership model, is given in Theorem 1 of \citet{cai2021identification}, which we restate here.  
\begin{cor}[Identification of exposure-and-treatment-controlled average potential infection outcomes in a two-person partnership]
Let $n=2$ and label the individuals $i$ and $j$. Let $\bh_{(i)} = \{ t_{(i)}^1=t_j \}$ where $t_j$ is the infection time of $j$ in Theorem \ref{mainidentification}, and suppose Assumptions \ref{as:exclusion}--\ref{as:positivity} hold.  Then for individual $i$, if $t > t_j$, we have
\[ \E[Y_i(t;\bh_{(i)},\x)|\bL=\bl] = F_{I_i^0}(t_j|x_i,\bl_i) + \big( 1- F_{I_i^0}(t_j |x_i,\bl_i) \big) \Pr [T_i < t|T_i \ge t_j, T_j=t_j,\X=\x,\bL=\bl]  \]
and if $t \le t_j$, we have $\E[Y_i(t;\bh_{(i)},\x)|\bL=\bl] =F_{I_i^0}(t|x_i,\bl_i)$. 
\label{cor:twoperson}
\end{cor}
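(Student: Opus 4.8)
The plan is to specialize Theorem \ref{mainidentification} to $n = 2$ and read off the two cases directly. First I would fix the indexing: with $n = 2$ there is a single other individual, $j$, so the ordered history $\bhi$ consists of the one infection time $t_{(i)}^1 = t_j$, and I would invoke the boundary conventions $t_{(i)}^0 = 0$ and $t_{(i)}^{2} = \infty$ from Section 2. The sum in Theorem \ref{mainidentification} then collapses to exactly two summands, corresponding to the waiting times $I_i^0$ and $I_i^1$.

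Next I would evaluate each summand. The $I_i^0$ summand is $F_{I_i^0}(\min\{t, t_j\} \mid \x, \bhi, \bl)$, since $t_{(i)}^0 = 0$ and the empty product equals one; by the exclusion restriction in Assumption \ref{as:exclusion}, $I_i^0(\x,\bhi) = I_i^0(x_i)$, so this reduces to $F_{I_i^0}(\min\{t,t_j\} \mid x_i, \bl_i)$. The $I_i^1$ summand is $F_{I_i^1}(t - t_j \mid \x, \bhi, \bl)\,\big(1 - F_{I_i^0}(t_j \mid x_i, \bl_i)\big)$, using $t_{(i)}^2 = \infty$ so that $\min\{t,\infty\} = t$, together with the same simplification of $F_{I_i^0}$ in the product factor.

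Then I would split on the position of $t$ relative to $t_j$. When $t \le t_j$, the $I_i^0$ summand equals $F_{I_i^0}(t \mid x_i, \bl_i)$ and the $I_i^1$ summand vanishes, since its argument $t - t_j \le 0$ and $F_{I_i^1}(s \mid \cdot) = 0$ for $s \le 0$ by the convention stated in Theorem \ref{mainidentification}; this yields the claimed expression for $t \le t_j$. When $t > t_j$, the $I_i^0$ summand becomes $F_{I_i^0}(t_j \mid x_i, \bl_i)$, and I would rewrite the surviving factor $F_{I_i^1}(t - t_j \mid \x, \bhi, \bl)$ using the $k = n - 1 = 1$ simplification of Lemma \ref{Iidentification} displayed immediately after it, which identifies it with $\Pr[T_i < t \mid T_i \ge t_j, T_j = t_j, \X = \x, \bL = \bl]$. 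Collecting the two summands gives the claimed expression for $t > t_j$.

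This argument is essentially bookkeeping, so I do not anticipate a substantive obstacle; the points demanding care are the boundary conventions $t_{(i)}^0 = 0$ and $t_{(i)}^2 = \infty$ together with the zero-argument convention for $F_{I_i^1}$, which jointly steer the calculation into the correct case, and confirming that the $\ge$/$<$ boundary appearing in the conditional probability is consistent with the (continuous) survival-based identification in Lemma \ref{Iidentification}.
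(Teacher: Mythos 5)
Your proposal is correct and follows essentially the same route as the paper's own proof: specialize Theorem \ref{mainidentification} to $n=2$ so the sum collapses to the $I_i^0$ and $I_i^1$ terms, simplify $F_{I_i^0}$ via Assumption \ref{as:exclusion}, identify $F_{I_i^1}$ with the conditional probability via the $k=n-1$ simplification of Lemma \ref{Iidentification}, and split on $t\le t_j$ versus $t>t_j$. Your handling of the boundary conventions ($t_{(i)}^0=0$, $t_{(i)}^2=\infty$, and $F_{I_i^1}(s\mid\cdot)=0$ for $s\le 0$) and of the $\ge$/$<$ boundary under continuity is, if anything, slightly more explicit than the paper's.
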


\subsection{Identification of average exposure-marginalized potential infection outcomes}

\label{maringalizedoutcome}

Theorem \ref{mainidentification} identifies average potential infection outcomes under a particular deterministic infection history $\bhi$. 
But this history $\bhi$ may not be of specific interest; rather, we often wish to learn about potential infection outcomes as a function of treatments alone, by marginalizing over an appropriate distribution on infection histories.  
To this end, we show that the distribution of the potential infection history $\bHi^*(\xminusi)$ of individuals other than $i$ is identified.

Define the \emph{potential exposure distribution} $G_{(i)}^*(\bhi|\x_{(i)},\bl_{(i)})$ as the distribution of the potential infection history $\bHi^*(\xminusi)$ given $\bL_{(i)} = \bl_{(i)}$. 
For a particular deterministic infection history $\bh_{(i)}$, we denote $\bh_{(i,j)}$ as the elements of this history without the history of $j$. 
To identify the potential exposure distribution, we will need to represent the history of the group, absent $j$, where $i$ remains uninfected. 
Therefore, for a deterministic history $\bh_{(i)}$, let $\bh_{(j)}^i=(\bh_{(i,j)},h_i)$ where $h_i$ corresponds to $y_i(t)=0$ for all $t>0$, so that $i$ is uninfected.  As before, we order individuals by their infection times so that $t_{\varphi_i^k}=t_{(i)}^k$ for $k=1,\ldots,n-1$. 
Then individual $\varphi_i^1$ is infected first (via some exogenous source of infection) at time $t_{(i)}^1$, which means $I_{\varphi_i^1}^0=t_{(i)}^1$ and $I_{j}^0>t_{(i)}^1$ for $j=\varphi_i^2,\ldots,\varphi_i^{n-1}$. 
After the first infection, $\varphi_i^2$ wins the competition to be infected at time $t_{(i)}^2$, which means $I_{\varphi_i^2}^1 = t_{(i)}^2-t_{(i)}^1$ and $I_{\varphi_j}^1 > t_{(i)}^2 - t_{(i)}^1$ for $j  = \varphi_i^3, \ldots, \varphi_i^{n - 1}$. 
This continues until the last individual $\varphi_i^{n-1}$ gets infected at $t_{(i)}^{n-1}$ with the additional time to infection after the previous infection as  $I_{\varphi_i^{n-1}}^{n-2}=t_{(i)}^{n-1}-t_{(i)}^{n-2}$.  Let $f_{I_{\varphi_i^j}^{j-1}}(\cdot | \x,\bh^i_{(\varphi_i^j)},\bl)$ be the density of $I_{\varphi_i^j}^{j-1}$, which specifies the density of additional time to infection for the $j^\text{th}$ infected subject $\varphi_i^j$ after the previous $j-1$ infections, and let $S_{I_{\varphi_i^k}^{j-1}}(\cdot |\x,\bh^j_{(\varphi_i^k)},\bl)$ be the corresponding survival function of $I_{\varphi_i^j}^{j-1}$, identified by Lemma \ref{Iidentification}.  By Lemma \ref{Iidentification}, we can identify the distribution of $\bHi^*(\xminusi)$.

\begin{thm}[Identification of the potential exposure distribution] The density function $\mathrm{d}G_{(i)}^*(\bhi \,|\, \xminusi,\bl_{(i)})$ is identified by
\begin{equation*}
  \begin{split}
    \text{d}G_{(i)}^*(\bhi \,|\, \x_{(i)}, \bli) 
    &= \prod_{j=1}^{n-1} \bigg[f_{I_{\varphi_i^j}^{j-1}}\big( t_{(i)}^j-t_{(i)}^{j-1} \,\big|\, \x,\bh^i_{(\varphi_i^j)},\bl \big) \prod_{k=j+1}^{n-1} S_{I_{\varphi_i^k}^{j-1}}\big( t_{(i)}^j-t_{(i)}^{j-1} \,\big|\, \x,\bh^j_{(\varphi_i^k)},\bl\big)\bigg] \\
    \end{split}
  \end{equation*}
  where the ordered infection times $t_{(i)}^1,\ldots,t_{(i)}^{n-1}$ and the histories $\bh^i_{(\varphi_i^j)}$ are specified by the infection history $\bhi$. 
  \label{thm:exposure}
\end{thm}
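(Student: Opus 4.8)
The plan is to read $\text{d}G_{(i)}^*(\bhi \mid \xminusi, \bli)$ as the likelihood of a single realized trajectory of the competing-risks process that governs the infections of the $n-1$ individuals other than $i$ in the world where $i$ cannot transmit, and to factor that likelihood over successive inter-event intervals, identifying each factor through Lemma \ref{Iidentification}. The deterministic history $\bhi$ fixes the ordered infection times $t_{(i)}^1 < \cdots < t_{(i)}^{n-1}$ and the labels $\varphi_i^1, \ldots, \varphi_i^{n-1}$ of who is infected when, with $i$ held uninfected throughout (the role of the sub-histories $\bh^i_{(\cdot)}$).

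First I would reuse the waiting-time decomposition of the Setting section, but applied to the other individuals rather than to $i$. On the interval $[t_{(i)}^{j-1}, t_{(i)}^j)$, every individual still susceptible at $t_{(i)}^{j-1}$---namely $\varphi_i^j, \ldots, \varphi_i^{n-1}$---is running its latent waiting time $I^{j-1}_{\varphi_i^k}$ measured from $t_{(i)}^{j-1}$. The event recorded in $\bhi$ is that $\varphi_i^j$ is the one to complete this waiting time, at $t_{(i)}^j - t_{(i)}^{j-1}$, while every later individual $\varphi_i^k$ with $k > j$ has $I^{j-1}_{\varphi_i^k} > t_{(i)}^j - t_{(i)}^{j-1}$. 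The density of the full trajectory is therefore the product over $j = 1, \ldots, n-1$ of the joint density of these competing events on the $j$-th interval.

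Next I would factor each interval's contribution. Conditioning on $\bL = \bl$ and on the realized history through $t_{(i)}^{j-1}$, Assumption \ref{as:independence} renders the competing latent times $\{I^{j-1}_{\varphi_i^k}\}_{k \ge j}$ mutually independent, so the joint density of ``$\varphi_i^j$ infected at $t_{(i)}^j$ and all later individuals still susceptible'' factors into the density $f_{I_{\varphi_i^j}^{j-1}}(t_{(i)}^j - t_{(i)}^{j-1} \mid \x, \bh^i_{(\varphi_i^j)}, \bl)$ for the individual who is infected, times the product of survival functions $S_{I_{\varphi_i^k}^{j-1}}(t_{(i)}^j - t_{(i)}^{j-1} \mid \cdots)$ over the $k > j$ who are not. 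Multiplying across $j$ gives exactly the claimed double product.

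The step I expect to be the main obstacle is justifying that the density and survival factors in this $i$-cannot-transmit world are precisely those that Lemma \ref{Iidentification} identifies from the observational law of $(\X, \bL, \bH(t))$. These latent-time distributions are defined counterfactually, so the bridge is the equivalence-in-distribution argument of the Setting section: because $i$ is held uninfected in every sub-history $\bh^i_{(\cdot)}$ and influences others only through transmission occurring \emph{after} its own infection, the trajectory of the remaining $n-1$ individuals under ``$i$ cannot transmit'' coincides in distribution with the observational trajectory in which $i$ is not yet infected. Concretely, I would invoke the exclusion restriction (Assumption \ref{as:exclusion}) to write each $I^{j-1}_{\varphi_i^k}$ as a function of baseline treatments and the pre-$t_{(i)}^j$ history only, and then use previous-exposure exchangeability (Assumption \ref{as:prevexchangeability}) with consistency (Assumption \ref{as:consistency}) to replace each counterfactual waiting-time law by its observational counterpart, which Lemma \ref{Iidentification} identifies. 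The remaining care is purely bookkeeping---matching each factor to the deterministic sub-history it must condition on ($\bh^i_{(\varphi_i^j)}$ for the infected individual and the corresponding survival sub-histories for the survivors) so that the arguments demanded by Lemma \ref{Iidentification} are supplied correctly.
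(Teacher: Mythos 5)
Your proposal is correct and follows essentially the same route as the paper's proof: both rewrite $\mathrm{d}G_{(i)}^*(\bhi \,|\, \xminusi,\bli)$ as the joint law of the others' infections using the waiting-time decomposition \eqref{eq:Tdecomposition} applied with $i$ held uninfected (the histories $\bh^i_{(\cdot)}$, justified by Assumption \ref{as:exclusion}), factor the resulting competing-risks event into the winner's density times the survivors' survival functions via Assumption \ref{as:independence}, and delegate identification of each factor from the observational law to Lemma \ref{Iidentification}, whose internal use of Assumptions \ref{as:prevexchangeability} and \ref{as:consistency} supplies exactly the counterfactual-to-observational bridge you flag as the main obstacle.
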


Using the expected potential infection outcome identified by Theorem \ref{thm:exposure}, we can describe how to integrate, marginalize, or standardize these conditional estimands so that they do not depend on any particular value of exposure history $\bh_{(i)}$. 

\begin{prop}[Exposure-marginalized average potential infection outcomes]
  Under Assumptions \ref{as:consistency}--\ref{as:positivity}, the exposure-marginalized average potential infection outcome under $\X=\x$ is
  \begin{equation}
    \E\big[Y_i\big{(}t;x_i,\xminusi,\bHi^*(\xminusi)\big{)} \,\big|\, \bL=\bl\big] 
    = \int \E[Y_i(t;x_i,\x_{(i)},\bhi) \,|\, \bL=\bl]\,\mathrm{d}G_{(i)}^*(\bhi \,|\, \xminusi, \bl_{(i)}) 
    \label{eq:yitx} 
  \end{equation}
  where $\E[Y_i(t;x_i,\x_{(i)},\bhi) \,|\, \bL=\bl]$ is given by Theorem \ref{mainidentification} and $\mathrm{d}G_{(i)}^*(\bhi|\xminusi, \bl_{(i)})$ is given by Theorem \ref{thm:exposure}. 
  If Assumption \ref{as:crossworld} also holds and $\xminusi \neq \xminusi'$ then the exposure-marginalized average potential infection outcome under $\bHi^*$, hold to be realized under $\x_{(i)}' \neq \x_{(i)}$, and $\X=\x$ is 
  \begin{equation}
    \E\big[Y_i\big(t;x_i, \x_{(i)}, \bHi^*(\x_{(i)}')\big) \,\big|\, \bL=\bl] 
    = \int \E[Y_i(t; x_i, \x_{(i)}, \bhi) \,|\, \bL=\bl]\ \mathrm{d}G_{(i)}^*(\bhi \,|\, \x_{(i)}', \bl_{(i)}) 
    \label{eq:yitx2} 
  \end{equation}
  \label{prop:overexposure}
\end{prop}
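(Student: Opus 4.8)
The plan is to read Proposition \ref{prop:overexposure} as a G-computation identity that simply stitches together the two preceding results. Theorem \ref{mainidentification} supplies the inner conditional expectation $\E[Y_i(t; x_i, \xminusi, \bhi) \mid \bL = \bl]$ for a fixed deterministic history, and Theorem \ref{thm:exposure} supplies the marginalizing measure $\mathrm{d}G_{(i)}^*(\bhi \mid \xminusi, \bl_{(i)})$, which by the definition preceding that theorem is exactly the conditional law of $\bHi^*(\xminusi)$ given $\bL_{(i)} = \bl_{(i)}$. The only genuine content is to verify that averaging the former against the latter reproduces the nested quantity on the left of \eqref{eq:yitx}.

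First I would apply the tower property, conditioning on the random exposure history $\bHi^*(\xminusi)$, to write
\[
  \E[Y_i(t; x_i, \xminusi, \bHi^*(\xminusi)) \mid \bL = \bl] = \int \E[Y_i(t; x_i, \xminusi, \bhi) \mid \bHi^*(\xminusi) = \bhi, \bL = \bl]\, \mathrm{d}G_{(i)}^*(\bhi \mid \xminusi, \bl_{(i)}).
\]
Substituting the realized value $\bhi$ for the random third argument inside the conditional expectation is justified by measurability of the potential-outcome map $\bhi \mapsto Y_i(t; x_i, \xminusi, \bhi)$, and the conditional law of $\bHi^*(\xminusi)$ is $\mathrm{d}G_{(i)}^*$ by Theorem \ref{thm:exposure}. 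The crux is then to discharge the extra conditioning event, i.e.\ to establish
\[
  \E[Y_i(t; x_i, \xminusi, \bhi) \mid \bHi^*(\xminusi) = \bhi, \bL = \bl] = \E[Y_i(t; x_i, \xminusi, \bhi) \mid \bL = \bl],
\]
so that the integrand collapses to the Theorem \ref{mainidentification} expression. I would obtain this from Assumption \ref{as:prevexchangeability}: by the decomposition \eqref{eq:Tdecomposition}, once $\bhi$ is fixed the outcome $Y_i(t; x_i, \xminusi, \bhi)$ is a deterministic function of the latent waiting times $I_i^0, \ldots, I_i^{n-1}$ together with the interval lengths $t_{(i)}^{k+1} - t_{(i)}^k$ read off from $\bhi$, and by Assumption \ref{as:exclusion} each $I_i^k$ depends on $\bhi$ only through $\bhi(t_{(i)}^k)$. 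Assumption \ref{as:prevexchangeability} gives $I_i^k \indep \bHi^*(t_{(i)}^k; \xminusi) \mid \bL$, which is precisely what lets me replace the conditional distribution of $I_i^k$ given $\{\bHi^*(\xminusi) = \bhi, \bL = \bl\}$ by its distribution given $\bL = \bl$ alone; running the same stagewise recursion that underlies Theorem \ref{mainidentification} then yields the claimed formula.

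The identical template proves the cross-world identity \eqref{eq:yitx2}: the tower property produces the integral of $\E[Y_i(t; x_i, \xminusi, \bhi) \mid \bHi^*(\xminusi') = \bhi, \bL = \bl]$ against $\mathrm{d}G_{(i)}^*(\bhi \mid \xminusi', \bl_{(i)})$, and because the outcome is evaluated at treatment $\xminusi$ while the history is generated under the different treatment $\xminusi'$, it is Assumption \ref{as:crossworld}, rather than Assumption \ref{as:prevexchangeability}, that licenses dropping the conditioning.

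The main obstacle I anticipate is the second displayed step. The exchangeability assumptions are stated stagewise in $k$ and only against the history \emph{up to} $t_{(i)}^k$, whereas conditioning on the entire realized trajectory $\{\bHi^*(\xminusi) = \bhi\}$ also fixes the infections occurring after each $t_{(i)}^k$, in particular the next event time $t_{(i)}^{k+1}$ that determines whether $I_i^k$ is censored. I would close this gap by observing that the post-$t_{(i)}^k$ portion of $\bHi^*(\xminusi)$ is driven by the latent times of individuals other than $i$, which are conditionally independent of $I_i^k$ by Assumption \ref{as:independence}; combining Assumptions \ref{as:independence} and \ref{as:prevexchangeability} thus upgrades the stagewise, past-only independence into the joint independence of $(I_i^0, \ldots, I_i^{n-1})$ from the full history $\bHi^*(\xminusi)$ that the argument requires, and the analogous combination with Assumption \ref{as:crossworld} handles the cross-world case.
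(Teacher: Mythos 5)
Your proposal is correct and takes essentially the same route as the paper: the paper offers no separate proof, asserting only that the proposition ``follows directly'' from Assumption \ref{as:positivity} and Theorems \ref{mainidentification} and \ref{thm:exposure}, and the substance behind that assertion is precisely your tower-property decomposition followed by dropping the conditioning event, which the paper establishes for the same purpose as equation \eqref{eq:exchangeabilityforT} in the proof of Theorem \ref{mainidentification} (via Assumptions \ref{as:prevexchangeability} and \ref{as:txexchangeability}), with Assumption \ref{as:crossworld} playing the analogous role for the cross-world identity \eqref{eq:yitx2}. Your closing concern about upgrading the stagewise independence of each $I_i^k$ to joint independence of $T_i(\x,\bhi)$ from the full trajectory $\bHi^*(\xminusi)$ is a step the paper itself glosses over, so your explicit appeal to Assumption \ref{as:independence} makes the argument, if anything, more careful than the published one.
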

This result follows directly from Assumption \ref{as:positivity} and Theorems \ref{mainidentification} and \ref{thm:exposure}.  Note that when $\x_{(i)}\neq \x_{(i)}'$, \eqref{eq:yitx2} is a ``cross-world'' potential outcome because the integrand is evaluated under the treatment allocation $\X=(x_i,\xminusi)$, while the marginalization is with respect to infection histories under a different treatment allocation $\X_{(i)}=\xminusi'$.  

Theorems \ref{mainidentification} and \ref{thm:exposure} come together in Proposition \ref{prop:overexposure}, which can be interpreted as a special case of Robins' G-formula \citep{robins1986new}: it decomposes the average potential outcome into an integral over a fully conditional expectation, with respect to a modified distribution over prior outcomes, treatments, and covariates that avoids post-treatment confounding.  By integrating with respect to the modified history $\bHi^*(\xminusi)$ in the marginalization, we circumvent the dependence problem between the infection outcome of $i$ and the infection history of others. 
In fact, a stronger result shows that the exposure-marginalized average potential infection outcome, given by the integral in \eqref{eq:yitx}, is equal to the observed conditional expectation.

\begin{cor}[Equivalence of observed and potential infection outcomes]
 By Assumptions \ref{as:exclusion}, \ref{as:prevexchangeability}, \ref{as:txexchangeability} and \ref{as:consistency}, 
  \[ \E[Y_{i}(t) \,|\, \X=\x,\bL=\bl]= \E[Y_i \big(t;x_i,\xminusi,\bHi^*(\xminusi) \big) \,|\, \bL=\bl] \]
  \label{cor:obs}
\end{cor}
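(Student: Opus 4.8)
The plan is to establish the identity directly, at the level of observed and potential outcomes, rather than by substituting the identification formulas of Theorems~\ref{mainidentification} and~\ref{thm:exposure}; this is what lets the argument rely only on Assumptions~\ref{as:exclusion}, \ref{as:prevexchangeability}, \ref{as:txexchangeability}, and~\ref{as:consistency}, dispensing with independence and positivity. First I would use consistency to rewrite the observed side as a potential outcome evaluated at the \emph{observed} exposure history. Under $\X=\x$, Assumption~\ref{as:consistency} applied term-by-term along the waiting-time decomposition~\eqref{eq:Tdecomposition} identifies each realized latent time with its potential counterpart, so that $Y_i(t)=Y_i\big(t;x_i,\xminusi,\bHi(\x)\big)$ almost surely and therefore
\[
  \E[Y_i(t)\mid\X=\x,\bL=\bl]
  = \E\big[Y_i\big(t;x_i,\xminusi,\bHi(\x)\big)\,\big|\,\X=\x,\bL=\bl\big].
\]
Since the right-hand side of the corollary is, by definition, the same functional evaluated at the potential exposure history $\bHi^*(\xminusi)$ and conditioned on $\bL=\bl$ alone, it remains to show that these two expectations coincide.

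Next I would show that the map $\bhi\mapsto Y_i(t;x_i,\xminusi,\bhi)$ depends on the exposure history only through its behaviour before $i$'s own infection. From~\eqref{eq:Tdecomposition}, the realized infection time lands in some interval $[t_{(i)}^{k},t_{(i)}^{k+1})$ and equals $T_i=t_{(i)}^{k}+I_i^{k}$, a function of $t_{(i)}^{1},\dots,t_{(i)}^{k}$ and of the waiting times $I_i^{0},\dots,I_i^{k}$; by the exclusion restriction (Assumption~\ref{as:exclusion}) each such waiting time depends on $\bhi$ only through $\bhi(t_{(i)}^{m})$ with $t_{(i)}^{m}\le T_i$. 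The censoring bound $t_{(i)}^{k+1}$ does appear in the branch indicators, but on the event that $i$ is actually infected on branch $k$ it is forced to lie strictly after $T_i$, so it never introduces a genuine dependence on the part of the history that follows $i$'s infection. Hence both $T_i$ and $Y_i(t)$ are determined by the other members' infections up to $T_i$ together with $i$'s own latent times.

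I would then invoke the Section~2 observation that while $i$ is still uninfected, and therefore non-infectious, deleting $i$ as a potential source of transmission---the defining feature of the starred history---leaves the law of the other members' infection process unchanged. Formally, the observed history $\bHi(\x)$ and the potential history $\bHi^*(\xminusi)$ agree in distribution up to the infection of $i$ given $\bL=\bl$: treatment exchangeability (Assumption~\ref{as:txexchangeability}) removes the conditioning on $\X=\x$ that distinguishes the two sides, and previous-exposure exchangeability (Assumption~\ref{as:prevexchangeability}) guarantees that the waiting times $I_i^{0},\dots,I_i^{k}$ driving $T_i$ are unconfounded with this pre-infection exposure. Consequently the joint law of the infection time of $i$ together with the other members' history up to that time is the same in the two worlds, and because the preceding step shows $Y_i(t)$ to be a function of exactly this pair, the two expectations are equal, which is the desired identity.

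The main obstacle is making the ``agree up to the infection of $i$'' reduction rigorous, since the relevant horizon is the random, stopping-time-like quantity $T_i$ rather than a fixed time, so one cannot simply restrict both processes to a common deterministic window. The argument must instead be self-consistent: because the two processes share identical pre-infection dynamics, they produce the same $T_i$ and the same history up to $T_i$ in law, and it is only there that the comparison is made. The delicate bookkeeping is around the branch bound $t_{(i)}^{k+1}$ in~\eqref{eq:Tdecomposition}---confirming that, although it is the first of the other members' infections to occur after $i$ and hence the first point at which the observed and starred worlds genuinely diverge once $i$ becomes infectious, its role in the indicators never makes $Y_i(t)$ depend on post-$T_i$ events. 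This is precisely where the exclusion restriction and previous-exposure exchangeability must be combined.
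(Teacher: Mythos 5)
Your steps 1 and 2 are sound: consistency lets you rewrite the observed outcome as the potential-outcome functional evaluated at the observed history, and the exclusion restriction (Assumption~\ref{as:exclusion}) does imply, by the branch-by-branch argument you sketch, that $Y_i(t;\x,\bhi)$ depends on $\bhi$ only through its restriction to $[0,T_i]$. Your route is also genuinely different from the paper's, which argues in the reverse direction: it expands $\E[Y_i(t;x_i,\xminusi,\bHi^*(\xminusi))\,|\,\bL=\bl]$ as the integral \eqref{eq:yitx} of Proposition~\ref{prop:overexposure}, inserts the conditioning on $\{\bHi^*=\bhi,\X=\x\}$ via the exchangeability identity \eqref{eq:exchangeabilityforT} (Assumptions~\ref{as:prevexchangeability}--\ref{as:txexchangeability}), replaces $Y_i(t;\x,\bhi)$ by $Y_i(t)$ using Assumption~\ref{as:consistency}, and closes with iterated expectations over $\mathrm{d}G^*_{(i)}$.

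The problem is the centerpiece of your argument, step 3. The claim that the observed and starred worlds ``share identical pre-infection dynamics,'' so that the joint law of $T_i$ together with the others' history up to $T_i$ coincides in the two worlds, is asserted rather than proved, and the justification you offer cannot deliver it. Assumptions~\ref{as:prevexchangeability} and~\ref{as:txexchangeability} are statements about individual $i$'s latent waiting times $I_i^k(\cdot)$ only; they say nothing about the law of the \emph{other} members' infection process, which is what your claim concerns. And the ``Section 2 observation'' you invoke is precisely the informal intuition that Corollary~\ref{cor:obs} exists to formalize, so appealing to it is circular. What actually closes the gap is the formal construction of $\bHi^*(\xminusi)$ used in the proof of Theorem~\ref{thm:exposure}: $\bHi^*$ is generated by the other individuals' potential waiting times evaluated along histories in which $i$ is never infected. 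Applying Assumptions~\ref{as:exclusion} and~\ref{as:consistency} to those individuals $j\neq i$ (the assumptions are stated for a generic individual, so this is available under the corollary's hypotheses) yields the needed pathwise coupling: before $T_i$, the histories $\bHi(\x)$ and $\bHi^*(\xminusi)$ are the same functional of the same realized waiting times, since up to that point every $j\neq i$ has only ever been exposed to an uninfected $i$. Only once that coupling is in place do Assumptions~\ref{as:prevexchangeability} and~\ref{as:txexchangeability} perform the roles you assign them---factoring $i$'s waiting times from the marginalizing history, and removing the conditioning on $\X$, where the latter must also be applied to the waiting times of the $j\neq i$ that generate $\bHi^*$, not only to $i$'s. (To be fair, the paper's own consistency step quietly relies on the same coupling, since Assumption~\ref{as:consistency} is phrased in terms of $\bHi$ rather than $\bHi^*$; but your proof makes this coupling the entire argument, so it cannot be left as an assertion.) As written, ``identical pre-infection dynamics'' assumes the conclusion.
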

Proposition \ref{prop:overexposure} and Corollary \ref{cor:obs} provide two ways of identifying potential infection outcomes.  The key insight is that we can decompose the observational quantity $\E[Y_{i}(t) \,|\, \X=\x,\bL=\bl]$ into a integral of potential infection outcomes with respect to an infection history $\bHi^*(\xminusi)$ that omits the role of $i$ in infecting others.  The explicit decomposition in Proposition \ref{prop:overexposure} is especially useful in statistical settings when data are sparse: by estimating the integrand and marginalizing distribution, we can evaluate average potential infection outcomes under joint treatments or covariates that were not observed in the data.


\subsection{Treatment- and covariate-marginalized estimands}

Proposition \ref{prop:overexposure} identifies potential infection outcomes that are a function of the joint treatment vector $\X=\x$, conditional on covariates $\bL=\bl$. To represent potential outcomes as a function only of treatment assigned to the cluster, or of treatment assigned to a single cluster member, it is necessary to marginalize (integrate) with respect to $\X_{(i)}$ and/or $\bL$.  

To define exposure-and-treatment-marginalized average potential outcomes, we integrate the expected exposure-marginalized potential infection outcomes over a distribution for others' treatments conditional on covariates $\bL=\bl$. An additional assumption ensures that that the cluster-level covariates $\bL$ suffice to render $X_i$ and $\X_{(i)}$ conditionally independent.  

\begin{assumption}[Conditional independence of treatments]
  For $i = 1, \ldots, n$, $X_i \indep \X_{(i)} \mid \bL$. 
  \label{as:sufficientL}
\end{assumption}

In other words, Assumption \ref{as:sufficientL} makes $\Pr(\X_{(i)}=\x_{(i)}|\bL=\bl)$ invariant to the value of $X_i=x_i$.  Let $p(\x_{(i)}\,|\, \bL = \bl)$ be a distribution over others' treatments, conditional on cluster-level covariates.  When this is the distribution of others' treatments imposed by the data-generating process, then $p(\x_{(i)}\,|\, \bL = \bl) = \Pr(\X_{(i)}=\x_{(i)}|\bL=\bl)$.  Importantly, $p(\x_{(i)}\,|\, \bL = \bl)$ must be invariant to the treatment $X_i=x_i$ of $i$ \citep{vanderweele2011effect,crawford2019interpretation,savje2021average}.  If in addition Assumption \ref{as:sufficientL} holds, then we can identify the potential infection outcome of $i$ under treatment $x_i$, integrated with respect to the marginal distribution of treatment to others, given covariates.  Let $\mathcal{X}^{n-1}=\{0,1\}^{n-1}$ be the set of all binary vectors of length $n-1$. 

\begin{prop}[Exposure-and-treatment-marginalized average potential infection outcomes]
Under Assumption \ref{as:sufficientL}, 
  \begin{align*}
    \overline{Y}_i^p(t; x_i \,|\, \bL = \bl) 
    &= \sum_{x_{(i)}\in\mathcal{X}^{n-1}} \E\big[Y_i\big(t;x_i,\x_{(i)},\bHi^*(\x_{(i)})\big)\mid \bL=\bl\big] p(\x_{(i)}\,|\, \bL = \bl) \\
    \overline{Y}_i^p(t; x_i, \x_{(i)}' \,|\, \bL = \bl) 
    &= \sum_{x_{(i)}\in\mathcal{X}^{n-1}} \E\big[Y_i\big(t;x_i, \x_{(i)}, \bHi^*(\x_{(i)}')\big)\mid \bL=\bl\big] p(\x_{(i)}\,|\, \bL = \bl)
  \end{align*}
  \label{prop:overexposuretx}
\end{prop}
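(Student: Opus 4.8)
The plan is to read the two displayed identities as an identification claim for a single causal estimand: the average potential infection outcome of $i$ when $i$'s own treatment is held at $x_i$ while the treatments of the remaining cluster members are drawn from the marginalizing law $p(\cdot \mid \bL = \bl)$. Writing $\Xminusi$ for this random treatment vector, the target estimand is
\[
  \overline{Y}_i^p(t; x_i \mid \bL = \bl) = \E\big[Y_i\big(t; x_i, \Xminusi, \bHi^*(\Xminusi)\big) \mid \bL = \bl\big],
\]
and the goal is to reduce this to the weighted sum in the statement, term by term, using Proposition \ref{prop:overexposure} for each fixed value of $\Xminusi$.

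First I would apply the tower property, conditioning the inner expectation on the realized value $\Xminusi = \x_{(i)}$, which rewrites the target as $\sum_{\x_{(i)} \in \mathcal{X}^{n-1}} \E\big[Y_i(t; x_i, \x_{(i)}, \bHi^*(\x_{(i)})) \mid \Xminusi = \x_{(i)}, \bL = \bl\big]\, \Pr(\Xminusi = \x_{(i)} \mid \bL = \bl)$. Next I would remove the conditioning event $\Xminusi = \x_{(i)}$ from each summand by treatment exchangeability (Assumption \ref{as:txexchangeability}): each $Y_i(t; x_i, \x_{(i)}, \bHi^*(\x_{(i)}))$ is a deterministic function of the latent waiting times $I_i^k$ and the potential history $\bHi^*(\x_{(i)})$, all of which are conditionally independent of $\X$ given $\bL$, so the summand equals $\E\big[Y_i(t; x_i, \x_{(i)}, \bHi^*(\x_{(i)})) \mid \bL = \bl\big]$, the exposure-marginalized outcome identified in \eqref{eq:yitx} of Proposition \ref{prop:overexposure}.

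It then remains to identify the weights with $p$, and this is where Assumption \ref{as:sufficientL} carries the argument. Because $X_i \indep \Xminusi \mid \bL$, we have $\Pr(\Xminusi = \x_{(i)} \mid X_i = x_i, \bL = \bl) = \Pr(\Xminusi = \x_{(i)} \mid \bL = \bl) = p(\x_{(i)} \mid \bL = \bl)$, so the marginalizing weight is invariant to the value $x_i$ held for $i$. I expect this to be the one genuine obstacle, though it is conceptual rather than computational: one must argue that the marginalization is taken with respect to a distribution unaffected by the intervention $X_i = x_i$, for otherwise a contrast in $x_i$ would conflate $i$'s own treatment effect with a spurious dependence induced between $X_i$ and $\Xminusi$. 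Assumption \ref{as:sufficientL} is exactly the condition that decouples the two and makes $\overline{Y}_i^p(t; x_i \mid \bL = \bl)$ a coherent function of $x_i$ and $p$ alone.

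Finally, I would obtain the cross-world identity for $\overline{Y}_i^p(t; x_i, \x_{(i)}' \mid \bL = \bl)$ by the identical three steps applied to the target $\E\big[Y_i(t; x_i, \Xminusi, \bHi^*(\x_{(i)}')) \mid \bL = \bl\big]$, in which the direct-effect treatment $\Xminusi$ is random while the treatment $\x_{(i)}'$ governing the potential exposure history is held fixed. The only change is that each summand is now the cross-world quantity identified in \eqref{eq:yitx2}, which itself relies on Assumption \ref{as:crossworld}; the tower-property decomposition and the weight identification via Assumption \ref{as:sufficientL} proceed verbatim.
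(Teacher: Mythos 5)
Your proposal is correct and takes essentially the same approach as the paper: the paper's entire proof consists of writing $\overline{Y}_i^p(t; x_i \,|\, \bL = \bl)$ as the sum weighted by $p(\x_{(i)} \,|\, X_i = x_i, \bL = \bl)$ and then invoking Assumption \ref{as:sufficientL} to replace that weight with $p(\x_{(i)} \,|\, \bL = \bl)$, which is exactly your weight-identification step. Your additional scaffolding (the tower-property decomposition and the appeal to Assumption \ref{as:txexchangeability} to decondition the inner expectation) merely unpacks what the paper treats as definitional, so the two arguments coincide at their core.
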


In Proposition \ref{prop:overexposuretx} we have specified an arbitrary distribution $p(\x_{(i)}\,|\, \bL = \bl)$ which may differ from the distribution by which treatment was assigned.  We define covariate-marginalized potential outcomes over the marginal distribution of covariates $\bL$ in the cluster, denoted by $Q(\bl)$.  

\begin{defn}[Covariate adjustment]
  Under Assumptions~\ref{as:exclusion}--\ref{as:positivity},
  \begin{align*}
    \E[Y_i(t;x_i,\x_{(i)},\bhi)] 
    &= \int \E[Y_i(t; x_i, \x_{(i)}, \bhi) \,|\, \bL = \bl] \,\mathrm{d}Q(\bl) \\
    \E\big[Y_i(t; x_i, \x_{(i)}, \bHi^*(\x_{(i)})\big)\big] 
    &= \int \E\big[Y_i(t; x_i, \x_{(i)}, \bHi^*(\x_{(i)})\big) \mid \bL = \bl\big] \,\mathrm{d}Q(\bl) \\
    \overline{Y}_i^p(t;x_i) 
    &= \int \overline{Y}_i^p(t;x_i \,|\, \bL=\bl) \,\mathrm{d}Q(\bl)
  \end{align*}
  Under Assumptions~\ref{as:exclusion}--\ref{as:crossworld} when $\x_{(i)} \neq \x_{(i)}'$, 
  \begin{align*}
    \E\big[Y_i(t; x_i,\x_{(i)},\bHi^*(\x_{(i)}')\big)\big] 
    &= \int \E\big[Y_i(t;x_i,\x_{(i)},\bHi^*(\x_{(i)}')\big) \,\big|\, \bL = \bl\big] \,\mathrm{d}Q(\bl) \\
    \overline{Y}_i^p(t;x_i,\x_{(i)}') 
    &= \int \overline{Y}_i^p(t; x_i, \x_{(i)}' \,|\, \bL=\bl) \,\mathrm{d}Q(\bl)
  \end{align*}
  where $\E[Y_i(t; x_i, \x_{(i)}, \bhi)) \,|\, \bL = \bl]$ is given by Theorem~\ref{mainidentification}, $\E\big[Y_i\big(t, x_i, \x_{(i)}, \bHi^*(\x_{(i)})\big)  \,\big|\, \bL = \bl\big]$ and $\E\big[Y_i\big(t,x_i,\x_{(i)},\bHi(\x_{(i)}')\big) \,\big|\, \bL = \bl\big]$ are given by Proposition~\ref{prop:overexposure}, and the definitions of $\overline{Y}_i^p(t, x_i \,|\, \bL = \bl)$ and $\overline{Y}_i^p(t, x_i, \x_{(i)}' \,|\, \bL=\bl)$ are given by Proposition \ref{prop:overexposuretx}.
\label{prop:overcovariate}
\end{defn}



\section{Causal effects based on risk differences}

With these identification results in hand, we present biologically meaningful causal estimands summarizing the transmissibility of disease, as well as the effects of treatment on infection and transmision.  
We define causal effects in terms of infection risk differences, but similar effects may be defined in terms of risk ratios or odds ratios.
We focus on the potential infection outcomes of a particular individual $i$. 

First, we define what it means for the infection outcome to be contagious, or transmissible. 

\begin{defn}[Contagion effect] 
  The exposure-controlled contagion effect is
  \begin{equation*}
    \CE_i(t,\x,\bhi,\bhi') = \E[Y_i(t;\x, \bhi) - Y_i(t;\x, \bhi')].
  \end{equation*}
  The exposure-marginalized contagion effect is
  \begin{equation*}
    \CE_i(t,x_i,\xminusi,\x'_{(i)}) = \E\big[Y_i\big(t;x_i,\x_{(i)},\bHi^*(\x_{(i)})\big) - Y_i\big(t;x_i,\x_{(i)},\bHi^*(\x_{(i)}')\big)\big].
  \end{equation*}
  \label{defn:contagion}
\end{defn}
Informally, the infection outcome is ``contagious'' if it depends on the infection history of others, holding treatments constant.  To make this idea more concrete, for $\bhi=(t_i^{(1)},\ldots,t_i^{(n-1)})$ and $\bhi'=(t_i^{'(1)},\ldots,t_i^{'(n-1)})$, we denote $\bhi \prec \bhi'$ if $t_i^{(k)} \le t_i^{'(k)}$ element-wise for $k=1,\ldots, n-1$.  We say the infection outcome is \emph{positively contagious} absent treatment if earlier infection of cluster members increases the average potential infection risk, so $CE(t,\mathbf{0},\bhi(t),\bhi'(t)) > 0$ whenever $\bhi \prec \bhi'$, when no cluster members receive treatment~\citep{cai2021identification}. 

Susceptibility effects compare potential infection outcomes of $i$ under treatment versus no treatment of $i$, while holding infection history and treatment to others constant. 

\begin{defn}[Susceptibility effect] 
  The exposure-controlled susceptibility effect is
  \begin{equation*}
    \SE_i(t,\x_{(i)}, \bhi) = \E[Y_i(t;1,\x_{(i)},\bhi) - Y_i(t;0,\x_{(i)},\bhi)], 
  \end{equation*} 
  the exposure-marginalized susceptibility effect is 
  \begin{equation*}
    \SE_i(t,\x_{(i)}) = \E\big[Y_i\big(t;1,\x_{(i)},\bHi^*(\x_{(i)})\big) - Y_i\big(t;0,\x_{(i)},\bHi^*(\x_{(i)})\big)\big], 
  \end{equation*} 
  and the exposure-and-treatment-marginalized susceptibility effect is
  \begin{equation*}
    \SE_i^p(t) = \overline{Y}_i^p(t;1) - \overline{Y}_i^p(t;0). 
  \end{equation*}
  \label{defn:susceptibility}
\end{defn}

This susceptibility effect summarizes the effect of changing only the treatment of $i$, holding all else constant; if it is negative, then the treatment reduces the chance of infection.  

Infectiousness effects compare the potential infection outcome of $i$ under different treatments to others while holding constant treatment and exposure to infection of $i$.
\begin{defn}[Infectiousness effect] 
  The exposure-controlled infectiousness effect is 
  \begin{equation*}
    \IE_{i}(t,x_i,\x_{(i)},\x_{(i)}',\bhi) = \E[Y_i(t;x_i,\x_{(i)},\bhi) - Y_i(t;x_i,\x_{(i)}', \bhi)], 
  \end{equation*} 
  and the exposure-marginalized infectiousness effect is 
  \begin{equation*}
    \IE_i(t,x_i,\x_{(i)},\x_{(i)}') = \E\big[Y_i\big(t;x_i,\x_{(i)}, \bHi^*(\xminusi)\big) - Y_i\big(t;x_i,\x_{(i)}', \bHi^*(\xminusi)\big)\big] 
  \end{equation*}
  where we have held the marginalizing distribution $\bHi^*(\xminusi)$ constant under treatment $\xminusi$. 
  The treatment-and-exposure-marginalized infectiousness effect is:
  \begin{equation*}
    \IE^p(t) =\sum_{\x \in \mathcal{X}^{n}}  \IE_i(t,x_i,\xminusi) p(\x). 
  \end{equation*}
  \label{defn:infectiousness}
\end{defn}
Informally, the infectiousness effect summarizes the effect of treatment on an infected individual's ability to infect others.  Note that the distribution of $\bHi^*(\xminusi)$ is the same in both parts of the contrast, and we are only changing the integrand, not the marginalizing distribution.


\section{Statistical identification using pairwise hazard models}

Non-parametric identification of exposure-controlled potential infection outcomes via Lemma \ref{Iidentification} and Theorem \ref{mainidentification} can be difficult in finite samples because realized infection histories may be different from the history of interest, or we may not have enough realizations to adequately capture the natural distribution over infection histories. 
Fortunately, flexible statistical modeling assumptions can help parameterize distribution functions $F_{I_i^k}(s|\x,\bhi,\bl)$ so that these can be efficiently estimated, facilitating computation of expected potential infection outcomes. 
\citet{kenah2013non,kenah2015semiparametric} developed a semiparametric class of infection transmission models that provides the tools needed to dramatically simplify inference of identified estimands while accommodating individual-level treatment and covariates. 
The approach is based on extensive prior work modeling time-to-infection data \citep{halloran1994exposure,rhodes1996counting,kenah2008generation,kenah2011contact,kenah2013non,eck2019randomization}. 

\subsection{Hazards for time-to-transmission}

We seek to parameterize the distribution of the potential waiting time $I_i^k(\x,\bhi)$ from the $k$th infection to the infection of $i$.  
The key idea is that previously infected individuals, along with a possibly exogenous source of infection, impose competing risks of disease transmission to remaining uninfected individuals.  
We view $I_i^k(x_i,\xik,\bhik)$ as the minimum of these $k + 1$ competing transmission times, starting at the time of the $k$th  infection $t_{(i)}^k$. 

To formalize this intuition, let $\tau_{ji}$ be the time for $j$ to infect $i$, following infection of $j$, and suppose all $\tau_{ji}$ are conditionally independent, given relevant treatments and covariates.  Let $\tau_{ji}(x_j,x_i)$ be the potential transmission time as a function of treatments to $j$ and $i$, and let $\lambda_{ji}(\cdot|x_j,x_i,\bl_j,\bl_i)$ be the hazard function of $\tau_{ji}(x_j,x_i)$, where we interpret $\tau_{0i}(x_i)$ as the potential transmission time to $i$ from an outside (exogenous) source of transmission, and let $\lambda_{0i}(t|x_i,\bl_i)$ be the corresponding transmission hazard. 

\begin{assumption}[Independence of competing transmission times]
For all $i$ and all $j \neq i$, $k \neq l$
  \begin{equation}
  \begin{split}
      \tau_{ji}(x_j,x_i) &\indep \tau_{kl}(x_k,x_l) \mid \bL, \\
    \lambda_{0i}(t \,|\, \x, \bhi, \bl) 
      &= \lambda_{0i}(t \,|\, x_i,\bl_i),\\
    \lambda_{ji}(t \,|\, \x, \bhi, \bl) 
      &= \lambda_{ji}(t -t_j \,|\, x_i, x_j, \bl_i, \bl_j).	
  \end{split}	
  \end{equation}
\label{as:independentcompeting}
\end{assumption} 
Assumption~\ref{as:independentcompeting} states that the latent waiting times to infection from different sources to $i$ are conditionally independent given treatments and covariates and that their hazard functions are determined by treatments and covariates in the target individual and, if needed, the infected source individual.  

This assumption yields a convenient additive hazard model for the infection time of $i$. At time $t > t_{(i)}^k$, the infection waiting time $I_i^k$ has hazard 
\begin{equation}
  \lambda_{0i}(t \,|\, x_i,\bl_i) + \sum_{j=1}^k \lambda_{\varphi_i^j,i}(t - t_{(i)}^k \,|\, x_{\varphi_i^j},x_i,\bl_j,\bl_i)
\label{eq:sumhazard_forI}
\end{equation}
which is the sum of the transmission hazards from existing sources of infection.  Therefore the infection hazard of $i$ at an arbitrary time $t$, derived from \eqref{eq:sumhazard_forI}, can be written as
\begin{equation}
  \lambda_i(t \,|\, \x,\bhi, \bl)
  = \lambda_{0i}(t \,|\, x_i,\bl_i) 
    + \sum_{j \neq i} y_j(t) \lambda_{ji}(t-t_j \,|\, x_i,x_j,\bl_i,\bl_j) .
\label{eq:sumhazard}
\end{equation}
Assumption \ref{as:independentcompeting} dramatically simplifies representation and estimation of causal estimands in terms of time integrals of these hazard functions.  Figure~\ref{fig:changeofhazard} shows how $\lambda_i(t|\x,\bhi,\bl)$ changes under a deterministic history $\bhi=(t_i^{(1)},\ldots,t_i^{(n-1)})$.

\begin{figure}
  \centering
  \includegraphics[width=0.7\textwidth]{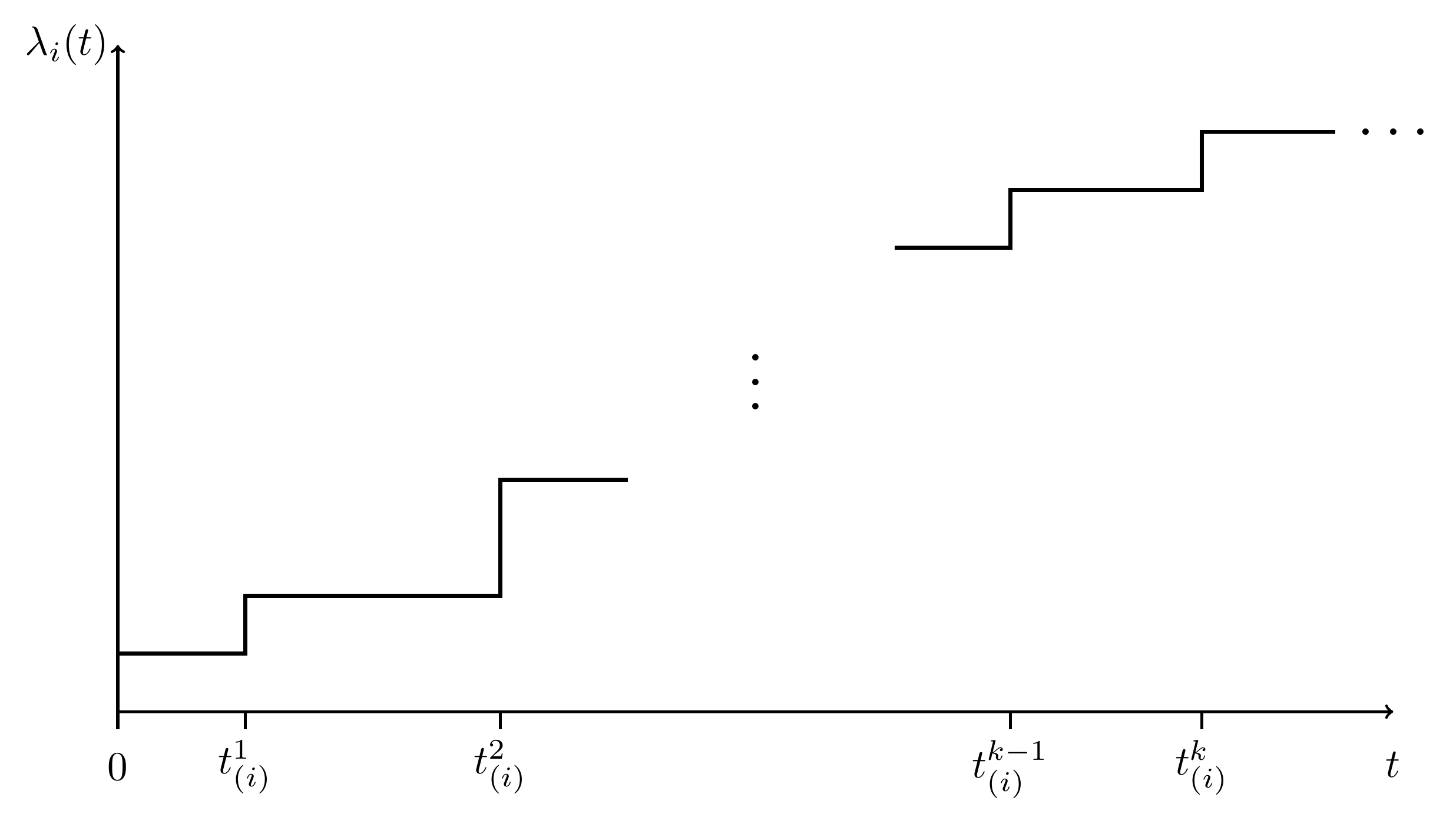}
  \caption{Illustration of the infection hazard experienced by $i$ under the hazard model \eqref{eq:sumhazard} and a deterministic infection history $\bhi=(t_i^{(1)},\ldots,t_i^{(n-1)})$. For simplicity, we assume $\lambda_{ji}(t|\x,\bhi,\bl)$ is constant for $t>0$. Before the first infection at $t_{(i)}^1$, $i$ experiences only the exogenous force of infection $\lambda_{0i}(t \,|\, x_i,\bl_i)$.  Following the first infection at $t_{(i)}^1$, $i$ experiences $\lambda_{0i}(t \,|\, x_i,\bl_i) + \lambda_{\varphi_i^1,i}(t-t_{(i)}^1|x_i,x_{\varphi_i^1},\bl_i,\bl_{\varphi_i^1})$. This process continues until all $n-1$ other cluster members are infected. }
\label{fig:changeofhazard}
\end{figure}

The hazard representation \eqref{eq:sumhazard} has an important consequence fo identification of expected potential infection outcomes.  

\begin{thm}[Identification of exposure-controlled potential outcomes under the pairwise hazard model]
  Under Assumption~\ref{as:independentcompeting}, Theorem~\ref{mainidentification} reduces to 
  \begin{multline*}
   \E[Y_i(t,\x,\bh_{(i)}) \,|\, \bL=\bl] 
= 1 - \exp \left[- \int_0^t \left( \lambda_{0i}(s|x_i,\bl_i) + \sum_{j \neq i} y_j(s) \lambda_{ji}(s-t_j|x_i,x_j,\bl_i,\bl_j) \right) \mathrm{d}s \right]
    \label{thm:Tidistribution}
  \end{multline*}
  where $t_j$ is the infection time of $j$ in the history $\bh_{(i)}$ and $y_j(s)$ is the indicator that $s>t_{(i)}^j$.
\label{thm:haz_mainresult}
\end{thm}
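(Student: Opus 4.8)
The plan is to start from the piecewise sum-of-products expression for $\E[Y_i(t;\bhi,\x)\mid\bL=\bl]$ supplied by Theorem~\ref{mainidentification}, convert each factor into an exponential of an integrated hazard using Lemma~\ref{Iidentification}, and then collapse the resulting telescoping product into a single exponential. The first step is to translate Assumption~\ref{as:independentcompeting} into a statement about the cause-specific hazard appearing in Lemma~\ref{Iidentification}. Recall that Lemma~\ref{Iidentification} writes $1-F_{I_i^k}(s\mid\x,\bhi,\bl)=\exp[-\int_{t_{(i)}^k}^{t_{(i)}^k+s} f_i^k(u)/S_i^k(u)\,\mathrm{d}u]$, where $f_i^k/S_i^k$ is the cause-specific hazard for the infection of $i$, given that the already-infected members $\varphi_i^1,\ldots,\varphi_i^k$ were infected at the prescribed times and the remaining members $\varphi_i^{k+1},\ldots,\varphi_i^{n-1}$ are still susceptible. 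I would show that, under the conditional independence of the competing transmission times in Assumption~\ref{as:independentcompeting}, this cause-specific hazard equals the additive sum of pairwise transmission hazards into $i$ given in \eqref{eq:sumhazard_forI}. The point is that the transmission times into $i$ are independent of the transmission times into the still-susceptible competitors, so both $f_i^k(u)$ and $S_i^k(u)$ factor into an $i$-part and a competitor-survival part; the latter cancels in the ratio, leaving the net additive hazard $\lambda_i(u\mid\x,\bhi,\bl)$ of \eqref{eq:sumhazard}, restricted to the interval $(t_{(i)}^k,t_{(i)}^{k+1})$ on which exactly $\varphi_i^1,\ldots,\varphi_i^k$ have been infected.

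Given this identification of the hazard, each surviving factor becomes $1-F_{I_i^k}(t_{(i)}^{k+1}-t_{(i)}^k\mid\x,\bhi,\bl)=\exp[-\int_{t_{(i)}^k}^{t_{(i)}^{k+1}}\lambda_i(u\mid\x,\bhi,\bl)\,\mathrm{d}u]$, and likewise for a partial interval. Next I would localize the evaluation time: for $t\in(t_{(i)}^m,t_{(i)}^{m+1}]$, every term in the Theorem~\ref{mainidentification} sum with $j>m$ vanishes because $\min\{t,t_{(i)}^{j+1}\}-t_{(i)}^j\le 0$ forces $F_{I_i^j}=0$, while for $j<m$ the minimum equals $t_{(i)}^{j+1}$ and for $j=m$ it equals $t$. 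Writing $p_k=1-F_{I_i^k}(t_{(i)}^{k+1}-t_{(i)}^k\mid\x,\bhi,\bl)$ and $q_m=1-F_{I_i^m}(t-t_{(i)}^m\mid\x,\bhi,\bl)$, the sum becomes $\sum_{j=0}^{m-1}(1-p_j)\prod_{k=0}^{j-1}p_k+(1-q_m)\prod_{k=0}^{m-1}p_k$.

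The remaining step is a standard telescoping identity: using $(1-p_j)\prod_{k=0}^{j-1}p_k=\prod_{k=0}^{j-1}p_k-\prod_{k=0}^{j}p_k$, the first sum collapses to $1-\prod_{k=0}^{m-1}p_k$, and combining with the last term yields $\E[Y_i(t;\bhi,\x)\mid\bL=\bl]=1-q_m\prod_{k=0}^{m-1}p_k$. Substituting the exponential forms of $p_k$ and $q_m$ and concatenating the contiguous integrals over $(t_{(i)}^0,t_{(i)}^1),\ldots,(t_{(i)}^{m-1},t_{(i)}^m),(t_{(i)}^m,t)$, which join to $\int_0^t$ because $t_{(i)}^0=0$, produces the single exponential $\exp[-\int_0^t \lambda_i(u\mid\x,\bhi,\bl)\,\mathrm{d}u]$ and hence the claimed formula once $\lambda_i$ is expanded via \eqref{eq:sumhazard}. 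I expect the telescoping and integral concatenation to be routine; the main obstacle is the first step---rigorously showing that the cause-specific competing-risks hazard $f_i^k/S_i^k$ reduces to the additive net hazard. This requires carefully using the conditional independence of the transmission times across pairs with distinct recipients in Assumption~\ref{as:independentcompeting} to argue that the survival of the competing susceptibles factorizes out of both $f_i^k$ and $S_i^k$, leaving only the pairwise hazards into $i$.
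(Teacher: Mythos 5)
Your proposal is correct and follows essentially the same route as the paper's proof: substitute the additive hazard representation \eqref{eq:sumhazard_forI} implied by Assumption~\ref{as:independentcompeting} into the sum-of-products formula of Theorem~\ref{mainidentification}, convert each factor into an exponential of an integrated hazard, and collapse the telescoping sum into a single exponential over $[0,t]$. The only difference is explicitness: the paper plugs the structural hazard of $I_i^k$ directly into each probability term and states the collapsed result, whereas you justify that step via cancellation of the competitor-survival factors in the ratio $f_i^k/S_i^k$ from Lemma~\ref{Iidentification} and carry out the localization of $t$ and the telescoping identity explicitly---details the paper leaves implicit but which match its argument in substance.
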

This result is a consequence of the hazard representation \eqref{eq:sumhazard}, and makes clear that the expected exposure-controlled potential infection outcome is simply the probability of infection (failure) prior to time $t$ under the specified treatments and infection history of others.  

Finally, exposure-marginalized potential infection outcomes can be computed using the identified distribution of the potential exposure distribution.

\begin{thm}[Identification of exposure-marginalized potential outcomes under the pairwise hazard model]
  Under Assumption~\ref{as:independentcompeting}, Theorem~\ref{thm:exposure} reduces to
  \begin{equation}
    \text{d}G_{(i)}^*(\bhi \,|\, \x_{(i)}, \bli) 
    = \prod_{j=1}^{n-1} \bigg[\lambda_{\varphi_i^j}(t_{(i)}^j \,|\, \x, \bh^i_{(\varphi_i^j)},  \bl) \prod_{k = j}^{n-1} \exp\bigg(-\int_{t_{(i)}^{j-1}}^{t_{(i)}^j} \lambda_{\varphi_i^k}(s \,|\, \x, \bh^i_{(\varphi_i^k)}, \bl) \,\mathrm{d}s\bigg)\bigg] \\
    \label{eq:thm4}
  \end{equation}
For the terms inside the square bracket, we have used the fact that the probability density for the infection time of $j$ equals the product of its hazard and survival functions. For the product term inside the square brackets, each individual $k$ is included for $j = 1, \ldots, k-1$.
  Thus, \eqref{eq:thm4} simplifies to
  \begin{equation}
    \text{d}G_{(i)}^*(\bhi \,|\, \x_{(i)}, \bli) 
    = \prod_{j=1}^{n-1} \lambda_{\varphi_i^j}(t_{(i)}^j \,|\, \x, \bh^i_{(\varphi_i^j)},  \bl) \times \prod_{k = 1}^{n-1} \exp\bigg(-\int_0^{t_{(i)}^k} \lambda_{\varphi_i^k}(s \,|\, \x, \bh^i_{(\varphi_i^k)}, \bl) \,\mathrm{d}s\bigg)
  \end{equation}
\label{thm:haz_Hresult}
\end{thm}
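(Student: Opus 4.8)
The plan is to substitute the pairwise additive-hazard representation \eqref{eq:sumhazard} directly into the general identification formula of Theorem~\ref{thm:exposure} and then collapse the resulting nested products of densities and survival functions into the stated form. The starting point is the observation, via Lemma~\ref{Iidentification} together with Assumption~\ref{as:independentcompeting}, that the ratio $f_i^k/S_i^k$ appearing in the exponent of Lemma~\ref{Iidentification} is exactly the instantaneous infection hazard of the target individual given in \eqref{eq:sumhazard}. Consequently each survival factor becomes an exponentiated cumulative hazard, $S_{I_{\varphi_i^k}^{j-1}}(t_{(i)}^j - t_{(i)}^{j-1}\mid\cdots)=\exp(-\int_{t_{(i)}^{j-1}}^{t_{(i)}^j}\lambda_{\varphi_i^k}(s\mid\cdots)\,ds)$, and each density factor splits as hazard times survival, $f_{I_{\varphi_i^j}^{j-1}}(t_{(i)}^j - t_{(i)}^{j-1}\mid\cdots)=\lambda_{\varphi_i^j}(t_{(i)}^j\mid\cdots)\exp(-\int_{t_{(i)}^{j-1}}^{t_{(i)}^j}\lambda_{\varphi_i^j}(s\mid\cdots)\,ds)$.

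First I would carry out this substitution term by term in the bracketed factor indexed by $j$ in Theorem~\ref{thm:exposure}. The survival factor hidden inside the density of $\varphi_i^j$ has precisely the same form as the explicit survival factors of the still-uninfected individuals $\varphi_i^{j+1},\ldots,\varphi_i^{n-1}$, so it can be absorbed by lowering the start index of the inner product from $k=j+1$ to $k=j$. This already yields the intermediate expression \eqref{eq:thm4}, in which the $j$th bracket equals $\lambda_{\varphi_i^j}(t_{(i)}^j\mid\cdots)\prod_{k=j}^{n-1}\exp(-\int_{t_{(i)}^{j-1}}^{t_{(i)}^j}\lambda_{\varphi_i^k}(s\mid\cdots)\,ds)$.

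Next I would separate the product of instantaneous hazard factors $\prod_{j=1}^{n-1}\lambda_{\varphi_i^j}(t_{(i)}^j\mid\cdots)$, which pulls out unchanged, and reorganize the remaining exponential factors by the individual $\varphi_i^k$ they describe rather than by the infection event $j$. For a fixed $k$, the factor $\exp(-\int_{t_{(i)}^{j-1}}^{t_{(i)}^j}\lambda_{\varphi_i^k}\,ds)$ appears in the $j$th bracket of \eqref{eq:thm4} exactly for $j=1,\ldots,k$, i.e. over every interval during which $\varphi_i^k$ remains uninfected. Multiplying these together and using $\exp(a)\exp(b)=\exp(a+b)$ together with additivity of the integral over the contiguous partition $0=t_{(i)}^0<t_{(i)}^1<\cdots<t_{(i)}^k$ collapses them into the single factor $\exp(-\int_0^{t_{(i)}^k}\lambda_{\varphi_i^k}(s\mid\cdots)\,ds)$, producing the final product over $k=1,\ldots,n-1$.

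The main obstacle is the bookkeeping in this last regrouping step, and in particular justifying that gluing the per-interval integrals of a fixed individual $\varphi_i^k$ into one integral over $[0,t_{(i)}^k]$ is legitimate even though the set of active infectious sources changes at each prior infection time. This is resolved by noting that the hazard written as $\lambda_{\varphi_i^k}(s\mid\x,\bh^i_{(\varphi_i^k)},\bl)$ in \eqref{eq:sumhazard} is already a single, well-defined function of $s$ on $[0,t_{(i)}^k]$: the time-varying indicators $y_j(s)$ automatically switch the appropriate transmission hazards on at each prior infection time, so the integrand is consistent across the intervals being merged and no additional matching argument is needed. The remaining verification---confirming the index range of the inner product and using $t_{(i)}^0=0$---is routine.
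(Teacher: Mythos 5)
Your proposal is correct and follows essentially the same route as the paper's proof: substitute density $=$ hazard $\times$ survival and survival $=$ exponentiated cumulative hazard into Theorem~\ref{thm:exposure}, absorb the density's own survival factor by lowering the inner product index from $k=j+1$ to $k=j$ to obtain \eqref{eq:thm4}, and then regroup the exponential factors by individual $\varphi_i^k$, gluing the per-interval integrals over the partition $0=t_{(i)}^0<\cdots<t_{(i)}^k$. Your bookkeeping is in fact slightly more careful than the paper's (your range $j=1,\ldots,k$ for the regrouping is the correct one, and your remark that the time-varying indicators $y_j(s)$ make the integrand a single well-defined function across the merged intervals is a justification the paper leaves implicit).
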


\subsection{Statistical infection hazard model}

Following~\citet{kenah2015semiparametric}, we consider a Cox-type hazard model for the pairwise infection times: 
\begin{equation}
\begin{split}
  \lambda_{0i}(t \,|\, x_i, \bl_i) 
  &= \alpha(t) \exp[\beta_1 x_i + \theta_1 \bl_i] \\
  \lambda_{ji}(t \,|\, x_i, x_j, \bl_i) 
  &= \gamma(t-t_j) \exp[\beta_1 x_i + \beta_2 x_j + \theta_1 \bl_i + \theta_2 \bl_j]	
\end{split}
\label{eq:haz}
\end{equation}
which has non-parametric baseline hazards $\alpha(t)$ for the (time-varying) risk of infection from outside the cluster, and $\gamma(t - t_j)$ for the (time-varying) infection risks from already-infected cluster members $j$, with the effects of treatments $\X=x$ and covariates $\bL$ accommodated in a log-linear model with coefficients $\beta_1$, $\beta_2$, $\theta_1$ and $\theta_2$.   This infection hazard model has been used by many authors to represent transmission dynamics in groups \citep{rhodes1996counting,cauchemez2004bayesian,kenah2013non,kenah2015semiparametric,morozova2018risk,eck2019randomization,crawford2020transmission}. 

The coefficients of the infection hazard model \eqref{eq:haz} have inutitive interpretations. If $\beta_1 < 0$, vaccination decreases the hazard of infection for treated individuals against all sources of infection by a factor of $e^{\beta_1}$.  If $\beta_2 < 0$, vaccination of other individuals decreases their hazard of transmission to $i$ by a factor of $e^{\beta2}$ once they become infectious.  Plugging in \eqref{eq:haz}, the hazard model~\eqref{eq:sumhazard} takes a simple form, 
\begin{equation}
  \lambda_i(t \,|\, \x,\bhi, \bl)
  =\exp[\beta_1 x_i + \theta_1 \bl_i]  \Big( \alpha(t) 
    + \sum_{j \neq i} y_j(t) \gamma(t-t_j) \exp[\beta_2 x_j + \theta_2 \bl_j] \Big) .
\end{equation}


\subsection{Hazard ratio estimands}

\label{sec:hazardestimands}

The parameters of the hazard model \eqref{eq:sumhazard} are closely related to causal effects defined above.  For example, the contagion effect is positive whenever $\gamma(t) > 0$, and the sign of the susceptibility effects and infectiousness effects depend only on $\beta_1$ and $\beta_2$, respectively (details are given in the Appendix). These facts suggest that the parameters of this model may have a causal interpretation in terms of their relationship to the causal contagion, susceptibility, and infectiousness effects defined above.  We construct hazard ratio estimands by contrasting potential infection hazard under different exposure history and treatment. 

\begin{defn}[Controlled susceptibility hazard ratio]
The controlled susceptibility hazard ratio is defined as:
\begin{equation*}
  HSE^C(t, \x_{(i)}, \bhi, \bl) = \frac{\lambda_{i}(t \,|\, x_i = 1, \x_{(i)}, \bhi, \bl)}{\lambda_{i}(t \,|\, x_i = 0, \x_{(i)}, \bhi, \bl)}
  \label{eq:de}
\end{equation*}
\end{defn}
The controlled hazard susceptibility ratio is the ratio of counterfactual hazards of $i$ when $i$ is treated versus untreated, while holding the treatments, infection histories, and covariates of others constant. Plugging \eqref{eq:haz} into the controlled hazard susceptibility effect, we find that
\begin{equation}
  HSE^C(t,\x_{(i)},\bhi,\bl)  =e^{\beta_1}
  \label{eq:controlledSE}
\end{equation}
regardless of the values of $t$, $\x_{(j)}$, $\bhi$, and cluster size.
If $\beta_1 < 0$, vaccination is beneficial because it decreases the hazard of infection from all sources of infection by a factor of $e^{\beta_1}<1$.

\begin{defn}[Controlled infectiousness hazard ratio]
The controlled infectiousness hazard ratio is defined as:
  \begin{equation*}
    HIE^C(t,h_j,h'_j,\x_{(j)}, \bh_{(i,j)},\bl) = \frac{\lambda_{i}(t|x_{j}=1, \x_{(j)},h'_j, \bh_{(i,j)},\bl) - \lambda_{i}(t|x_{j}=1, \x_{(j)},h_j, \bh_{(i,j)},\bl) }{\lambda_{i}(t|x_{j}=0, \x_{(j)},h'_j, \bh_{(i,j)},\bl) - \lambda_{i}(t|x_{j}=0, \x_{(j)},h_j, \bh_{(i,j)},\bl)}
  \end{equation*}
  where $y_j(t)=1$ in $h'_j$ and $y_j(t)=0$ in $h_j$. 
\end{defn}

The controlled infectiousness hazard ratio is the ratio of the hazard of $i$ being infected by a given infectious neighbor $j$, when $j$ is vaccinated versus unvaccinated, holding $\x_{(j)}$, exposure to infection from sources other than $j$, and all covariates constant. 
Applying the hazard model~\eqref{eq:haz} to the controlled infectiousness effect, we have
\begin{equation}
  HIE^C(t, h_j, h'_j, \x_{(j)}, \bh_{(i,j)},\bl) = e^{\beta_2}
  \label{eq:ie2}
\end{equation}
regardless of the values of $t$, $\x_{(j)}$, $\bh_{(i,j)}$, and cluster size. 
If $\beta_2 < 0$, the infectiousness effect is beneficial because it decreases the instantaneous infectiousness of $j$ by a factor of $e^{\beta_2}<1$.

To determine whether the disease is contagious, we focus on one transmission contact between $i$ and an infectious individual $j$, and evaluate if coming into contact with an infectious individual increases the infection risk. 
Since treatments are irrelevant to the intrinsic transmissibility of a disease, we define the contagion effect under null treatments $\X=(0,\ldots,0)=\mathbf{0}$.

\begin{defn}[Controlled contagion cumulative hazard ratio]
  The controlled contagion cumulative hazard ratio is
  \begin{equation*}
    HCE^C(t,h^{''}_j,h'_j,\bh_{(i,j)},\bl) = \frac{\int_0^t \big{[} \lambda_i(u; \mathbf{x}=\mathbf{0},h'_j(t),\bh_{(i,j)},\bl)-\lambda_i(u; \mathbf{x}=\mathbf{0},h_j(t),\bh_{(i,j)},\bl) \big{]} du}{\int_0^t \big{[}\lambda_i(u; \mathbf{x}=\mathbf{0},h^{''}_j(t),\bh_{(i,j)},\bl)-\lambda_i(u; \mathbf{x}=\mathbf{0},h_j(t),\bh_{(i,j)},\bl) \big{]} du}
    \label{de:controlledcontation}
  \end{equation*}
  where the corresponding infection times $t'_j$ and $t''_j$ contained in $h'_j$ and $h''_j$ has the relationship of $t'_j < t^{''}_j < t$, and $y_j(t) = 0$ in $h_j(t)$.
\end{defn}
Applying the hazard model \eqref{eq:haz} to the controlled contagion effect, we have:
\begin{equation}
HCE^C(t,h^{''}_j,h'_j,\mathbf{h}_{(i,j)})=\frac{\int_{t_j'}^t \gamma(u) du}{\int_{t_j''}^t \gamma(u) du}
\label{eq:contagion2}
\end{equation}
Therefore, $\gamma(t)>0$ for any $t>0$ implies positive contagion, since earlier exposure to infection increases the infection risk.


\subsection{Statistical estimation}

The hazard model \eqref{eq:haz} can be estimated parametrically \cite{kenah2011contact} or semiparametrically \cite{kenah2013non,kenah2015semiparametric}.  We illustrate identification and estimation in the parametric case when the functional forms for $\alpha(t)$ and $\gamma(t)$ are known to obtain maximum likelihood estimates (MLEs) for model parameters. We then compute causal estimands.  Specifying particular functional forms for $\alpha(t)$ and $\gamma(t)$ is a relatively strong assumption, and if we do not want to specify the functional form of the baseline hazards, we can also estimate them non-parametrically by martingale techniques in counting process. Nevertheless, regardless of known or unknown baseline hazards, the meanings of $e^{\beta_1}$ and $e^{\beta_2}$ as causal estimands for the susceptibility and infectiousness effects are preserved, as long as the hazard model \eqref{eq:haz} holds.

With independent (or administrative) censoring at $C_i$ for individual $i$, we denote $\Delta_i=\indicator{T_i <C_i}$, with $\Delta_i=1$ indicating $i$'s infection time being observed, and $\Delta_i=0$ indicating $i$'s  infection time being censored. Denote $T^*_i=\min(T_i,C_i)$. For each cluster, we observe $(T^*_i,\Delta_i,X_i,\mathbf{L}_i)$ for all $i$. Given the hazard model for $\lambda_{i}$ specified in \eqref{eq:haz}, we can derive the full likelihood of infection times of all clusters as follows: 
\begin{equation}
\begin{split}
L(\beta_1,\beta_2,\theta_1,\theta_2) &= \prod_{i=1}^n \left[ \lambda_{i}(t^*_{i}) \right]^{\Delta_i} \exp\left[ -\int_0^{t^*_i} \lambda_{i}(u) \dx{u} \right]  \\
& = \prod_{i=1}^n \left[ e^{\beta_1 x_i + \theta_1 \bl_i} \Big{(} \alpha(t^*_i) + \sum_{j \neq i} y_j(t^*_i) \gamma(t^*_i-t_j) e^{\beta_2 x_j + \theta_2 \bl_j} \Big{)} \right]^{\Delta_i}
\\
& \quad \times \exp \left( -\int_0^{t^*_i} e^{\beta_1 x_i +\theta_1 \bl_i} \Big{(} \alpha(u) + \sum_{j \neq i} y_j(u) \gamma(u-t_j) e^{\beta_2 x_j + \theta_2 \bl_j} \Big{)} \dx{u} \right)
\end{split}
\label{eq:lik}
\end{equation}


\section{Simulations and comparisons to other estimands}

Infectious disease treatment effects have been evaluated using both randomized trials and observational studies. For infectious diseases that are transmissible within pairs or partnerships (e.g., HIV), researchers have designed partnership randomization studies based on sexual partners \cite{kamenga1991evidence,de1994longitudinal,el2005long,gilbert2010couple} or mother-child pairs \cite{mcsherry1999effects,brogly2010birth}. In outbreak or endemic settings, researchers have proposed trial designs for clusters when individuals are randomized to receive treatment and placebo \cite{halloran1997study,halloran1999design,hudgens2008toward,halloran2010design,perez2014assessing,halloran2016dependent}, and defined a series of direct, indirect, susceptibility and infectiousness vaccine effects by different summary measurements \cite{halloran1991direct,halloran1997study,halloran1999design}. Generalized estimation equation (GEE) regression approaches are popular for confounder adjustment when outcomes are correlated. These methods have been applied to assess treatment effects in cluster randomized trials \citep{ali2005herd} and network-randomized trials \citep{perez2014assessing,buchanan2018assessing}, and model the spread of contagious outcomes across network ties \cite{christakis2007spread,christakis2008collective,cacioppo2009alone}.

\subsection{Simulation of seasonal influenza vaccine trials}

We illustrate the performance of the controlled and exposure-marginalized causal estimands under an extensive collection of simulations that mimic a hypothetical influenza trial. We assume all individuals are uninfected at baseline, and each individual is susceptible to infection from a source outside the cluster and from any of their infected fellow cluster members. Infections can be transmitted between individuals in either direction and treatments are randomized under multiple strategies or correlated as in observational studies. We assume individuals remain infectious after infection, with their transmissibility governed by the hazard $\gamma(t)>0$.  

We consider an influenza vaccine that is protective in its susceptibility effect ($\beta_1<0$) and infectiousness effect ($\beta_2<0$), in clusters of 2, 4 and 8 individuals. We consider simulations under varied transmission dynamics, including: (1) constant external and internal baseline hazards of $\alpha(t)=\alpha$ and $\gamma(t)=\gamma$, which reduces to a Markov susceptible-infective process with an external force of infection \citep[e.g][]{morozova2018risk,eck2019randomization}, (2) a constant external force of infection $\alpha(t)=\alpha$ and decreasing internal baseline hazards of $\gamma(t) = b \exp[-\omega(t-t_j)]$, and (3) a seasonally varying external baseline hazard of $\alpha(t)=a(1+\sin(2 \pi t+\phi))$ and decreasing internal baseline hazards of $\gamma(t)=b \exp[-\omega(t-t_j)]$.

In randomized trial simulations, vaccine allocation is randomized in accordance with a specified distribution -- Bernoulli, block, or cluster randomization. For Bernoulli randomization, each individual within the cluster is independently randomized with $\Pr(\X_i=1)=\frac{1}{2}$; for block randomization, exactly half of the individuals of the cluster are randomized to the treatment and the remaining half are randomized to the placebo with $\sum_{i=1}^n X_i=\frac{n}{2}$; for cluster randomization, the whole cluster is either randomized to all receiving treatments or all receiving placebo with $\Pr(\X=\mathbf{0})=\Pr(\X=\mathbf{1}) = \frac{1}{2}$. In the observational study simulation, the traits $\bL=\bl$ together determine the joint distribution of vaccine in the cluster as 
\[ \Pr(X_i=1|L_i=l_i) = \frac{1}{1+e^{l_i}} \]
where 
\begin{equation}
\begin{pmatrix}L_1 \\ L_2\\ ...\\L_{n}\end{pmatrix} \sim \text{Normal}\left(\begin{pmatrix} 0 \\ 0 \\...\\0 \end{pmatrix}, v\begin{pmatrix} 1 & \rho & ... & \rho \\ \rho & 1 &...& \rho \\... \\\rho & \rho & ...& 1 \end{pmatrix}\right) 
\label{eq:rho}
\end{equation}
with $v>0$. 

Tables \ref{tab:estimation1}--\ref{tab:estimation4} show estimates of the hazard model coefficients $\beta_1$ and $\beta_2$, along with exposure-marginalized effects $\CE$, $\SE$, and $\IE$, under different cluster sizes, transmission dynamics, and vaccine allocation distributions. We set $e^{\beta_1}=0.5$ so that influenza vaccination reduces the hazard of infection for vaccinated individuals by $50\%$ \cite{cdcfluvaccine} in Tables \ref{tab:estimation1}--\ref{tab:estimation3}. There is little information about how much vaccines reduces the infectiousness of influenza, so we vary this effect in simulations: we simulate $e^{\beta_2}=0.5$ in Tables \ref{tab:estimation1}--\ref{tab:estimation2} and $e^{\beta_2}=0.9$ in Table \ref{tab:estimation3}, so that vaccine reduces the transmission by $50\%$ and $10\%$. The covariates are set to be negatively correlated with $\rho=-0.1$ in Tables \ref{tab:estimation1} and \ref{tab:estimation3} and positively correlated with $\rho=0.1$ in Tables \ref{tab:estimation2} and \ref{tab:estimation4}.
In Table \ref{tab:estimation4} we show a special scenario when the infectiousness effect of a vaccine is much stronger than the susceptibility effect with $e^{\beta_1}=0.9$ and $e^{\beta_2}=0.1$ so that vaccination reduces the rate of infection in a treated individual by $10\%$ but reduces transmission to others by $90\%$. The transmission dynamics for the baseline hazards are: 1) $\alpha(t)=0.3$ and $\gamma(t)=3$ for the constant hazard scenario, 2) $\alpha(t)=0.3$ and $\gamma(t)=6 e^{-0.5(t-t_j)}$ for the constant external hazard and decreasing internal hazard scenario, and 3) $\alpha(t)=0.2 \big{(} 1+\sin(2 \pi t+ \frac{\pi}{2}) \big{)}$ and $\gamma(t)=6 e^{-0.5(t-t_j)}$ for the seasonal external hazard and decreasing internal hazard scenario,
 so that the average infection rate in the population is $5\%$-$20\%$ \cite{cdcfluvaccine}. The observational duration for clusters of $2$, $4$ and $8$ is approximately $4$ months, $2$ months, and $1$ month respectively from the beginning of a flu season, assuming that a flu season typically lasts for 6 months.

From Tables \ref{tab:estimation1}--\ref{tab:estimation4}, we find that the signs of the estimated exposure-marginalized susceptibility effect $\widehat{\SE}$ and infectiousness effect $\widehat{\IE}$ match the signs of $\beta_1$ and $\beta_2$, and $\widehat{\CE}>0$ confirms that infection is contagious. The causal estimands $\CE$, $\SE$, and $\IE$ are difficult to identify non-parametrically by Theorem \ref{mainidentification}--\ref{thm:exposure} and Proposition \ref{prop:overexposure}. Additionally, these estimands are unidentifiable under block and cluster randomization due to lack of outcomes under certain treatments hence a violation of the positivity assumption. Although the positivity assumption is violated (not all treatment assignments are possible), we show how to estimate these estimands using the hazard model, and recover reasonable estimates for block and cluster randomizations.
We also see that $\widehat{\CE}$, $\widehat{\SE}$, and $\widehat{\IE}$ are summary measurements for certain transmission process at a particular time and change with the cluster size $n$, baseline hazard and randomization strategy, as shown in Figure \ref{fig:compareDEIDE}. In contrast, $\beta_1$ and $\beta_2$ are mechanistic features of the vaccine effect that, under the specified hazard model, do not depend on the randomization design, or other features of transmission or the observation time. Thus, their estimates are relatively stable across all scenarios. 

\begin{table}
\small
\centering
\begin{tabular}{ccccccccc}
\toprule
\multirow{2}{*}{Cluster}& \multirow{2}{*}{Treatment} & \multicolumn{2}{c}{Hazard estimands} & \multicolumn{5}{c}{Probability estimands} \\
\cmidrule(lr){3-4} \cmidrule(lr){5-9}
 & & $\hat{\beta}_1$ & $\hat{\beta}_2$ & $\hat{CE}(t,0,\mathbf{0},\mathbf{1})$ & $\hat{SE}(t,\mathbf{0})$ & $\hat{IE}(t,0,\mathbf{0})$ & $DE(t)$ & $IDE(t)$ \\ 
\hline \\[-7pt]
\multicolumn{5}{l}{Constant external and internal hazards} \\[3pt]
 2 & Obs. & -0.726 & -0.565 &  0.024 & -0.077 & -0.015 & -0.067 & -0.006 \\ 
   & Bernoulli & -0.722 & -0.713 &  0.023 & -0.075 & -0.018 & -0.068 & -0.032 \\ 
   & Block & -0.742 & -0.832 &  0.024 & -0.076 & -0.021 & -0.045 &     - \\
   & Cluster & -0.755 & -0.732 &  0.024 & -0.077 & -0.020 & -0.097 &     - \\[3pt]
 4 & Obs. & -0.656 & -0.708 &  0.050 & -0.069 & -0.035 & -0.061 & -0.032 \\ 
   & Bernoulli & -0.682 & -0.724 &  0.052 & -0.073 & -0.034 & -0.062 & -0.051 \\ 
   & Block & -0.703 & -0.779 &  0.054 & -0.074 & -0.036 & -0.044 &     - \\
   & Cluster & -0.677 & -0.808 &  0.053 & -0.073 & -0.039 & -0.127 &     - \\[3pt]
 8 & Obs. & -0.683 & -0.764 &  0.100 & -0.078 & -0.060 & -0.061 &  0.016 \\ 
   & Bernoulli & -0.678 & -0.710 &  0.098 & -0.074 & -0.053 & -0.061 & -0.074 \\ 
   & Block & -0.691 & -0.680 &  0.099 & -0.078 & -0.052 & -0.049 &     - \\
   & Cluster & -0.648 & -0.833 &  0.095 & -0.071 & -0.061 & -0.167 &     - \\[3pt]
\multicolumn{5}{l}{Constant external hazard and time-varying internal hazard} \\ [3pt]  
 2 & Obs. & -0.650 & -0.688 &  0.029 & -0.072 & -0.022 & -0.068 & -0.047 \\ 
   & Bernoulli & -0.714 & -0.665 &  0.032 & -0.077 & -0.022 & -0.066 & -0.047 \\ 
   & Block & -0.575 & -0.685 &  0.025 & -0.063 & -0.021 & -0.026 &     - \\
   & Cluster & -0.721 & -0.630 &  0.032 & -0.078 & -0.021 & -0.104 &     - \\[3pt]
 4 & Obs. & -0.704 & -0.662 &  0.082 & -0.080 & -0.039 & -0.070 & -0.058 \\ 
   & Bernoulli & -0.694 & -0.685 &  0.084 & -0.082 & -0.044 & -0.071 & -0.065 \\ 
   & Block & -0.721 & -0.622 &  0.088 & -0.086 & -0.042 & -0.044 &     - \\
   & Cluster & -0.700 & -0.706 &  0.086 & -0.084 & -0.046 & -0.161 &     - \\[3pt]
 8 & Obs. & -0.669 & -0.722 &  0.136 & -0.065 & -0.046 & -0.070 & -0.233 \\ 
   & Bernoulli & -0.681 & -0.633 &  0.133 & -0.065 & -0.039 & -0.068 & -0.098 \\ 
   & Block & -0.703 & -0.706 &  0.138 & -0.067 & -0.043 & -0.044 &     - \\
   & Cluster & -0.731 & -0.611 &  0.141 & -0.071 & -0.038 & -0.223 &     - \\[3pt]
\multicolumn{5}{l}{Time-varying external and internal hazards} \\ [3pt] 
 2 & Obs. & -0.650 & -0.688 &  0.029 & -0.072 & -0.022 & -0.068 & -0.047 \\ 
   & Bernoulli & -0.714 & -0.665 &  0.032 & -0.077 & -0.022 & -0.066 & -0.047 \\ 
   & Block & -0.575 & -0.685 &  0.025 & -0.063 & -0.021 & -0.026 &     - \\
   & Cluster & -0.721 & -0.630 &  0.032 & -0.078 & -0.021 & -0.104 &     - \\[3pt]
 4 & Obs. & -0.704 & -0.662 &  0.082 & -0.080 & -0.039 & -0.070 & -0.058 \\ 
   & Bernoulli & -0.694 & -0.685 &  0.084 & -0.082 & -0.044 & -0.071 & -0.065 \\ 
   & Block & -0.721 & -0.622 &  0.088 & -0.086 & -0.042 & -0.044 &     - \\
   & Cluster & -0.700 & -0.706 &  0.086 & -0.084 & -0.046 & -0.161 &     - \\[3pt]
 8 & Obs. & -0.669 & -0.722 &  0.136 & -0.065 & -0.046 & -0.070 & -0.233 \\ 
   & Bernoulli & -0.681 & -0.633 &  0.133 & -0.065 & -0.039 & -0.068 & -0.098 \\ 
   & Block & -0.703 & -0.706 &  0.138 & -0.067 & -0.043 & -0.044 &     - \\
   & Cluster & -0.731 & -0.611 &  0.141 & -0.071 & -0.038 & -0.223 &     - \\
\bottomrule
\end{tabular}
\caption{Simulation results showing estimates of hazard coefficients, exposure-marginalized causal estimands, and direct and indirect effect under $e^{\beta_1}=0.5$ and $e^{\beta_2}=0.5$. Estimands are evaluated under three transmission scenarios -- constant external and internal hazards of $\alpha(t)=0.3$ and $\gamma(t)=3$, constant external hazard and time-varying internal hazard of $\alpha(t)=0.3$ and $\gamma(t)=6 e^{-0.5(t-t_j)}$, and time-varying internal and external hazard of $\alpha(t)=0.2 (1+\sin (2 \pi t+\frac{\pi}{2}))$ and $\gamma(t)=6 e^{-0.5(t-t_j)}$. The individual covariates are correlated with $\rho=-0.1$ and have coefficients $e^{\theta_1}=e^{\theta_2}=0.9$. Clusters of $2$, $4$, and $8$ are observed at $0.4$ year, $0.3$ year, and $0.2$ year, respectively. }
\label{tab:estimation1}
\end{table}

\begin{table}
\small
\centering
\begin{tabular}{ccccccccc}
\toprule
 \multirow{2}{*}{Cluster}& \multirow{2}{*}{Treatment} & \multicolumn{2}{c}{Hazard estimands} & \multicolumn{5}{c}{Probability estimands} \\
\cmidrule(lr){3-4} \cmidrule(lr){5-9}
& & $\hat{\beta}_1$ & $\hat{\beta}_2$ & $\hat{CE}(t,0,\mathbf{0},\mathbf{1})$ & $\hat{SE}(t,\mathbf{0})$ & $\hat{IE}(t,0,\mathbf{0})$ & $DE(t)$ & $IDE(t)$ \\ 
\hline \\[-7pt]
\multicolumn{5}{l}{Constant external and internal hazards} \\[3pt]
 2 & Obs. & -0.719 & -0.772 &  0.024 & -0.074 & -0.021 & -0.069 & -0.050 \\ 
   & Bernoulli & -0.722 & -0.713 &  0.023 & -0.075 & -0.018 & -0.068 & -0.032 \\ 
   & Block & -0.742 & -0.832 &  0.024 & -0.076 & -0.021 & -0.045 &     - \\ 
   & Cluster & -0.755 & -0.732 &  0.024 & -0.077 & -0.020 & -0.097 &     - \\[3pt] 
 4 & Obs. & -0.715 & -0.730 &  0.055 & -0.074 & -0.034 & -0.067 & -0.054 \\ 
   & Bernoulli & -0.682 & -0.724 &  0.052 & -0.073 & -0.034 & -0.062 & -0.051 \\ 
   & Block & -0.703 & -0.779 &  0.054 & -0.074 & -0.036 & -0.044 &     - \\ 
   & Cluster & -0.677 & -0.808 &  0.053 & -0.073 & -0.039 & -0.127 &     - \\[3pt]  
 8 & Obs. & -0.703 & -0.710 &  0.101 & -0.082 & -0.056 & -0.062 & -0.165 \\ 
   & Bernoulli & -0.678 & -0.710 &  0.098 & -0.074 & -0.053 & -0.061 & -0.074 \\ 
   & Block & -0.691 & -0.680 &  0.099 & -0.078 & -0.052 & -0.049 &     - \\ 
   & Cluster & -0.648 & -0.833 &  0.095 & -0.071 & -0.061 & -0.167 &     - \\[3pt] 
\multicolumn{5}{l}{Constant external hazard and time-varying internal hazard} \\ [3pt]  
 2 & Obs. & -0.596 & -0.552 &  0.025 & -0.064 & -0.017 & -0.055 & -0.014 \\ 
   & Bernoulli & -0.714 & -0.665 &  0.032 & -0.077 & -0.022 & -0.066 & -0.047 \\ 
   & Block & -0.575 & -0.685 &  0.025 & -0.063 & -0.021 & -0.026 &     - \\ 
   & Cluster & -0.721 & -0.630 &  0.032 & -0.078 & -0.021 & -0.104 &     - \\[3pt]  
 4 & Obs. & -0.706 & -0.739 &  0.087 & -0.084 & -0.046 & -0.072 & -0.061 \\ 
   & Bernoulli & -0.694 & -0.685 &  0.084 & -0.082 & -0.044 & -0.071 & -0.065 \\ 
   & Block & -0.721 & -0.622 &  0.088 & -0.086 & -0.042 & -0.044 &     - \\ 
   & Cluster & -0.700 & -0.706 &  0.086 & -0.084 & -0.046 & -0.161 &     - \\[3pt]  
 8 & Obs. & -0.686 & -0.715 &  0.138 & -0.063 & -0.043 & -0.065 & -0.153 \\ 
   & Bernoulli & -0.681 & -0.633 &  0.133 & -0.065 & -0.039 & -0.068 & -0.098 \\ 
   & Block & -0.703 & -0.706 &  0.138 & -0.067 & -0.043 & -0.044 &     - \\ 
   & Cluster & -0.731 & -0.611 &  0.141 & -0.071 & -0.038 & -0.223 &     - \\[3pt] 
\multicolumn{5}{l}{Time-varying external and internal hazards} \\ [3pt] 
 2 & Obs. & -0.596 & -0.552 &  0.025 & -0.064 & -0.017 & -0.055 & -0.014 \\ 
   & Bernoulli & -0.714 & -0.665 &  0.032 & -0.077 & -0.022 & -0.066 & -0.047 \\ 
   & Block & -0.575 & -0.685 &  0.025 & -0.063 & -0.021 & -0.026 &     - \\ 
   & Cluster & -0.721 & -0.630 &  0.032 & -0.078 & -0.021 & -0.104 &     - \\[3pt]  
 4 & Obs. & -0.706 & -0.739 &  0.087 & -0.084 & -0.046 & -0.072 & -0.061 \\ 
   & Bernoulli & -0.694 & -0.685 &  0.084 & -0.082 & -0.044 & -0.071 & -0.065 \\ 
   & Block & -0.721 & -0.622 &  0.088 & -0.086 & -0.042 & -0.044 &     - \\ 
   & Cluster & -0.700 & -0.706 &  0.086 & -0.084 & -0.046 & -0.161 &     - \\[3pt]  
 8 & Obs. & -0.686 & -0.715 &  0.138 & -0.063 & -0.043 & -0.065 & -0.153 \\ 
   & Bernoulli & -0.681 & -0.633 &  0.133 & -0.065 & -0.039 & -0.068 & -0.098 \\ 
   & Block & -0.703 & -0.706 &  0.138 & -0.067 & -0.043 & -0.044 &     - \\ 
   & Cluster & -0.731 & -0.611 &  0.141 & -0.071 & -0.038 & -0.223 &     - \\ 
\bottomrule
\end{tabular}
\caption{Simulation results showing estimates of hazard coefficients, exposure-marginalized causal estimands, and direct and indirect effect under $e^{\beta_1}=0.5$ and $e^{\beta_2}=0.5$. Estimands are evaluated under three transmission scenarios -- constant external and internal hazards of $\alpha(t)=0.3$ and $\gamma(t)=3$, constant external hazard and time-varying internal hazard of $\alpha(t)=0.3$ and $\gamma(t)=6 e^{-0.5(t-t_j)}$, and time-varying internal and external hazard of $\alpha(t)=0.2 (1+\sin (2 \pi t+\frac{\pi}{2}))$ and $\gamma(t)=6 e^{-0.5(t-t_j)}$. The individual covariates are correlated with $\rho=0.1$ and have coefficients $e^{\theta_1}=e^{\theta_2}=0.9$. Clusters of $2$, $4$, and $8$ are observed at $0.4$ year, $0.3$ year, and $0.2$ year, respectively. }
\label{tab:estimation2}
\end{table}

\begin{table}
\small
\centering
\begin{tabular}{ccccccccc}
\toprule
 \multirow{2}{*}{Cluster}& \multirow{2}{*}{Treatment} & \multicolumn{2}{c}{Hazard estimands} & \multicolumn{5}{c}{Probability estimands} \\
\cmidrule(lr){3-4} \cmidrule(lr){5-9}
& & $\hat{\beta}_1$ & $\hat{\beta}_2$ & $\hat{CE}(t,0,\mathbf{0},\mathbf{1})$ & $\hat{SE}(t,\mathbf{0})$ & $\hat{IE}(t,0,\mathbf{0})$ & $DE(t)$ & $IDE(t)$ \\ 
\hline \\[-7pt]
\multicolumn{5}{l}{Constant external and internal hazards} \\[3pt]
 2 & Obs. & -0.721 &  0.029 &  0.024 & -0.077 &  0.001 & -0.068 &  -0.002 \\ 
    & Bernoulli & -0.715 & -0.105 &  0.022 & -0.074 & -0.005 & -0.068 & -0.025 \\ 
    & Block & -0.742 & -0.203 &  0.024 & -0.076 & -0.008 & -0.052 &     - \\
    & Cluster & -0.755 & -0.028 &  0.024 & -0.077 &  0.000 & -0.091 &     - \\[3pt] 
  4 & Obs. & -0.671 & -0.039 &  0.048 & -0.070 & -0.002 & -0.066 & -0.016 \\ 
    & Bernoulli & -0.686 & -0.102 &  0.050 & -0.073 & -0.006 & -0.065 & -0.039 \\ 
    & Block & -0.700 & -0.159 &  0.052 & -0.074 & -0.008 & -0.051 &     - \\
    & Cluster & -0.677 & -0.195 &  0.051 & -0.073 & -0.011 & -0.115 &     - \\[3pt]
  8 & Obs. & -0.691 & -0.161 &  0.093 & -0.080 & -0.014 & -0.066 &  0.038 \\ 
    & Bernoulli & -0.688 & -0.150 &  0.092 & -0.076 & -0.011 & -0.065 & -0.055 \\ 
    & Block & -0.690 & -0.098 &  0.091 & -0.078 & -0.007 & -0.056 &     - \\
    & Cluster & -0.648 & -0.216 &  0.088 & -0.071 & -0.015 & -0.149 &     - \\[3pt] 
\multicolumn{5}{l}{Constant external hazard and time-varying internal hazard} \\ [3pt]  
  2 & Obs. & -0.648 & -0.090 &  0.028 & -0.071 & -0.002 & -0.069 & -0.037 \\ 
    & Bernoulli & -0.722 & -0.083 &  0.032 & -0.078 & -0.003 & -0.067 & -0.037 \\ 
    & Block & -0.576 & -0.089 &  0.025 & -0.063 & -0.003 & -0.036 &     - \\
    & Cluster & -0.721 & -0.028 &  0.032 & -0.078 & -0.001 & -0.097 &   - \\[3pt] 
  4 & Obs. & -0.703 & -0.066 &  0.078 & -0.080 & -0.004 & -0.071 & -0.043 \\ 
    & Bernoulli & -0.686 & -0.123 &  0.081 & -0.081 & -0.008 & -0.071 & -0.051 \\ 
    & Block & -0.720 & -0.040 &  0.085 & -0.087 & -0.003 & -0.050 &     - \\
    & Cluster & -0.700 & -0.086 &  0.082 & -0.083 & -0.006 & -0.141 &     - \\[3pt]
  8 & Obs. & -0.681 & -0.144 &  0.127 & -0.066 & -0.008 & -0.071 & -0.210 \\ 
    & Bernoulli & -0.681 & -0.059 &  0.123 & -0.065 & -0.003 & -0.067 & -0.078 \\ 
    & Block & -0.695 & -0.121 &  0.127 & -0.067 & -0.006 & -0.047 &     - \\
    & Cluster & -0.731 & -0.035 &  0.131 & -0.072 & -0.002 & -0.190 &     - \\[3pt]
\multicolumn{5}{l}{Time-varying external and internal hazards} \\ [3pt]
  2 & Obs. & -0.648 & -0.090 &  0.028 & -0.071 & -0.002 & -0.069 & -0.037 \\ 
    & Bernoulli & -0.722 & -0.083 &  0.032 & -0.078 & -0.003 & -0.067 & -0.037 \\ 
    & Block & -0.576 & -0.089 &  0.025 & -0.063 & -0.003 & -0.036 &     - \\
    & Cluster & -0.721 & -0.028 &  0.032 & -0.078 & -0.001 & -0.097 &     - \\[3pt]
  4 & Obs. & -0.703 & -0.066 &  0.078 & -0.080 & -0.004 & -0.071 & -0.043 \\ 
    & Bernoulli & -0.686 & -0.123 &  0.081 & -0.081 & -0.008 & -0.071 & -0.051 \\ 
    & Block & -0.720 & -0.040 &  0.085 & -0.087 & -0.003 & -0.050 &     - \\
    & Cluster & -0.700 & -0.086 &  0.082 & -0.083 & -0.006 & -0.141 &     - \\[3pt]
  8 & Obs. & -0.681 & -0.144 &  0.127 & -0.066 & -0.008 & -0.071 & -0.210 \\ 
    & Bernoulli & -0.681 & -0.059 &  0.123 & -0.065 & -0.003 & -0.067 & -0.078 \\ 
    & Block & -0.695 & -0.121 &  0.127 & -0.067 & -0.006 & -0.047 &     - \\
    & Cluster & -0.731 & -0.035 &  0.131 & -0.072 & -0.002 & -0.190 &     - \\
\bottomrule
\end{tabular}
\caption{Simulation results showing estimates of hazard coefficients, exposure-marginalized causal estimands, and direct and indirect effect under $e^{\beta_1}=0.5$ and $e^{\beta_2}=0.9$. Estimands are evaluated under three transmission scenarios -- constant external and internal hazards of $\alpha(t)=0.3$ and $\gamma(t)=3$, constant external hazard and time-varying internal hazard of $\alpha(t)=0.3$ and $\gamma(t)=6 e^{-0.5(t-t_j)}$, and time-varying internal and external hazard of $\alpha(t)=0.2 (1+\sin (2 \pi t+\frac{\pi}{2}))$ and $\gamma(t)=6 e^{-0.5(t-t_j)}$. The individual covariates are correlated with $\rho=-0.1$ and have coefficients $e^{\theta_1}=e^{\theta_2}=0.9$. Clusters of $2$, $4$, and $8$ are observed at $0.4$ year, $0.3$ year, and $0.2$ year, respectively. }
\label{tab:estimation3}
\end{table}

\begin{table}
\small
\centering
\begin{tabular}{ccccccccc}
\toprule
 \multirow{2}{*}{Cluster}& \multirow{2}{*}{Treatment} & \multicolumn{2}{c}{Hazard estimands} & \multicolumn{5}{c}{Probability estimands} \\
\cmidrule(lr){3-4} \cmidrule(lr){5-9}
& & $\hat{\beta}_1$ & $\hat{\beta}_2$ & $\hat{CE}(t,0,\mathbf{0},\mathbf{1})$ & $\hat{SE}(t,\mathbf{0})$ & $\hat{IE}(t,0,\mathbf{0})$ & $DE(t)$ & $IDE(t)$ \\ 
\hline \\[-7pt]
\multicolumn{5}{l}{Constant external and internal hazards} \\[3pt]
 2 & Obs. & -0.122 & -2.481 &  0.006 & -0.016 & -0.038 & -0.016 & -0.056 \\ 
   & Bernoulli & -0.115 & -2.276 &  0.005 & -0.015 & -0.035 & -0.012 & -0.039 \\ 
   & Block & -0.113 & -2.265 &  0.005 & -0.015 & -0.036 &  0.021 &     - \\ 
   & Cluster & -0.139 & -2.477 &  0.006 & -0.016 & -0.038 & -0.054 &     - \\[3pt]
 4 & Obs. & -0.111 & -2.412 &  0.023 & -0.013 & -0.083 & -0.011 & -0.067 \\ 
   & Bernoulli & -0.077 & -2.287 &  0.022 & -0.010 & -0.078 & -0.007 & -0.063 \\ 
   & Block & -0.084 & -2.326 &  0.023 & -0.011 & -0.080 &  0.019 &     - \\ 
   & Cluster & -0.083 & -2.377 &  0.024 & -0.011 & -0.082 & -0.097 &     - \\ [3pt]
 8 & Obs. & -0.094 & -2.242 &  0.065 & -0.013 & -0.128 & -0.008 & -0.183 \\
   & Bernoulli & -0.065 & -2.360 &  0.060 & -0.008 & -0.127 & -0.006 & -0.092 \\ 
   & Block & -0.120 & -2.319 &  0.066 & -0.014 & -0.129 &  0.009 &     - \\ 
   & Cluster & -0.050 & -2.554 &  0.060 & -0.005 & -0.133 & -0.151 &     - \\[3pt]
\multicolumn{5}{l}{Constant external hazard and time-varying internal hazard} \\ [3pt]   
 2 & Obs. &  0.006 & -2.241 &  0.000 &  0.000 & -0.046 &  0.004 & -0.030 \\ 
   & Bernoulli & -0.097 & -2.268 &  0.006 & -0.011 & -0.050 & -0.008 & -0.058 \\ 
   & Block & -0.009 & -2.267 &  0.001 & -0.001 & -0.046 &  0.044 &     - \\ 
   & Cluster & -0.136 & -2.310 &  0.008 & -0.017 & -0.051 & -0.066 &     - \\[3pt] 
 4 & Obs. & -0.114 & -2.357 &  0.040 & -0.014 & -0.119 & -0.015 & -0.079 \\ 
   & Bernoulli & -0.109 & -2.319 &  0.042 & -0.016 & -0.118 & -0.015 & -0.088 \\ 
   & Block & -0.105 & -2.294 &  0.041 & -0.015 & -0.115 &  0.030 &     - \\ 
   & Cluster & -0.115 & -2.284 &  0.042 & -0.017 & -0.117 & -0.136 &     - \\[3pt] 
 8 & Obs. & -0.108 & -2.286 &  0.089 & -0.010 & -0.163 & -0.011 & -0.180 \\ 
   & Bernoulli & -0.074 & -2.260 &  0.082 & -0.007 & -0.160 & -0.009 & -0.126 \\ 
   & Block & -0.114 & -2.412 &  0.090 & -0.013 & -0.171 &  0.021 &     - \\ 
   & Cluster & -0.127 & -2.225 &  0.090 & -0.013 & -0.160 & -0.217 &     - \\[3pt] 
\multicolumn{5}{l}{Time-varying external and internal hazards} \\ [3pt]
 2 & Obs. &  0.006 & -2.241 &  0.000 &  0.000 & -0.046 &  0.004 & -0.030 \\ 
   & Bernoulli & -0.097 & -2.268 &  0.006 & -0.011 & -0.050 & -0.008 & -0.058 \\ 
   & Block & -0.009 & -2.267 &  0.001 & -0.001 & -0.046 &  0.044 &     - \\ 
   & Cluster & -0.136 & -2.310 &  0.008 & -0.017 & -0.051 & -0.066 &     - \\[3pt] 
 4 & Obs. & -0.114 & -2.357 &  0.040 & -0.014 & -0.119 & -0.015 & -0.079 \\ 
   & Bernoulli & -0.109 & -2.319 &  0.042 & -0.016 & -0.118 & -0.015 & -0.088 \\ 
   & Block & -0.105 & -2.294 &  0.041 & -0.015 & -0.115 &  0.030 &     - \\ 
   & Cluster & -0.115 & -2.284 &  0.042 & -0.017 & -0.117 & -0.136 &     - \\ [3pt]
 8 & Obs. & -0.108 & -2.286 &  0.089 & -0.010 & -0.163 & -0.011 & -0.180 \\ 
   & Bernoulli & -0.074 & -2.260 &  0.082 & -0.007 & -0.160 & -0.009 & -0.126 \\ 
   & Block & -0.114 & -2.412 &  0.090 & -0.013 & -0.171 &  0.021 &     - \\ 
   & Cluster & -0.127 & -2.225 &  0.090 & -0.013 & -0.160 & -0.217 &     - \\ 
\bottomrule
\end{tabular}
\caption{Simulation results showing estimates of hazard coefficients, exposure-marginalized causal estimands, and direct and indirect effect under $e^{\beta_1}=0.9$ and $e^{\beta_2}=0.1$. Estimands are evaluated under three transmission scenarios -- constant external and internal hazards of $\alpha(t)=0.3$ and $\gamma(t)=3$, constant external hazard and time-varying internal hazard of $\alpha(t)=0.3$ and $\gamma(t)=6 e^{-0.5(t-t_j)}$, and time-varying internal and external hazard of $\alpha(t)=0.2 (1+\sin (2 \pi t+\frac{\pi}{2}))$. The individual covariates are correlated with $\rho=0.1$ and have coefficients $e^{\theta_1}=e^{\theta_2}=0.9$. Clusters of $2$, $4$, and $8$ are observed at $0.4$ year, $0.3$ year, and $0.2$ year, respectively. }
\label{tab:estimation4}
\end{table}

\begin{figure}
    \centering 
\begin{subfigure}{0.49\textwidth}
  \includegraphics[width=\linewidth]{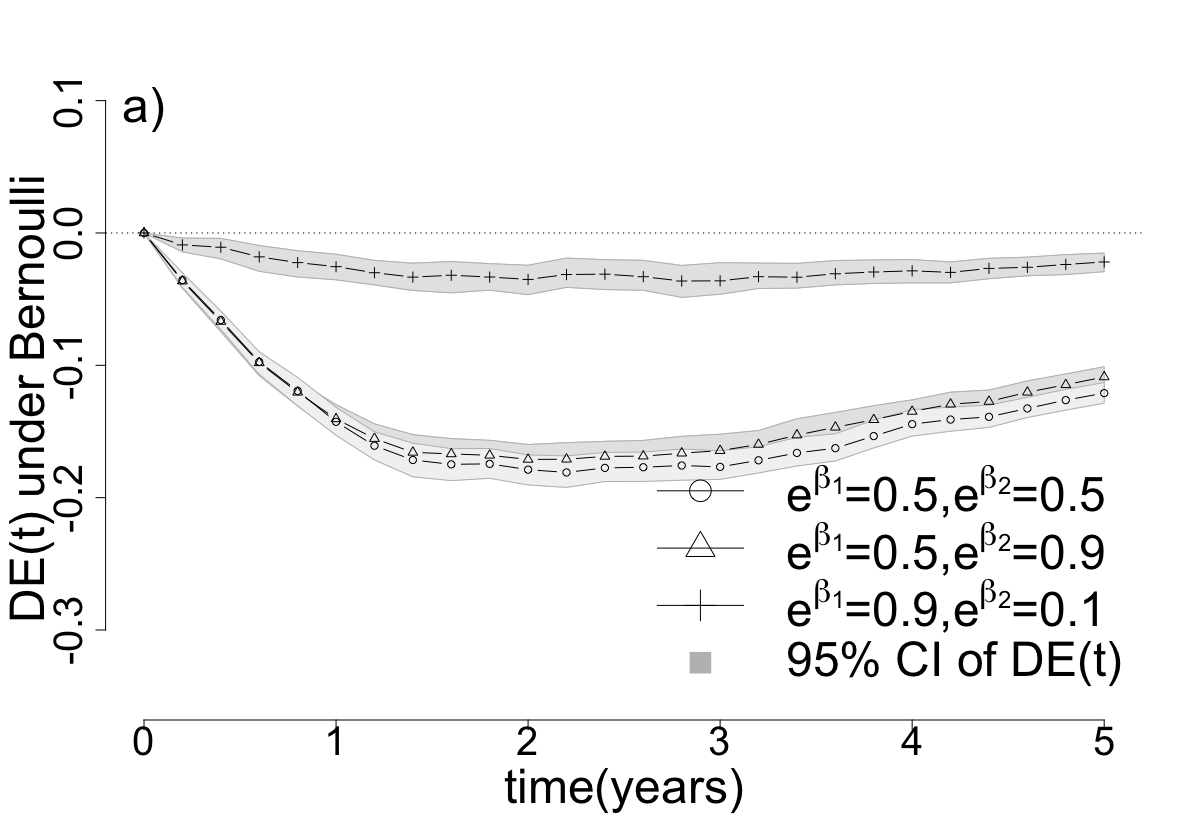}
\end{subfigure}\hfil 
\begin{subfigure}{0.49\textwidth}
  \includegraphics[width=\linewidth]{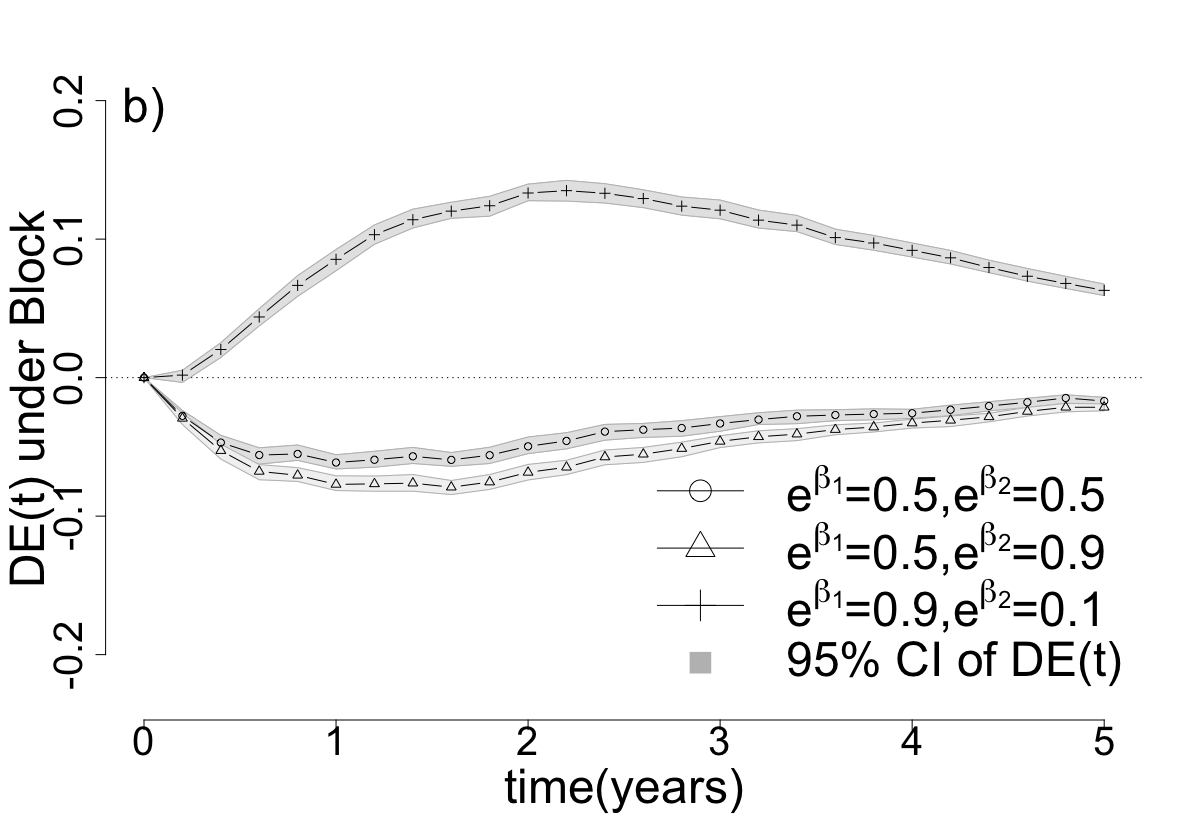}
\end{subfigure}\hfil 
\begin{subfigure}{0.49\textwidth}
  \includegraphics[width=\linewidth]{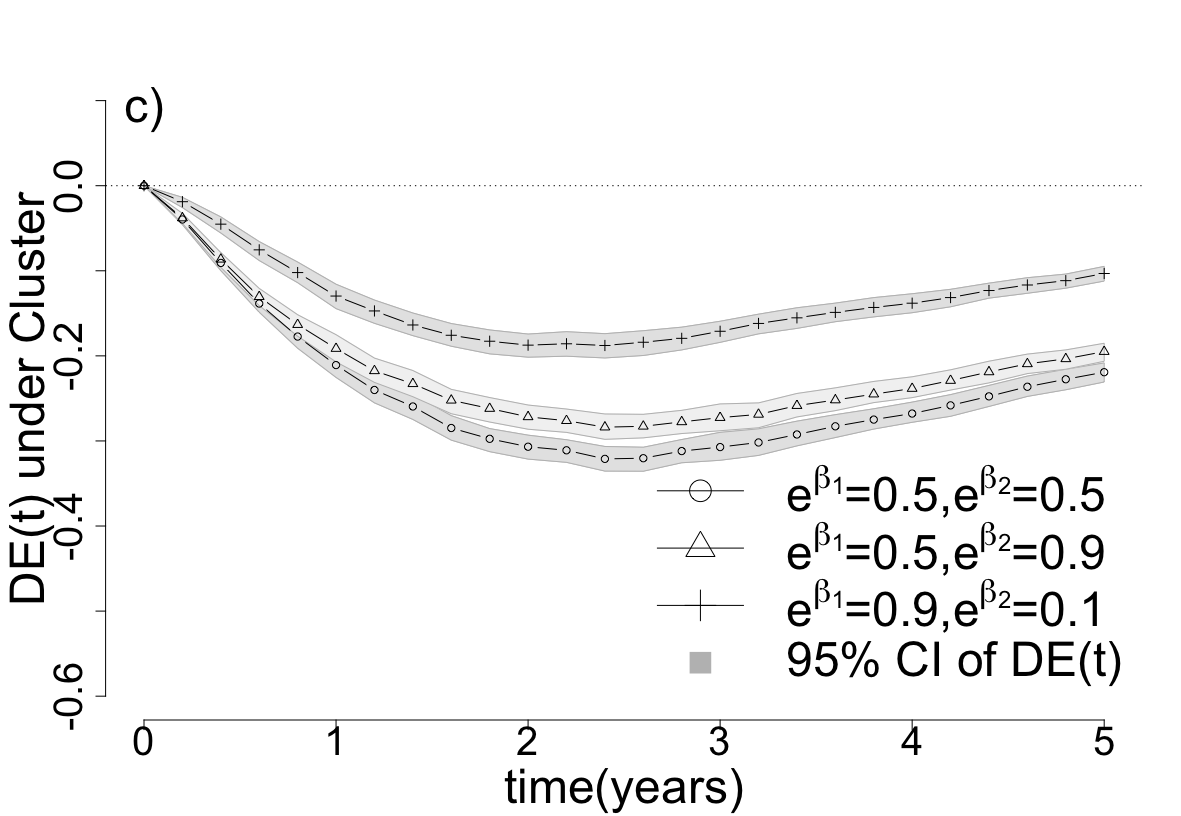}
\end{subfigure}\hfil 
\begin{subfigure}{0.49\textwidth}
  \includegraphics[width=\linewidth]{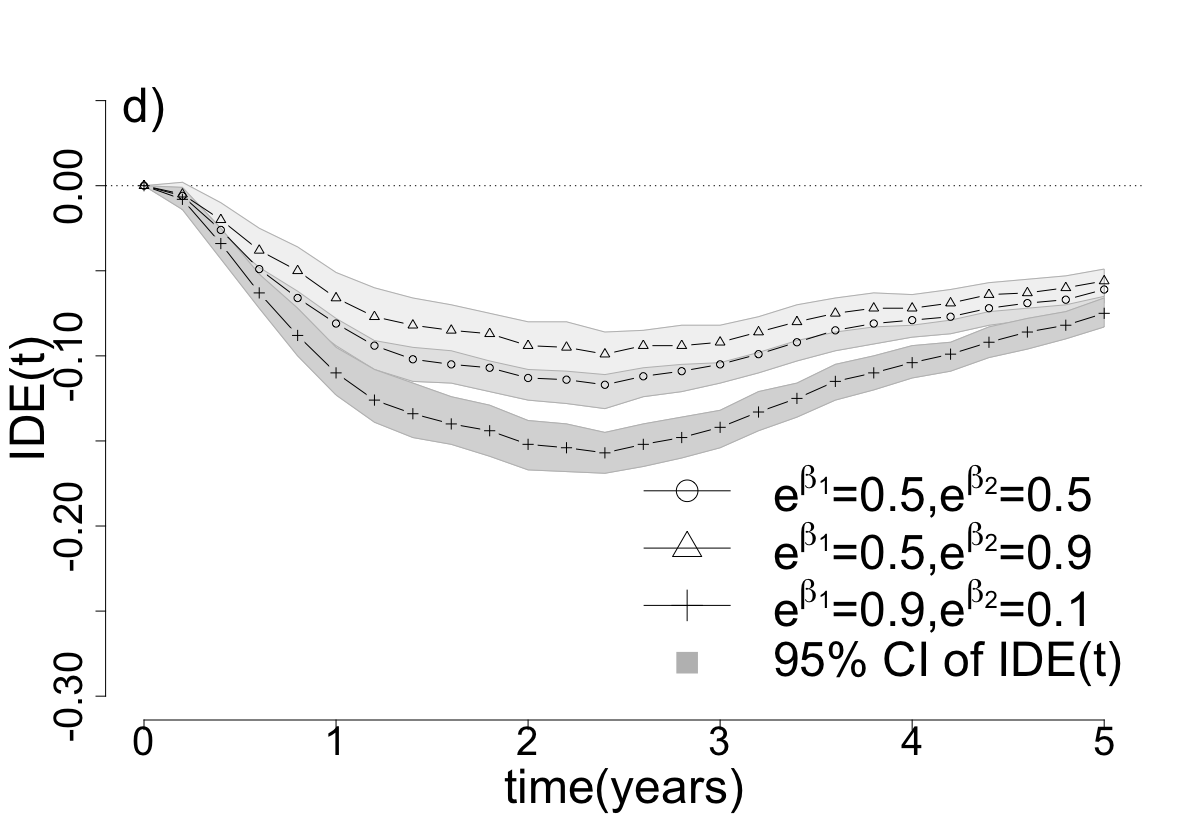}
\end{subfigure}\hfil 
\caption{Illustration of $\DE(t)$ under three randomization schemes and $\IDE(t)$, when the infectiousness effect is relatively strong, moderate and weak compared to the susceptibility effect. All graphs are generated under constant hazard scenario of $\alpha=0.3$, $\gamma=3$ in clusters of 2 individuals. Figures a), b) and c) show the $\DE(t)$ over time when individuals in clusters are under Bernoulli, Block, and Cluster randomization, respectively. Figure d) shows the $\IDE(t)$ over time in the two-stage randomization scheme. $\DE(t)$ performs poorly under block randomization, giving an effect with sign opposite that of the true susceptibility effect under one parameter setting. Related results are shown in Figures \ref{fig:shiftinpeak} and \ref{fig:correlatedX}.} 
\label{fig:compareDEIDE}
\end{figure}

\begin{figure}
  \centering 
\begin{subfigure}{0.49\textwidth}
  \includegraphics[width=\linewidth]{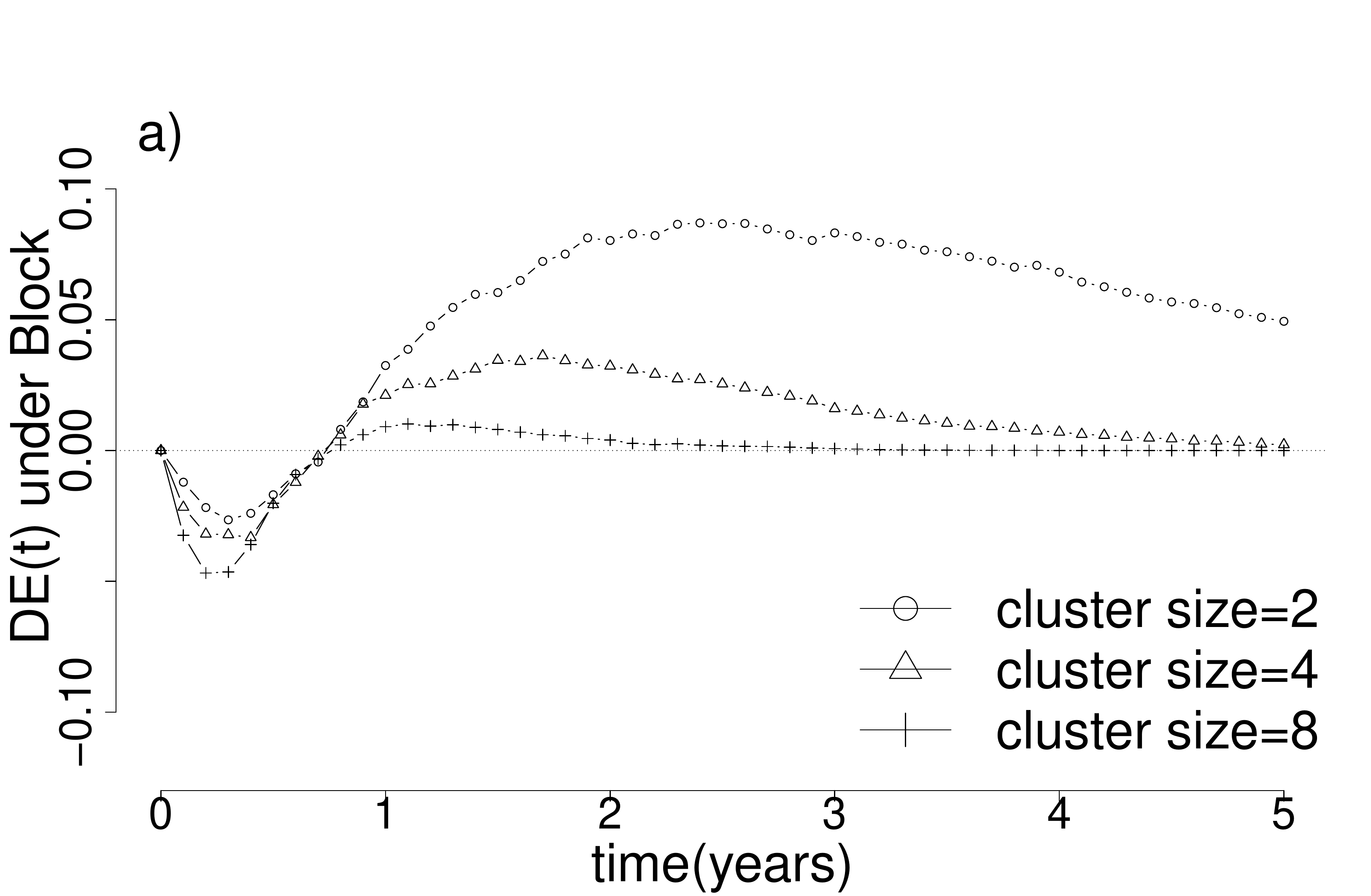}
\end{subfigure}\hfil 
\begin{subfigure}{0.49\textwidth}
  \includegraphics[width=\linewidth]{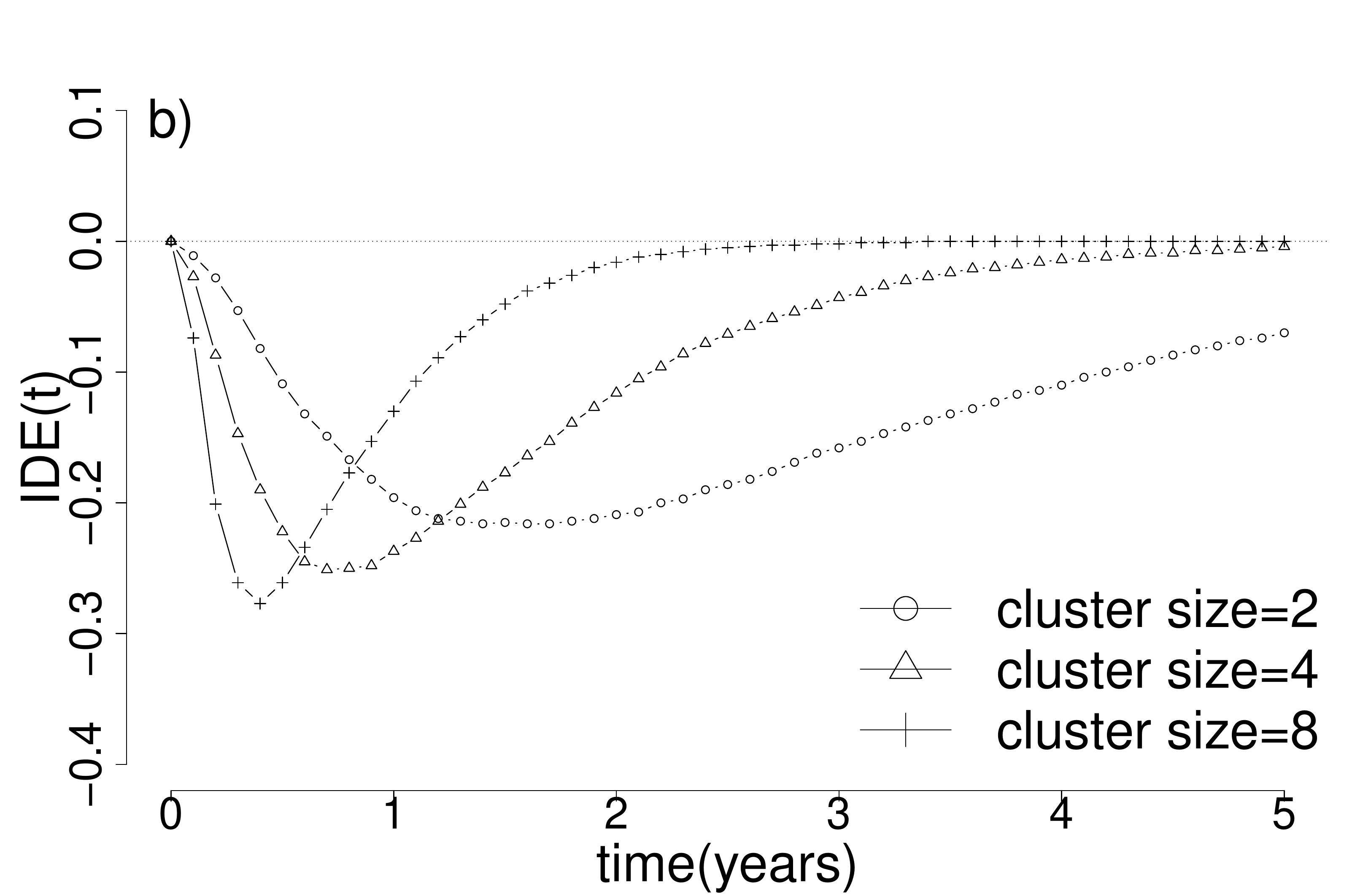}
\end{subfigure}\hfil 
\caption{Illustration of $\DE(t)$ under block randomization and $\IDE(t)$ under two-stage randomization in clusters of $2$, $4$, and $8$. Simulations are generated under $\alpha(t)=0.3$, $\gamma(t)=3$, $e^{\beta_2}=0.1$ and $e^{\beta_1}=0.9$, so that the infectiousness effect is stronger than susceptibility effect of the vaccine.  Figure a) shows $\DE(t)$ may have an opposite sign to $\beta_1<0$ under block randomization, and how the peak of this directional bias shifts with cluster size. As cluster size increases, each individual is generally exposed to greater number of infected neighbors and thus the infection process is accelerated. Figure b) shows the $\IDE(t)$ has the same sign as $\beta_2<0$, and how its peak shifts with cluster size.  } 
\label{fig:shiftinpeak}
\end{figure}


\subsection{Comparison to other estimands}

Statisticians and epidemiologists have proposed a wide variety of estimands summarizing the effect of interventions for infectious disease outcomes. 
In this section, we evaluate the performance of two commonly used estimands -- direct effect ($DE$) and indirect effect ($IDE$) defined in the two-stage cluster randomization design \cite{halloran1997study,halloran1999design,hudgens2008toward,halloran2010design,perez2014assessing,halloran2016dependent} -- and compare them to our estimands under the same simulation scenarios. We also employ GEE regression models to adjust for covariates in the estimation of $\DE$ and $\IDE$ in the observational study setting.

The ``attack risk'' of an infectious disease for individuals with treatment $x$ is defined as $\AR_x(t) = \E[Y_i(t)|X_i=x]$. A ratio of attack risks under different treatments (sometimes called ``relative cumulative incidence'') has been used to measure the vaccine effect on susceptibility, defined as $\VE_{\AR}(t) = 1- \AR_1(t)/\AR_0(t)$ \citep{greenwood1915statistics,francis1955evaluation}. A related estimand defined on the difference scale is referred to as the ``direct effect'' \citep{hudgens2008toward}, $\DE(t) = \AR_1(t) - \AR_0(t)$. Put simply, $DE(t)$ is the contrast of expectations of infection outcomes between treated individuals versus untreated individuals. In the Bernoulli setting, attack risks $\AR_x(t)$ conditioning only on treatment $X_i$ implicitly marginalize over the treatments assigned to other individuals within the same cluster. In the block randomization setting, we compare outcomes of treated individuals to those of untreated individuals in clusters where exactly half of individuals are treated and untreated. In the cluster randomization setting, we compare the outcome of individuals in treated clusters to those in control clusters. In the observational setting, $DE(t)$ is estimated with GEE regression model under the logit link function and exchangeable correlation structure to adjust for covariates.
In Tables \ref{tab:estimation1}--\ref{tab:estimation4}, we show the estimated $\DE(t)$ under three randomization strategies and under an observational setting. Figure \ref{fig:compareDEIDE} (a)--(c) show $\DE(t)$ over time when the infectiousness effect is relatively strong, moderate and weak compared to the susceptibility effect. 
Figure \ref{fig:shiftinpeak} (a) shows how $\DE(t)$ changes with cluster size under block randomization when the infectiousness effect is much stronger than the susceptibility effect.

\citet{halloran1991direct,halloran1994exposure,halloran1997study} and \citet{rhodes1996counting} warned that $\VE_{\AR}(t)$ (or equivalently, $\DE(t)$) may be a biased approximation to the susceptibility effect due to differential exposure to infection between treated and untreated individuals in clusters. This differential exposure to infection is due to different distributions of neighbors' infection times among treated and untreated, which cannot always be controlled by randomization at baseline. \citet{morozova2018risk}, \citet{eck2019randomization} and \citet{cai2021identification} proved similar results. 

Our simulation results also demonstrate the potential bias in  $\DE(t)$. In particular, we show that under block randomization, $DE(t)$ can have a different sign from that of $\hat{\beta_1}$ and $\hat{\SE}(t,\mathbf{0})$ in scenarios in Table \ref{tab:estimation4}  where the infectiousness effect is much stronger than the susceptibility effect. Figures \ref{fig:compareDEIDE} (b) and \ref{fig:shiftinpeak} (a) show how this bias changes with time and cluster size. We see that this bias can still be substantial in clusters of larger size. 
To better illustrate the reasons behind for misleading performance of $\DE(t)$, we simulate outcomes under an intervention whose true susceptibility effect is null with $\beta_1=0$ and whose true infectiousness effect is negative (protective) with $\beta_2<0$, and simulate an infectious disease with $\alpha(t)=0.3$ and $\gamma(t)=3$. In Figure~\ref{fig:correlatedX}, we show the calculated $\DE(t)$ over time for different cluster sizes when treatment assignment for $X_i$ and $X_j$ ($i \neq j$) is positively correlated and when treatment assignment for $X_i$ and $X_j$ ($i \neq j$) is negatively correlated, achieved by making $\rho$ positive and negative respectively in \eqref{eq:rho}. We see that for positively correlated treatment assignment, $\DE(t)>0$ for all $t>0$ and all cluster sizes, and similarly for negatively correlated treatment assignment, $\DE(t)<0$ for all $t>0$ and all cluster sizes.  Theorem 2 of \citet{cai2021identification} provides a nonparametric proof of this apparent paradox in the two-person partnership scenario.

\begin{figure}
  \centering 
\begin{subfigure}{0.49\textwidth}
  \includegraphics[width=\linewidth]{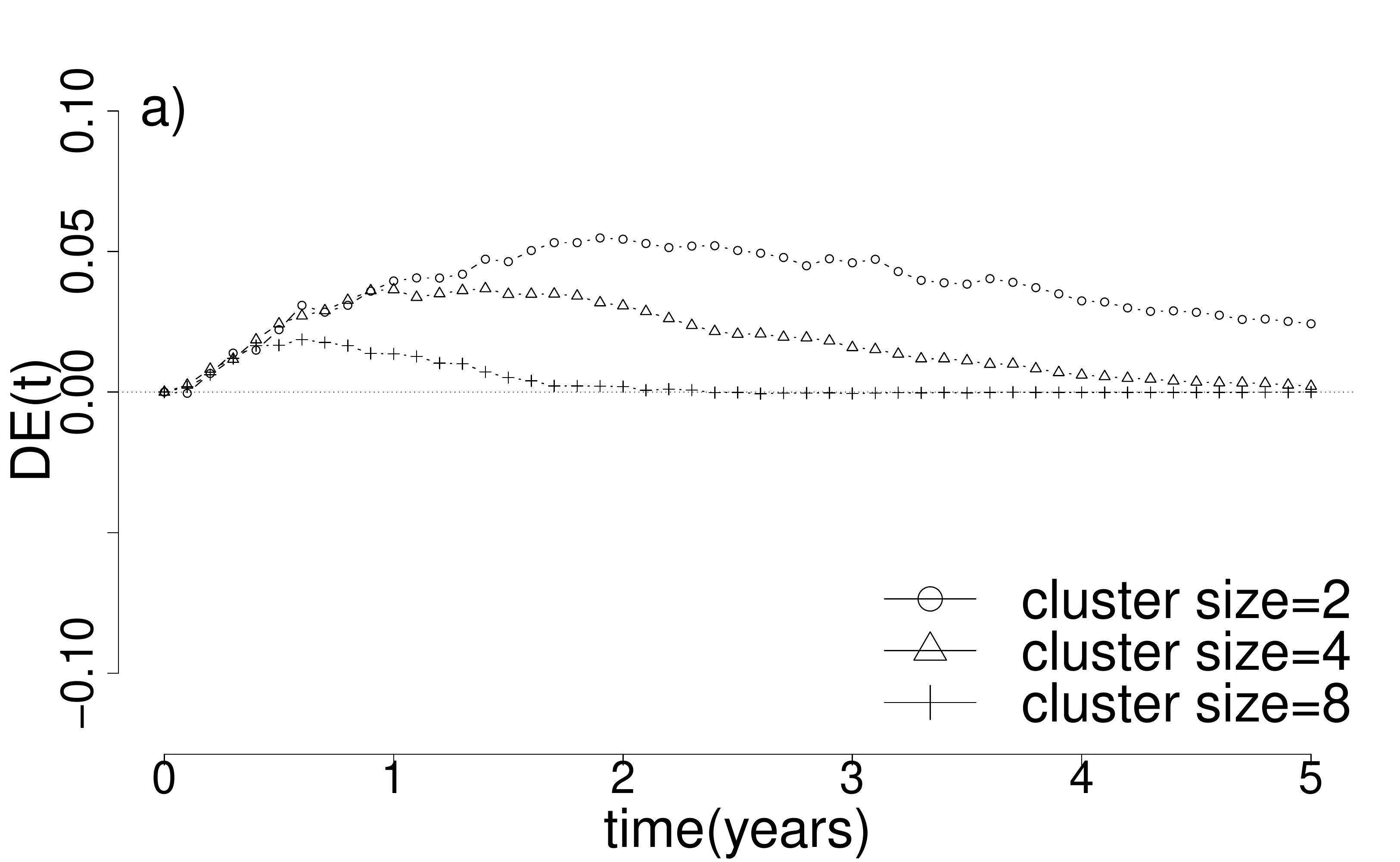}
\end{subfigure}\hfil 
\begin{subfigure}{0.49\textwidth}
  \includegraphics[width=\linewidth]{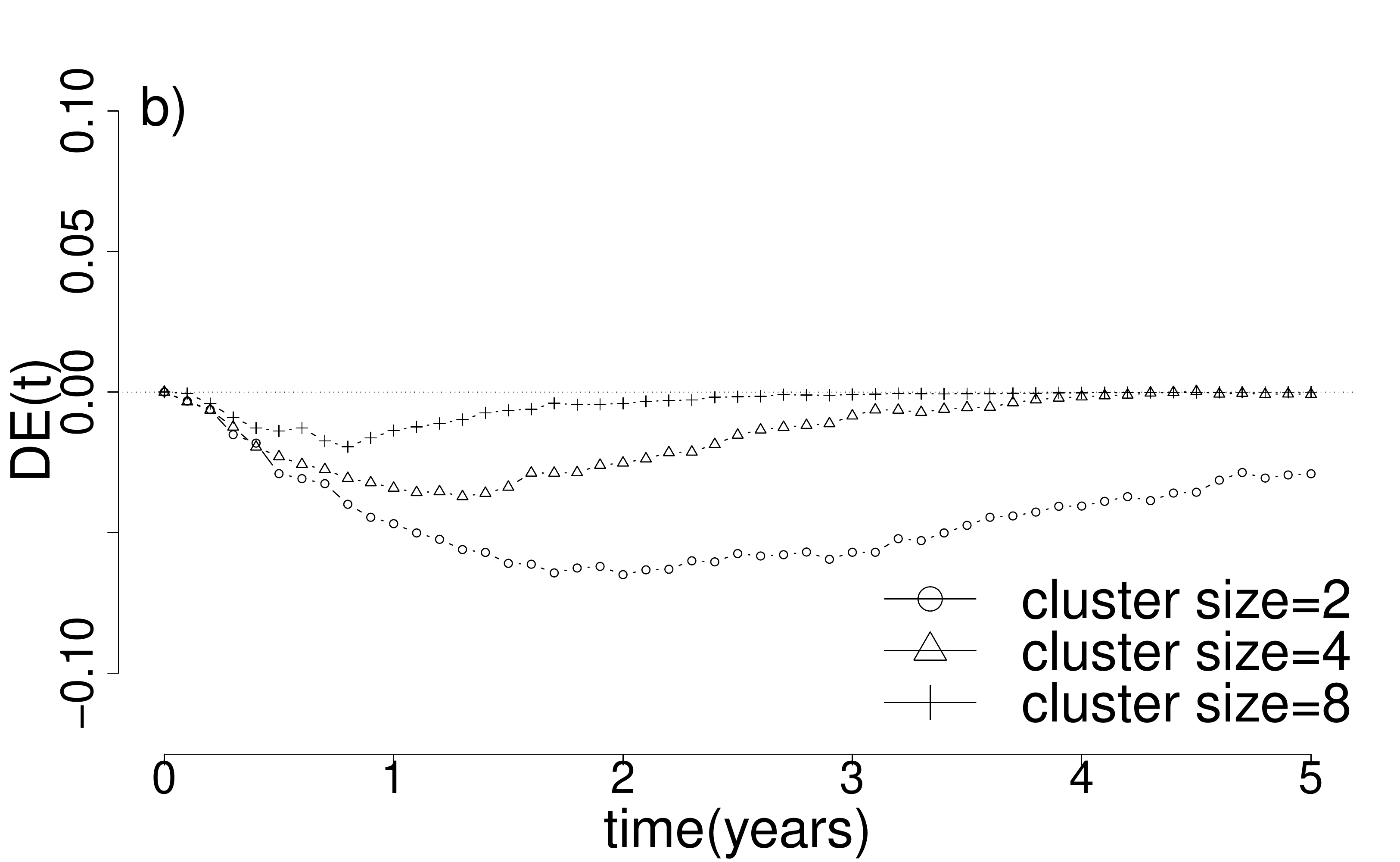}
\end{subfigure}\hfil 
\caption{Illustration of $\DE(t)$ when treatment assignment is correlated randomized in clusters of $2$, $4$, and $8$. Simulations are generated under $\alpha(t)=0.3$, $\gamma(t)=3$, $e^{\beta_1}=1$ and $e^{\beta_2}=0.1$, so that the susceptibility effect of the vaccine is null and the infectiousness effect decreases the transmission ability by $90\%$.  Figure a) shows $\DE(t)$ is strictly positive, contrary to $\beta_1=0$ when treatment assignment is negatively correlated with $\rho=-0.9$ for cluster size of 2, $\rho=-0.3$ for cluster size of 4, and $\rho=-0.1$ for cluster size of 8. Figure b) shows the $\DE(t)$ is strictly negative, contrary to $\beta_1=0$ when treatment assignment is positively correlated with $\rho=0.9$ for cluster size of 2, $\rho=0.3$ for cluster size of 4, and $\rho=0.1$ for cluster size of 8.} 
\label{fig:correlatedX}
\end{figure}

To capture the indirect effect of others' treatments on individual $i$, researchers have defined indirect effects under a two-stage randomization design \citep{struchiner1990behaviour,halloran1991direct,halloran1991study,halloran1995causal,hudgens2006causal,hudgens2008toward}. In the two-stage randomization design, clusters are first randomized to receive either high or low coverage of vaccines, then individuals in each cluster are randomized to receive vaccines according to the coverage rate chosen for this cluster.  In Tables \ref{tab:estimation1}--\ref{tab:estimation4}, we estimate 
\[ \IDE(t)=\sum_{ |\xminusi |=\frac{n}{2}} \E[Y_i(t)|X_i=0,\Xminusi=\xminusi] p(\xminusi)-\sum_{ | \xminusi |=0} \E[Y_i(t)| X_i=0,\X_{(i)}=\xminusi] p(\xminusi) \]
to compare the infection rates between untreated individuals in clusters of $50\%$ coverage rate and untreated individuals in clusters of $0\%$ coverage rate. In the observational setting, we apply the GEE model under the logit link function and exchangeable correlation structure to estimate $\IDE(t)$.  The $\IDE(t)$ estimand preserves the sign of $\beta_2$, as shown in Tables \ref{tab:estimation1}-\ref{tab:estimation4}. In Figure \ref{fig:compareDEIDE} (d), we show $\IDE(t)$ over time under the circumstances when the infectiousness effect is relatively strong, moderate and weak compared to the susceptibility effect. In Figure \ref{fig:shiftinpeak} (b), we show how $\IDE(t)$ changes over study time across different cluster sizes. 

 
\section{Discussion}

Causal dependence between infection outcomes is a major obstacle to identification of meaningful treatment effects in infectious disease epidemiology \citep{struchiner1990behaviour,halloran1991study,halloran1991direct,halloran1995causal,eisenberg2003bias,hudgens2008toward,vanderweele2011bounding,o2014estimating}.  Inferential approaches that neglect this dependence can result in misleading estimates of vaccine effects, even in randomized trials \citep{morozova2018risk,eck2019randomization,cai2021identification}.  In this paper, we have presented a framework for estimating meaningful treatment effects under contagion by clarifying the temporal relationship between infection outcomes in clusters. We describe a new strategy for identifying average potential infection outcomes. The approach proceeds by first identifying the average infection outcome under a deterministic infection exposure history, then integrating this history with respect to a distribution over others' infections, in the situation where the focal individual is not present.  This work is a direct extension of prior work by \citet{cai2021identification} and \citet{ogburn2017vaccines} to the general cluster setting, with no restrictions on the direction of transmission or treatment assignment. 

We define effects that contrast potential infection outcomes when the infection history is either held fixed at a deterministic value (controlled) or integrated (marginalized).  One advantage of this approach is that it permits definition of causal estimands in which the infection history of others (or its distribution) is held constant in a contrast of individual treatments. In this way, the susceptibility effect of Definition \ref{defn:susceptibility} avoids the problem of differential exposure to infection between treated and untreated individuals which arises in the definition of the ``direct effect'' proposed by \citet{hudgens2008toward}. Simulation results show that while crude contrasts like $DE(t)$ may sometimes have the same sign as the true effect when the susceptibility effect is different from zero, they may be biased under certain randomization designs or when the infectiousness effect is strong. 

In this work, we have drawn a connection between nonparametric causal estimands \citep[e.g.][]{ogburn2017vaccines} for infectious disease intervention effects and statistical models for learning about the dynamics of tranmission \citep{kenah2011contact,kenah2015semiparametric,kenah2013non,kenah2008generation}. Hazard models provide a natural and well studied class of structural models for time-to-event data.  The decomposition \eqref{eq:Tdecomposition} and corresponding independence assumptions provide a natural way of turning a complex sequence of infection times into conditionally independent waiting times.  These assumptions can endow certain hazard model parameters, corresponding to covariate effects, with causal interpretations. The hazard model respects the logic of infection disease transmission: no transmission can occur before infection, and multiple waiting times to transmission can elapse concurrently.  Beyond the causal interpretation of hazard ratios described in Section \ref{sec:hazardestimands}, the fitted hazard models provide a convenient statistical framework in which to learn about both exposure-controlled and exposure-marginalized estimands via Theorems \ref{mainidentification} and \ref{thm:exposure}. 

Despite the generality of the proposed framework, the approach has several limitations.  First, it relies on exact observation of infection times. In reality, binary infection status outcomes may be measured either at regular intervals or at follow-up. Statistical tools for interval censoring and/or tied responses could be employed to fit hazard models of the kind proposed here, but identifiability of model parameters could become more challenging.  However, some statistical estimands do not require observation of exact infection times.  In particular, Corollary \ref{cor:obs} shows that under some circumstances, meaningful causal contrasts can be computed by conditioning on the full vector of treatments within the cluster, even when infection times are unobserved.  This suggest that randomized trials that identify average outcomes under fully specified treatment allocations within clusters may reveal causal effects of the kind studied here.  Another limitation of this work is that practical statistical estimation requires specification of a hazard model with parametric or semiparametric structure. Lack of knowledge about the true structure of the distribution of pairwise transmission waiting times could result in bias.  A potential mitigating factor is that statisticians have access to a wide variety of models for time-to-event data.  Almost any of these could be usefully employed in \eqref{eq:haz} if the requisite conditional independence assumptions are met.



\textbf{Acknowledgements}: This work was supported by National Institutes of Health (NIH) grants 1DP2HD09179 and R01 AI116770. 
We are grateful to 
Peter M. Aronow, 
Olga Morozova,
Elizabeth Ogburn, 
and
Virginia E. Pitzer.

\singlespacing
\bibliographystyle{unsrtnat}

\bibliography{contagion2}

\begin{thebibliography}{60}
\providecommand{\natexlab}[1]{#1}
\providecommand{\url}[1]{\texttt{#1}}
\expandafter\ifx\csname urlstyle\endcsname\relax
  \providecommand{\doi}[1]{doi: #1}\else
  \providecommand{\doi}{doi: \begingroup \urlstyle{rm}\Url}\fi

\bibitem[Organization(2009)]{world2009global}
World~Health Organization.
\newblock \emph{Global health risks: mortality and burden of disease
  attributable to selected major risks}.
\newblock World Health Organization, 2009.

\bibitem[for Disease~Control and Prevention(2011)]{cdc2011}
Centers for Disease~Control and Prevention.
\newblock A {CDC} framework for preventing infectious disease: sustaining the
  essentials and innovating for the future.
\newblock \emph{Centers for Disease Control and Prevention, Atlanta}, 2011.

\bibitem[Halloran et~al.(2010)Halloran, Longini, and
  Struchiner]{halloran2010design}
M.~Elizabeth Halloran, Ira~M. Longini, Jr, and Claudio~J. Struchiner.
\newblock \emph{Design and Analysis of Vaccine Studies}.
\newblock Springer, 2010.

\bibitem[Longini et~al.(1988)Longini, Koopman, Haber, and
  Cotsonis]{longini1988statistical}
Ira~M. Longini, Jr, James~S Koopman, Michael Haber, and George~A Cotsonis.
\newblock Statistical inference for infectious diseases: risk-specific
  household and community transmission parameters.
\newblock \emph{American Journal of Epidemiology}, 128\penalty0 (4):\penalty0
  845--859, 1988.

\bibitem[Struchiner et~al.(1990)Struchiner, Halloran, Robins, and
  Spielman]{struchiner1990behaviour}
Claudio~J Struchiner, Mary~Elizabeth Halloran, James~M Robins, and Andrew
  Spielman.
\newblock The behaviour of common measures of association used to assess a
  vaccination programme under complex disease transmission patterns-a computer
  simulation study of malaria vaccines.
\newblock \emph{International Journal of Epidemiology}, 19\penalty0
  (1):\penalty0 187--196, 1990.

\bibitem[Halloran et~al.(1991)Halloran, Haber, Longini, and
  Struchiner]{halloran1991direct}
M.~Elizabeth Halloran, Michael Haber, Ira~M. Longini, Jr, and Claudio~J.
  Struchiner.
\newblock Direct and indirect effects in vaccine efficacy and effectiveness.
\newblock \emph{American Journal of Epidemiology}, 133\penalty0 (4):\penalty0
  323--331, 1991.

\bibitem[O'Hagan et~al.(2014)O'Hagan, Lipsitch, and
  Hern{\'a}n]{o2014estimating}
Justin~J. O'Hagan, Marc Lipsitch, and Miguel~A. Hern{\'a}n.
\newblock Estimating the per-exposure effect of infectious disease
  interventions.
\newblock \emph{Epidemiology}, 25\penalty0 (1):\penalty0 134--138, 2014.

\bibitem[Sobel(2006)]{sobel2006randomized}
Michael~E Sobel.
\newblock What do randomized studies of housing mobility demonstrate?: {C}ausal
  inference in the face of interference.
\newblock \emph{Journal of the American Statistical Association}, 101\penalty0
  (476):\penalty0 1398--1407, 2006.

\bibitem[VanderWeele and {Tchetgen
  Tchetgen}(2011{\natexlab{a}})]{vanderweele2011bounding}
Tyler~J. VanderWeele and Eric~J. {Tchetgen Tchetgen}.
\newblock Bounding the infectiousness effect in vaccine trials.
\newblock \emph{Epidemiology}, 22\penalty0 (5):\penalty0 686--693,
  2011{\natexlab{a}}.

\bibitem[Halloran and Struchiner(1991)]{halloran1991study}
M.~Elizabeth Halloran and Claudio~J. Struchiner.
\newblock Study designs for dependent happenings.
\newblock \emph{Epidemiology}, 2\penalty0 (5):\penalty0 331--338, 1991.

\bibitem[Halloran and Struchiner(1995)]{halloran1995causal}
M.~Elizabeth Halloran and Claudio~J. Struchiner.
\newblock Causal inference in infectious diseases.
\newblock \emph{Epidemiology}, 6\penalty0 (2):\penalty0 142--151, 1995.

\bibitem[Eisenberg et~al.(2003)Eisenberg, Lewis, Porco, Hubbard, and Colford{,
  Jr.}]{eisenberg2003bias}
Joseph~N.S. Eisenberg, Bryan~L. Lewis, Travis~C. Porco, Alan~H. Hubbard, and
  John~M. Colford{, Jr.}
\newblock Bias due to secondary transmission in estimation of attributable risk
  from intervention trials.
\newblock \emph{Epidemiology}, 14\penalty0 (4):\penalty0 442--450, 2003.

\bibitem[Hudgens and Halloran(2008)]{hudgens2008toward}
Michael~G. Hudgens and M.~Elizabeth Halloran.
\newblock Toward causal inference with interference.
\newblock \emph{Journal of the American Statistical Association}, 103\penalty0
  (482):\penalty0 832--842, 2008.

\bibitem[Cox(1958)]{cox1958}
D.R. Cox.
\newblock \emph{Planning of {E}xperiments}.
\newblock John Wiley \& Sons, New York, 1958.

\bibitem[Rosenbaum(2007)]{rosenbaum2007interference}
Paul~R Rosenbaum.
\newblock Interference between units in randomized experiments.
\newblock \emph{Journal of the American Statistical Association}, 102\penalty0
  (477):\penalty0 191--200, 2007.

\bibitem[VanderWeele(2010)]{vanderweele2010direct}
Tyler~J. VanderWeele.
\newblock Direct and indirect effects for neighborhood-based clustered and
  longitudinal data.
\newblock \emph{Sociological Methods \& Research}, 38\penalty0 (4):\penalty0
  515--544, 2010.

\bibitem[{Tchetgen Tchetgen} and VanderWeele(2012)]{tchetgen2012causal}
Eric~J {Tchetgen Tchetgen} and Tyler~J VanderWeele.
\newblock On causal inference in the presence of interference.
\newblock \emph{Statistical Methods in Medical Research}, 21\penalty0
  (1):\penalty0 55--75, 2012.

\bibitem[Pitzer and Basta(2012)]{pitzer2012linking}
Virginia~E. Pitzer and Nicole~E. Basta.
\newblock Linking data and models: The importance of statistical analyses to
  inform models for the transmission dynamics of infections.
\newblock \emph{Epidemiology}, 23\penalty0 (4):\penalty0 520--522, 2012.

\bibitem[VanderWeele et~al.(2012)VanderWeele, {Tchetgen Tchetgen}, and
  Halloran]{vanderweele2012components}
Tyler~J. VanderWeele, Eric~J. {Tchetgen Tchetgen}, and M.~Elizabeth Halloran.
\newblock Components of the indirect effect in vaccine trials: identification
  of contagion and infectiousness effects.
\newblock \emph{Epidemiology}, 23\penalty0 (5):\penalty0 751--761, 2012.

\bibitem[Ogburn and VanderWeele(2017)]{ogburn2017vaccines}
Elizabeth~L. Ogburn and Tyler~J. VanderWeele.
\newblock Vaccines, contagion, and social networks.
\newblock \emph{The Annals of Applied Statistics}, 11\penalty0 (2):\penalty0
  919--948, 2017.

\bibitem[Eck et~al.(2019)Eck, Morozova, and Crawford]{eck2019randomization}
Daniel~J. Eck, Olga Morozova, and Forrest~W. Crawford.
\newblock Randomization for the susceptibility effect of an infectious disease
  intervention in a clustered study population.
\newblock \emph{arXiv:1808.05593}, 2019.

\bibitem[Cai et~al.(2021)Cai, Loh, and Crawford]{cai2021identification}
X~Cai, W~W Loh, and F~W Crawford.
\newblock Identification of causal intervention effects under contagion.
\newblock \emph{Journal of Causal Inference}, 9:\penalty0 9--38, 2021.

\bibitem[Koopman et~al.(1991)Koopman, Longini, Jacquez, Simon, Ostrow, Martin,
  and Woodcock]{koopman1991assessing}
James~S. Koopman, Ira~M. Longini, Jr, John~A. Jacquez, Carl~P. Simon, David~G.
  Ostrow, William~R. Martin, and David~M. Woodcock.
\newblock Assessing risk factors for transmission of infection.
\newblock \emph{American Journal of Epidemiology}, 133\penalty0 (12):\penalty0
  1199--1209, 1991.

\bibitem[Halloran et~al.(1994)Halloran, Longini, Haber, Struchiner, and
  Brunet]{halloran1994exposure}
M.~Elizabeth Halloran, Ira~M. Longini, Jr, Michael~J. Haber, Claudio~J.
  Struchiner, and Robert~C. Brunet.
\newblock Exposure efficacy and change in contact rates in evaluating
  prophylactic hiv vaccines in the field.
\newblock \emph{Statistics in Medicine}, 13\penalty0 (4):\penalty0 357--377,
  1994.

\bibitem[Staples et~al.(2015)Staples, Ogburn, and
  Onnela]{staples2015incorporating}
Patrick~C. Staples, Elizabeth~L. Ogburn, and Jukka-Pekka Onnela.
\newblock Incorporating contact network structure in cluster randomized trials.
\newblock \emph{Scientific Reports}, 5:\penalty0 17581, 2015.

\bibitem[Morozova et~al.(2018)Morozova, Cohen, and Crawford]{morozova2018risk}
Olga Morozova, Ted Cohen, and Forrest~W. Crawford.
\newblock Risk ratios for contagious outcomes.
\newblock \emph{Journal of The Royal Society Interface}, 15:\penalty0 1--12,
  2018.

\bibitem[Halloran et~al.(1997)Halloran, Struchiner, and
  Longini]{halloran1997study}
M.~Elizabeth Halloran, Claudio~J. Struchiner, and Ira~M. Longini, Jr.
\newblock Study designs for evaluating different efficacy and effectiveness
  aspects of vaccines.
\newblock \emph{American Journal of Epidemiology}, 146\penalty0 (10):\penalty0
  789--803, 1997.

\bibitem[Longini et~al.(1982)Longini, Koopman, Monto, and
  Fox]{longini1982estimating}
Ira~M Longini, Jr, James~S Koopman, Arnold~S Monto, and John~P Fox.
\newblock Estimating household and community transmission parameters for
  influenza.
\newblock \emph{American Journal of Epidemiology}, 115\penalty0 (5):\penalty0
  736--751, 1982.

\bibitem[Koopman(2004)]{koopman2004modeling}
Jim Koopman.
\newblock Modeling infection transmission.
\newblock \emph{Annual Reviews of Public Health}, 25\penalty0 (1):\penalty0
  303--326, 2004.

\bibitem[Rhodes et~al.(1996)Rhodes, Halloran, and
  Longini~Jr.]{rhodes1996counting}
Philip~H. Rhodes, M.~Elizabeth Halloran, and Ira~M. Longini~Jr.
\newblock Counting process models for infectious disease data: Distinguishing
  exposure to infection from susceptibility.
\newblock \emph{Journal of the Royal Statistical Society: Series B
  (Methodological)}, 58\penalty0 (4):\penalty0 751--762, 1996.

\bibitem[Hudgens and Halloran(2006)]{hudgens2006causal}
Michael~G Hudgens and M.~Elizabeth Halloran.
\newblock Causal vaccine effects on binary postinfection outcomes.
\newblock \emph{Journal of the American Statistical Association}, 101\penalty0
  (473):\penalty0 51--64, 2006.

\bibitem[Halloran and Hudgens(2012)]{halloran2012causal}
M.~Elizabeth Halloran and Michael~G. Hudgens.
\newblock Causal inference for vaccine effects on infectiousness.
\newblock \emph{The International Journal of Biostatistics}, 8\penalty0 (2),
  2012.

\bibitem[Longini and Koopman(1982)]{longini1982household}
Ira~M. Longini, Jr and James~S. Koopman.
\newblock Household and community transmission parameters from final
  distributions of infections in households.
\newblock \emph{Biometrics}, 38\penalty0 (1):\penalty0 115--126, 1982.

\bibitem[Kenah(2011)]{kenah2011contact}
Eben Kenah.
\newblock Contact intervals, survival analysis of epidemic data, and estimation
  of $r_0$.
\newblock \emph{Biostatistics}, 12\penalty0 (3):\penalty0 548--566, 2011.

\bibitem[Kenah(2015)]{kenah2015semiparametric}
Eben Kenah.
\newblock Semiparametric relative-risk regression for infectious disease
  transmission data.
\newblock \emph{Journal of the American Statistical Association}, 110\penalty0
  (509):\penalty0 313--325, 2015.

\bibitem[Kenah(2013)]{kenah2013non}
Eben Kenah.
\newblock Non-parametric survival analysis of infectious disease data.
\newblock \emph{Journal of the Royal Statistical Society: Series B (Statistical
  Methodology)}, 75\penalty0 (2):\penalty0 277--303, 2013.

\bibitem[Kenah et~al.(2008)Kenah, Lipsitch, and Robins]{kenah2008generation}
Eben Kenah, Marc Lipsitch, and James~M. Robins.
\newblock Generation interval contraction and epidemic data analysis.
\newblock \emph{Mathematical Biosciences}, 213\penalty0 (1):\penalty0 71--79,
  2008.

\bibitem[Robins(1986)]{robins1986new}
James Robins.
\newblock A new approach to causal inference in mortality studies with a
  sustained exposure period -application to control of the healthy worker
  survivor effect.
\newblock \emph{Mathematical Modelling}, 7\penalty0 (9):\penalty0 1393--1512,
  1986.

\bibitem[VanderWeele and {Tchetgen
  Tchetgen}(2011{\natexlab{b}})]{vanderweele2011effect}
Tyler~J. VanderWeele and Eric~J. {Tchetgen Tchetgen}.
\newblock Effect partitioning under interference in two-stage randomized
  vaccine trials.
\newblock \emph{Statistics \& probability letters}, 81\penalty0 (7):\penalty0
  861--869, 2011{\natexlab{b}}.

\bibitem[Crawford et~al.(2019)Crawford, Morozova, Buchanan, and
  Spiegelman]{crawford2019interpretation}
Forrest~W. Crawford, O~Morozova, A~L Buchanan, and D~Spiegelman.
\newblock Interpretation of the individual effect under treatment spillover.
\newblock \emph{American Journal of Epidemiology}, 188:\penalty0 1407--1409,
  2019.

\bibitem[S{\"a}vje et~al.(2021)S{\"a}vje, Aronow, and
  Hudgens]{savje2021average}
Fredrik S{\"a}vje, Peter~M Aronow, and Michael~G Hudgens.
\newblock Average treatment effects in the presence of unknown interference.
\newblock \emph{The Annals of Statistics}, 49\penalty0 (2):\penalty0 673--701,
  2021.

\bibitem[Cauchemez et~al.(2004)Cauchemez, Carrat, Viboud, Valleron, and
  Bo\"elle]{cauchemez2004bayesian}
Simon Cauchemez, F~Carrat, C~Viboud, AJ~Valleron, and PY~Bo\"elle.
\newblock A {Bayesian MCMC} approach to study transmission of influenza:
  application to household longitudinal data.
\newblock \emph{Statistics in Medicine}, 23\penalty0 (22):\penalty0 3469--3487,
  2004.

\bibitem[Crawford et~al.(2020)Crawford, Marx, Zelner, and
  Cohen]{crawford2020transmission}
Forrest~W Crawford, Florian~M Marx, Jon Zelner, and Ted Cohen.
\newblock Transmission modeling with regression adjustment for analyzing
  household-based studies of infectious disease: application to tuberculosis.
\newblock \emph{Epidemiology}, 31\penalty0 (2):\penalty0 238, 2020.

\bibitem[Kamenga et~al.(1991)Kamenga, Ryder, Jingu, Mbuyi, Mbu, Behets, Brown,
  and Heyward]{kamenga1991evidence}
Munkolenkole Kamenga, Robert~W Ryder, Muana Jingu, Nkashama Mbuyi, Lubamba Mbu,
  Frieda Behets, Christopher Brown, and William~L Heyward.
\newblock Evidence of marked sexual behavior change associated with low {HIV}-1
  seroconversion in 149 married couples with discordant {HIV}-1 serostatus:
  experience at an {HIV} counselling center in {Zaire}.
\newblock \emph{AIDS}, 5\penalty0 (1):\penalty0 61--67, 1991.

\bibitem[De~Vincenzi(1994)]{de1994longitudinal}
Isabelle De~Vincenzi.
\newblock A longitudinal study of human immunodeficiency virus transmission by
  heterosexual partners.
\newblock \emph{New England Journal of Medicine}, 331\penalty0 (6):\penalty0
  341--346, 1994.

\bibitem[El-Bassel et~al.(2005)El-Bassel, Witte, Gilbert, Wu, Chang, Hill, and
  Steinglass]{el2005long}
Nabila El-Bassel, Susan~S Witte, Louisa Gilbert, Elwin Wu, Mingway Chang,
  Jennifer Hill, and Peter Steinglass.
\newblock Long-term effects of an {HIV/STI} sexual risk reduction intervention
  for heterosexual couples.
\newblock \emph{AIDS and Behavior}, 9\penalty0 (1):\penalty0 1--13, 2005.

\bibitem[Gilbert et~al.(2010)Gilbert, El-Bassel, Terlikbayeva, Rozental, Chang,
  Brisson, Wu, and Bakpayev]{gilbert2010couple}
Louisa Gilbert, Nabila El-Bassel, Assel Terlikbayeva, Yelena Rozental, Mingway
  Chang, Anne Brisson, Elwin Wu, and Marat Bakpayev.
\newblock Couple-based {HIV} prevention for injecting drug users in
  {Kazakhstan}: a pilot intervention study.
\newblock \emph{Journal of Prevention \& Intervention in the Community},
  38\penalty0 (2):\penalty0 162--176, 2010.

\bibitem[McSherry et~al.(1999)McSherry, Shapiro, Coombs, McGrath, Frenkel,
  Britto, Culnane, and Sperling]{mcsherry1999effects}
George~D. McSherry, David~E. Shapiro, Robert~W. Coombs, Nuala McGrath, Lisa~M.
  Frenkel, Paula Britto, Mary Culnane, and Rhoda~S. Sperling.
\newblock The effects of zidovudine in the subset of infants infected with
  human immunodeficiency virus type-1 ({Pediatric AIDS Clinical Trials Group
  Protocol} 076).
\newblock \emph{The Journal of Pediatrics}, 134\penalty0 (6):\penalty0
  717--724, 1999.

\bibitem[Brogly et~al.(2010)Brogly, Abzug, Watts, Cunningham, Williams, Oleske,
  Conway, Sperling, Spiegel, and Van~Dyke]{brogly2010birth}
Susan~B. Brogly, Mark~J. Abzug, D.~Heather Watts, Coleen~K. Cunningham,
  Paige~L. Williams, James Oleske, Daniel Conway, Rhoda~S. Sperling, Hans
  Spiegel, and Russell~B. Van~Dyke.
\newblock Birth defects among children born to {HIV}-infected women:
  {P}ediatric {AIDS} {C}linical {T}rials {P}rotocols 219 and 219{C}.
\newblock \emph{The Pediatric Infectious Disease Journal}, 29\penalty0
  (8):\penalty0 721--727, 2010.

\bibitem[Halloran et~al.(1999)Halloran, Longini, and
  Struchiner]{halloran1999design}
M.~Elizabeth Halloran, Ira~M Longini, Jr, and Claudio~J Struchiner.
\newblock Design and interpretation of vaccine field studies.
\newblock \emph{Epidemiological Reviews}, 21\penalty0 (1):\penalty0 73--88,
  1999.

\bibitem[Perez-Heydrich et~al.(2014)Perez-Heydrich, Hudgens, Halloran, Clemens,
  Ali, and Emch]{perez2014assessing}
Carolina Perez-Heydrich, Michael~G. Hudgens, M.~Elizabeth Halloran, John~D.
  Clemens, Mohammad Ali, and Michael~E. Emch.
\newblock Assessing effects of cholera vaccination in the presence of
  interference.
\newblock \emph{Biometrics}, 70\penalty0 (3):\penalty0 731--741, 2014.

\bibitem[Halloran and Hudgens(2016)]{halloran2016dependent}
M.~Elizabeth Halloran and Michael~G. Hudgens.
\newblock Dependent happenings: a recent methodological review.
\newblock \emph{Current Epidemiology Reports}, 3:\penalty0 297--305, 2016.

\bibitem[Ali et~al.(2005)Ali, Emch, Von~Seidlein, Yunus, Sack, Rao, Holmgren,
  and Clemens]{ali2005herd}
Mohammad Ali, Michael Emch, Lorenz Von~Seidlein, Mohammad Yunus, David~A Sack,
  Malla Rao, Jan Holmgren, and John~D Clemens.
\newblock Herd immunity conferred by killed oral cholera vaccines in
  {Bangladesh}: a reanalysis.
\newblock \emph{The Lancet}, 366\penalty0 (9479):\penalty0 44--49, 2005.

\bibitem[Buchanan et~al.(2018)Buchanan, Vermund, Friedman, and
  Spiegelman]{buchanan2018assessing}
Ashley~L Buchanan, Sten~H Vermund, Samuel~R Friedman, and Donna Spiegelman.
\newblock Assessing individual and disseminated effects in network-randomized
  studies.
\newblock \emph{American Journal of Epidemiology}, 187\penalty0 (11):\penalty0
  2449--2459, 2018.

\bibitem[Christakis and Fowler(2007)]{christakis2007spread}
Nicholas~A. Christakis and James~H. Fowler.
\newblock The spread of obesity in a large social network over 32 years.
\newblock \emph{New England Journal of Medicine}, 357\penalty0 (4):\penalty0
  370--379, 2007.

\bibitem[Christakis and Fowler(2008)]{christakis2008collective}
Nicholas~A. Christakis and James~H. Fowler.
\newblock The collective dynamics of smoking in a large social network.
\newblock \emph{New England Journal of Medicine}, 358\penalty0 (21):\penalty0
  2249--2258, 2008.

\bibitem[Cacioppo et~al.(2009)Cacioppo, Fowler, and
  Christakis]{cacioppo2009alone}
John~T Cacioppo, James~H Fowler, and Nicholas~A Christakis.
\newblock Alone in the crowd: the structure and spread of loneliness in a large
  social network.
\newblock \emph{Journal of Personality and Social Psychology}, 97\penalty0
  (6):\penalty0 977, 2009.

\bibitem[for Disease~Control and Prevention(2020)]{cdcfluvaccine}
Centers for Disease~Control and Prevention.
\newblock Vaccine effectiveness: How well do the flu vaccines work?, 2020.
\newblock URL \url{https://www.cdc.gov/flu/vaccines-work/vaccineeffect.htm}.

\bibitem[Greenwood and Yule(1915)]{greenwood1915statistics}
Major Greenwood and G.~Udny Yule.
\newblock The statistics of anti-typhoid and anti-cholera inoculations, and the
  interpretation of such statistics in general.
\newblock \emph{Proceedings of the Royal Society of Medicine}, 8\penalty0 (Sect
  Epidemiol State Med):\penalty0 113, 1915.

\bibitem[Francis~Jr.(1955)]{francis1955evaluation}
Thomas Francis~Jr.
\newblock Evaluation of the 1954 poliomyelitis vaccine field trial: Further
  studies of results determining the effectiveness of poliomyelitis vaccine
  ({S}alk) in preventing paralytic poliomyelitis.
\newblock \emph{Journal of the American Medical Association}, 158\penalty0
  (14):\penalty0 1266--1270, 1955.

\end{thebibliography}

\appendix

\section{Proofs}

\begin{proof}[Proof of Lemma \ref{Iidentification}]
By Assumptions \ref{as:exclusion} and \ref{as:positivity}, we can define $f_{I_i^0}(s|x_i,\bl_i)$ be the density of $I_i^0(x_i)$ when $\bL_i=\bl_i$. Let $F_{I_i^0}(s|x_i,\bl_i)$ be the corresponding cumulative distribution function, and by Assumption \ref{as:positivity}, we have $0<F_{I_i^0}(s|x_i,\bl_i)<1$ for all $s>0$, $x_i$, and $\bl_i$. Thus, we can write,
  \[ \frac{f_{I_i^0}(s|x_i,\bl_i)}{1-F_{I_i^0}(s|x_i,\bl_i)} = -\frac{d}{ds} \log(1-F_{I_i^0}(s|x_i,\bl_i)) . \]
Then rearranging, we have: 
\begin{equation*}
\begin{split}
F_{I_i^0}(s|x_i,\bl_i) &= 1 - \exp\left[ -\int_0^s \frac{f_{I_i^0}(u|x_i,\bl_i)}{1-F_{I_i^0}(u|x_i,\bl_i)} du \right] \\
               &= 1 - \exp\left[ -\int_0^s \frac{f_{I_i^0}(u|x_i,\bl_i)  \prod_{j \neq i} \left( 1-F_{I_j^0}(u|x_j,\bl_j) \right) }{\big{(} 1-F_{I_i^0}(u|x_i,\bl_i)\big{)}\prod_{j \neq i} \left( 1-F_j(u|x_j,\bl_j) \right) } du \right] \\
               &= 1 - \exp\left[ -\int_0^s \frac{p(I_i^0(x_i)=u,I_j^0(x_j)>u \text{ for all } j \neq i|\bL=\bl)}{\Pr(I_i^0(x_i)>u,I_j^0(x_j)>u \text{ for all } j \neq i|\bL=\bl)} du \right] \\ 
\intertext{by Assumption \ref{as:independence}}                &= 1 - \exp\left[ -\int_0^s \frac{p(I_i^0(x_i)=u,I_j^0(x_j)>u \text{ for all } j \neq i|\X=\x,\bL=\bl)}{\Pr(I_i^0(x_i)>u,I_j^0(x_j)>u \text{ for all } j \neq i|\X=\x,\bL=\bl)} du \right] \\ \intertext{Assumption \ref{as:txexchangeability}}
               &= 1 - \exp\left[ -\int_0^s \frac{p(I_i^0=u,I_j^0>u \text{ for all } j \neq i|\X=\x,\bL=\bl)}{\Pr(I_i^0>u,I_j^0>u \text{ for all } j \neq i|\X=\x,\bL=\bl)} du \right] \\ \intertext{Assumption \ref{as:consistency}} 
               &= 1 -\exp \Big{[} -\int_0^s \frac{p(T_i=u,T_{\varphi_i^1}>u,\ldots,T_{\varphi_i^{n-1}}>u|\X=\x,\bL=\bl)}{\Pr(T_i>u,T_{\varphi_i^1}>u,\ldots,T_{\varphi_i^{n-1}}>u|\X=\x,\bL=\bl)} du \Big{]}\\ \intertext{by Definition \eqref{eq:Tdecomposition}}
\end{split}
\end{equation*}

For $k=2,\ldots,n-2$, by Assumptions \ref{as:exclusion} and \ref{as:positivity}, we can define $f_{I_i^k}(s|\x,\bhi(t_{(i)}^k),\bl)$ as the density of $I_i^k \big{(} \x,\bhi(t_{(i)}^k) \big{)}$ when $\bL=\bl$. Let $F_{I_i^k}(s|\x,\bhi(t_{(i)}^k),\bl)$ be the corresponding cumulative distribution function,  and by Assumption \ref{as:positivity}, we have $0<F_{I_i^k}(s|\x,\bhi(t_{(i)}^k),\bl)<1$ for all $s>0$, $\x$, $\bhi(t_{(i)}^k)$, and $\bl$. Thus, we can write: 
\[ \frac{f_{I_i^k}(s|\x,\bhi(t_{(i)}^k),\bl)}{1-F_{I_i^k}(s|\x,\bhi(t_{(i)}^k),\bl)} = -\frac{d}{ds} \log(1-F_{I_i^k}(s|\x,\bhi(t_{(i)}^k,\bl)) \]
Then rearranging, we have:
\begin{equation}
\begin{split}
& F_{I_i^k}(s|\x,\bhi(t_{(i)}^k),\bl) = 1 - \exp\left[ -\int_0^s \frac{ \scriptstyle f_{I_i^k}(u|\x,\bhi(t_{(i)}^k),\bl)}{\scriptstyle 1-F_{I_i^k}(u|\x,\bhi(t_{(i)}^k),\bl)} du\right] \\
&=  1 - \exp\left[ -\int_0^s \frac{\scriptstyle f_{I_i^k}(t|\x,\bhi(t^k_{(i)}),\bl)\prod_{j=\varphi_i^{k+1}}^{\varphi_i^{n-1}} \left( 1-F_{I_j^k}(t|\x,\bhi(t_{(i)}^k),\bl) \right)}{\scriptstyle \left(1-F_{I_i^k}(t|\x,\bhi(t_{(i)}^k),\bl \right) \prod_{j=\varphi_i^{k+1}}^{\varphi_i^{n-1}} \left( 1-F_{I_j^k}(t|\x,\bhi(t_{(i)}^k),\bl) \right)
} du \right] \\
&=  1 - \exp\left[ -\int_0^s \frac{\scriptstyle p \left(I_i^k \big{(} \x,\bhi(t^k_{(i)})\big{)}=u,I_j^k \big{(} \x,\bhi(t_{(i)}^k)\big{)} >u \text{ for }j=\varphi_i^{k+1},\ldots,\varphi_i^{n-1}|\bL=\bl \right)}
								  {\scriptstyle \Pr \left(I_i^k \big{(}\x,\bhi(t_{(i)}^k)\big{)}>u,I_j^k \big{(}\x,\bhi(t_{(i)}^k)\big{)}
    >u \text{ for }j=\varphi_i^{k+1},\ldots,\varphi_i^{n-1}|\bL=\bl \right)} du \right] \\ 
\intertext{by Assumption \ref{as:independence}} 
&= 1 - \exp \Bigg{[} \\
& -\int_0^s \frac{\scriptstyle p \left(I_i^k(\x,\bhi(t^k_{(i)}))=u,I_j^k(\x,\bhi(t^k_{(i)})) >u \text{ for }j=\varphi^{k+1},\ldots,\varphi^{n-1}|\bHi(t^k_{(i)})=\bhi(t_{(i)}^k), \bL=\bl \right)}{ \scriptstyle\Pr \left(I_i^k(\x,\bhi(t^k_{(i)}))>u,I_j^k(\x,\bhi(t_{(i)}^k)) >u \text{ for }j=\varphi^{k+1},\ldots,\varphi^{n-1}|\bHi(t_{(i)}^k)=\bhi(t_{(i)}^k),\bL=\bl \right)} du \Bigg{]} \\
\intertext{by Assumption \ref{as:prevexchangeability}}
&= 1 - \exp \Bigg{[} -\int_0^s \frac{\scriptstyle p \left(I_i^k(\x,\bhi(t^k_{(i)}))=u,I_j^k(\x,\bhi(t_{(i)}^k)) >u \text{ for }j=\varphi^{k+1},\ldots,\varphi^{n-1}|\bHi(t_{(i)}^k)=\bhi(t_{(i)}^k),\X=\x, \bL=\bl \right)}{\scriptstyle \Pr \left(I_i^k(\x,\bhi(t^k_{(i)}))>u,I_j^k(\x,\bhi(t_{(i)}^k)) >u \text{ for }j=\varphi^{k+1},\ldots,\varphi^{n-1}|\bHi(t_{(i)}^k)=\bhi(t_{(i)}^k),\X=\x,\bL=\bl \right)} du \Bigg{]}\\
\intertext{by Assumption \ref{as:txexchangeability}}
&= 1 - \exp\left[ -\int_0^s \frac{ \scriptstyle p \left(I_i^k=u,I_j^k >u \text{ for }j=\varphi^{k+1},\ldots,\varphi^{n-1}|\bHi(t_{(i)}^k)=\bhi(t_{(i)}^k),\X=\x, \bL=\bl \right)}{\scriptstyle \Pr \left(I_i^k>u,I_j^k>u \text{ for }j=\varphi^{k+1},\ldots,\varphi^{n-1}|\bHi(t_{(i)}^k)=\bhi(t_{(i)}^k),\X=\x,\bL=\bl \right)} du \right] \\
\intertext{by Assumption \ref{as:consistency}}
&=1-\exp \Big{[} -\int_{t_{(i)}^k}^{t_{(i)}^k+s} \frac{\scriptstyle p(T_i=u,T_{\varphi_i^{k+1}}>u,\ldots,T_{\varphi_i^{n-1}}>u|T_{\varphi_i^1}=t_{(i)}^1,\ldots,T_{\varphi_i^k}=t_{(i)}^k,\X=\x,\bL=\bl)}{\scriptstyle \Pr(T_i>u,T_{\varphi_i^{k+1}}>u,\ldots,T_{\varphi_i^{n-1}}>u|T_{\varphi_i^1}=t_{(i)}^1,\ldots,T_{\varphi_i^k}=t_{(i)}^k,\X=\x,\bL=\bl)} du \Big{]} \\
\intertext{by Definition \eqref{eq:Tdecomposition} and the definition of $\bHi(t)$.}
\end{split}
\label{lemma2_main}
\end{equation}

There is no competing risk problem for $I_i^{n-1}(s|\x,\bhi(t_{(i)}^{n-1}),\bl)$. Let $F_{I_i^{n-1}}(s|\x,\bhi(t_{(i)}^{n-1}),\bl)$ be the cumulative distribution function for $I_i^{n-1}(\x,\bhi(t_{(i)}^{n-1}))$ when $\bL=\bL$, then we have:
\begin{equation*}
\begin{split}
& F_{I_i^{n-1}}(s|\x,\bhi(t_{(i)}^{n-1}),\bl) = \Pr(I_i^{n-1}(\x,\bhi(t_{(i)}^{n-1}))<s|\bL=\bl) \\
&= \Pr(I_i^{n-1}(\x,\bhi(t_{(i)}^{n-1}))<s|\bHi(t_{(i)}^{n-1})=\bhi(t_{(i)}^{n-1}),\bL=\bl) \text{ by Assumption \ref{as:prevexchangeability}} \\
&= \Pr(I_i^{n-1}(\x,\bhi(t_{(i)}^{n-1}))<s|\bHi(t_{(i)}^{n-1})=\bhi(t_{(i)}^{n-1}),\X=\x, \bL=\bl) \text{ by Assumption \ref{as:txexchangeability}} \\
&= \Pr(I_i^{n-1}<s|\bHi(t_{(i)}^{n-1})=\bhi(t_{(i)}^{n-1}),\X=\x, \bL=\bl) \text{ by Assumption \ref{as:consistency}} \\
&= \Pr(T_i<s+t_i^{(n-1)}|\X=\x,T_{\varphi_i^1}=t_{(i)}^1,\ldots,T_{\varphi_i^{n-1}}=t_{(i)}^{n-1},T_i>t_{(i)}^{n-1}, \bL=\bl) \\
\intertext{by Definition \eqref{eq:Tdecomposition} and the definition of $\bHi(t)$.}
\end{split}
\end{equation*}
\end{proof}

\begin{proof}[Proof of Theorem ~\ref{mainidentification}]
$T_i(\x,\bhi)$ is composed of $I_i^k(\x,\bhi(t_{(i)}^k)$ by Definition \eqref{eq:Tdecomposition}. Since $I_i^k(\x,\bhi(t_{(i)}^k)) \indep \bHi^*(t;\xminusi) |\bL$ and $I_i^k(\x,\bhi(t_{(i)}^k)) \indep \X |\bL$ by Assumptions \ref{as:prevexchangeability} and \ref{as:txexchangeability}, we have $T_i(\x,\bhi)\indep \bHi^*(t;\xminusi) |\bL$ and $T_i(\x,\bhi)\indep \X |\bL$. Thus, 
\begin{equation}
\E[Y_i(t;\x,\bhi)|\bL=\bl] = \E[Y_i(t;\x,\bhi)|\bHi^*=\bhi,\X=\x,\bL=\bl]
\label{eq:exchangeabilityforT}
\end{equation}
Therefore, 
\begin{equation*}
\begin{split}
& \E[Y_i(t;\x,\bhi)|\bL=\bl] = \E[Y_i(t;\x,\bhi)|\bHi^*=\bhi,\X=\x,\bL=\bl] \\ \intertext{ by Equation \eqref{eq:exchangeabilityforT}}
& = \Pr \left( T_i(\x,\bhi)<t |\bHi^*=\bhi,\X=\x,\bL=\bl \right)  \\ \intertext{ by the definition of $T_i(\x,\bhi)$}
& = \sum_{j=0}^{n-1}  \Pr \left( \scriptstyle I_i^k \ge t_{(i)}^{k+1}-t_{(i)}^{k} \text{ for k=1,\ldots,j-1 },\bHi^*=\bhi,\X=\x,\bL=\bl \right)  \\
& \quad \quad \quad \cdot \Pr \left( \scriptstyle I_i^j< t_{(i)}^{j+1}-t_{(i)}^j, I_i^k \ge t_{(i)}^{k+1}-t_{(i)}^{k} \text{ for k=1,\ldots,j-1 }|\bHi^*=\bhi,\X=\x,\bL=\bl\right) \\ \intertext{by the Law of Total Probability with extra conditioning}
& = \sum_{j=0}^{n-1}  \Pr \left( \scriptstyle I_i^j(\x,\bhi) <t-t_{(i)}^j |I_i^j< t_{(i)}^{j+1}-t_{(i)}^j, I_i^k \ge t_{(i)}^{k+1}-t_{(i)}^{k}  \text{ for k=1,\ldots,j-1 },\bHi^*=\bhi,\X=\x, \bL=\bl \right)  \\
& \quad \quad \quad \cdot \Pr \left( \scriptstyle I_i^j< t_{(i)}^{j+1}-t_{(i)}^j, I_i^k \ge t_{(i)}^{k+1}-t_{(i)}^{k} \text{ for k=1,\ldots,j-1 }|\bHi^*=\bhi,\X=\x,\bL=\bl\right) \\ \intertext{ by Definition \eqref{eq:Tdecomposition}} 
& = \sum_{j=0}^{n-1}  \Pr \Big{(} \scriptstyle I_i^j(\x,\bhi) <t-t_{(i)}^j |I_i^j< t_{(i)}^{j+1}-t_{(i)}^j, I_i^k \ge t_{(i)}^{k+1}-t_{(i)}^{k} \text{ for k=1,\ldots,j-1 },\bHi^*=\bhi,\X=\x, \bL=\bl \Big{)} \\
\end{split}
\end{equation*}
\begin{equation*}
\begin{split}
& \quad \quad \quad \cdot \Pr \left( \scriptstyle I_i^j< t_{(i)}^{j+1}-t_{(i)}^j|I_i^k \ge t_{(i)}^{k+1}-t_{(i)}^{k}  \text{ for k=1,\ldots,j-1 }|\bHi^*=\bhi,\X=\x,\bL=\bl\right)  \\
& \quad \quad \quad \cdot  \prod_{k=1}^{j-1}\Pr \left( \scriptstyle I_i^k \ge t_{(i)}^{k+1}-t_{(i)}^{k} |I_i^m \ge t_{(i)}^{m+1}-t_{(i)}^{m} \text{ for all $m=1,\ldots,k-1$}|\bHi^*=\bhi,\X=\x,\bL=\bl \right) \\ \intertext{by the Chain Rule}
& = \sum_{j=0}^{n-1}  \Pr \Big{(} \scriptstyle I_i^j(\x,\bhi(t_{(i)}^j)) <t-t_{(i)}^j |I_i^j< t_{(i)}^{j+1}-t_{(i)}^j, I_i^k\ge t_i^{(k+1)}-t_i^{(k)} \text{ for k=1,\ldots,j-1 } , \bHi(t_{(i)}^j)=\bhi(t_{(i)}^j),\X=\x, \bL=\bl \Big{)}  \\
& \quad \quad \quad \cdot \Pr \left( \scriptstyle I_i^j< t_{(i)}^{j+1}-t_{(i)}^j|I_i^k\ge t_{(i)}^{k+1}-t_{(i)}^{k}  \text{ for k=1,\ldots,j-1 }|\bHi(t_{(i)}^j)=\bhi(t_{(i)}^j),\X=\x,\bL=\bl \right)  \\
& \quad \quad \quad \cdot  \prod_{k=1}^{j-1}\Pr \left( \scriptstyle I_i^k \ge t_{(i)}^{k+1}-t_{(i)}^{k}|I_i^m \ge t_{(i)}^{m+1}-t_{(i)}^{m} \text{ for $m=1,\ldots,k-1$}|\bHi(t_{(i)}^k)=\bhi(t_{(i)}^k),\X=\x,\bL=\bl \right) \\
\intertext{by Assumption \ref{as:exclusion} and the equivalence of $\bHi(t_{(i)}^j)$ to $\bHi^*(t_{(i)}^j)$ in distribution given $T_i>t_{(i)}^j$}
& = \sum_{j=0}^{n-1}  \Pr \left( I_i^j(\x,\bhi(t_{(i)}^j)) <t-t_{(i)}^j |\bHi(t_{(i)}^j)=\bhi(t_{(i)}^j),\X=\x, \bL=\bl \right) \\ 
& \quad \quad \quad \cdot \Pr \left( I_i^j< t_{(i)}^{j+1}-t_{(i)}^j|\bHi(t_{(i)}^j)=\bhi(t_{(i)}^j),\X=\x,\bL=\bl\right) \\
& \quad \quad \quad \cdot  \prod_{k=1}^{j-1}\Pr \left(I_i^k \ge t_{(i)}^{k+1}-t_{(i)}^{k}  |\bHi(t_{(i)}^k)=\bhi(t_{(i)}^k),\X=\x,\bL=\bl \right) \\
\intertext{ by Assumption \ref{as:independence}}
& = \mathlarger{\sum}_{j=0}^{n-1}  \Pr \left(I_i^j <t-t_{(i)}^j |\bHi(t_{(i)}^j)=\bhi(t_{(i)}^j),\X=\x, \bL=\bl \right) \\ 
& \quad \quad \quad \cdot \Pr \left( I_i^j< t_{(i)}^{j+1}-t_{(i)}^j|\bHi(t_{(i)}^j)=\bhi(t_{(i)}^j),\X=\x,\bL=\bl\right)  \\
& \quad \quad \quad \cdot  \prod_{k=1}^{j-1}\Pr \left(I_i^k \ge t_{(i)}^{k+1}-t_{(i)}^{k}|\bHi(t_{(i)}^k)=\bhi(t_{(i)}^k),\X=\x,\bL=\bl \right) \\ 
\intertext{ by Assumption \ref{as:consistency}}
& = \mathlarger{\sum}_{j=0}^{n-1}  \Pr \left(I_i^j < \min \{ t-t_{(i)}^j, t_{(i)}^{j+1}-t_{(i)}^j \} |\bHi(t_{(i)}^j)=\bhi(t_{(i)}^j),\X=\x, \bL=\bl \right) \\ 
& \quad \quad \quad \cdot  \prod_{k=1}^{j-1}\Pr \left(I_i^k \ge t_{(i)}^{k+1}-t_{(i)}^{k}|\bHi(t_{(i)}^k)=\bhi(t_{(i)}^k),\X=\x,\bL=\bl \right) \\ 
& = \mathlarger{\mathlarger{\sum}}_{j=0}^{n-1} \left[ F_{I_i^j} \left( \min \left\{ t,t_{(i)}^{j+1} \right\} -t_{(i)}^{j}|\x,\bhi(t_{(i)}^j),\bl \right) \prod_{k=0}^{j-1} \left(1- F_{I_i^k}(t_{(i)}^{k+1}-t_{(i)}^{k}|x,\bhi(t_{(i)}^k),\bl) \right) \right] 
\end{split}
\end{equation*}
	
\end{proof}

\begin{proof}[Proof of Corollary \ref{cor:twoperson}] 
When $n=2$, Theorem \ref{mainidentification} for individual $i=1$ becomes,
\begin{equation*}
\begin{split}
& \E[Y_1(t;\bh_{(1)},\x)|\bL=\bl] = \E[Y_1(t;t_2,\x)|\bL=\bl] \\
&= F_{I_i^0}( \min \{ t,t_2 \}\mid \x,t_2,\bl )+  F_{I_i^1}( t-t_2\mid \x,t_2,\bl ) \left(1- F_{I_i^0}(t_2 \mid \x,t_2,\bl ) \right) \text{by Lemma \ref{Iidentification}} \\
&= F_{I_i^0}( \min \{ t,t_2 \}\mid x_1,\bl_1 )+  F_{I_i^1}( t-t_2\mid \x,t_2,\bl ) \left(1- F_{I_i^0}(t_2 \mid x_1,\bl_1 ) \right) \text{by Assumption \ref{as:exclusion}} \\
&= F_{I_i^0}( \min \{ t,t_2 \}\mid x_1,\bl_1 )+  \Pr[T_1 <t|T_2=t_2,T_1>t_2,\X=\x,\bL=\bl] \left(1- F_{I_i^0}(t_2 \mid x_1,\bl_1 ) \right) \\
\intertext{by Lemma \ref{Iidentification}}
&= F_1( \min \{ t,t_2 \} | x_1,\bl_1) + \Pr[T_1 <t|T_2=t_2,T_1>t_2,\X=\x,\bL=\bl] \left( 1- F_1(t_2|x_1,\bl_1)  \right) \\
\end{split}
\end{equation*}
Therefore, when $t > t_2$,
\[ \E[Y_1(t;\bh_{(1)},\x)|\bL=\bl] = F_1(t_2| x_1,\bl_1) + \Pr[T_1 <t|T_2=t_2,T_1>t_2,\X=\x,\bL=\bl] \big{(} 1- F_1(t_2|x_1,\bl_1) \big{)} \]
and when $t \le t_2$,
\[ \E[Y_i(t;\bhi,\x)|\bL=\bl] =F_i(t|x_i,\bl_i) \]
\end{proof}


\begin{proof}[Proof of Theorem \ref{thm:exposure}] 
\begin{equation*}
\begin{split}
& dG^*_{(i)}(\bhi|\xminusi,\bl_{(i)}) = p(\bHi^*(\xminusi)=\bhi|\bLi=\bli)\\
& \quad = p(T_{\varphi_i^1}(\bh^i_{(\varphi_i^1)},\x)=t^1_{(i)},\ldots,T_{\varphi_i^{n-1}}(\bh^i_{(\varphi_i^{n-1})},\x)=t^{n-1}_{(i)}|\bL=\bl) \\
\intertext{by the definitions of $\bHi^*(\xminusi)$ and $\bh^i_{(j)}$, and Assumption \ref{as:exclusion}}
& \quad = p(I^0_{\varphi_i^1}(\bh^i_{(\varphi_i^1)},\x)=t^1_{(i)}, I^0_{\varphi_i^2}(\bh^i_{(\varphi_i^1)},\x)>t^1_{(i)},\ldots, I^0_{\varphi_i^{n-1}}(\bh^i_{(\varphi_i^1)},\x)>t^1_{(i)},\\
& \quad \quad \quad I^1_{\varphi_i^2}(\bh^i_{(\varphi_i^2)},\x)=t^2_{(i)}-t^1_{(i)}, I^1_{\varphi_i^3}(\bh^i_{(\varphi_i^3)},\x)> t^2_{(i)}-t^1_{(i)}, \ldots, I^1_{\varphi_i^{n-1}}(\bh^i_{(\varphi_i^{n-1})},\x)> t^2_{(i)}-t^1_{(i)}, \\
& \quad \quad \quad , \ldots, \\
& \quad \quad \quad I^{n-2}_{\varphi_i^{n-1}}(\bh^i_{(\varphi_i^{n-1})},\x)=t^{n-1}_{(i)}-t^{n-2}_{(i)} \mid \bLi=\bli)  \\
\intertext{by Definition \eqref{eq:Tdecomposition}}
& \quad =\prod_{j=1}^{n-1} \left[ f_{I_{\varphi_i^j}^{j-1}}\big{(} t_{(i)}^j-t_{(i)}^{j-1} \mid \x,\bh^i_{(\varphi_i^j)},\bl \big{)} \prod_{k=j+1}^{n-1} S_{I_{\varphi_i^k}^{j-1}}\big{(} t_{(i)}^j-t_{(i)}^{j-1}\mid \x,\bh^j_{(\varphi_i^k)},\bl \big{)}\right] \\
\intertext{by Assumption \ref{as:independence}}
\end{split}
\end{equation*}
\end{proof}

\begin{proof}[Proof of Corollary \ref{cor:obs}]
\begin{equation*}
\begin{split}
\E \big[Y_i\big{(}t;x_i,\xminusi,\bHi^*(\xminusi)\big{)} \,\big|\, \bL=\bl\big] & =  \int \E[Y_i(t;\x,\bhi)|\bL=\bl] \text{ d}G^*_{(i)}(\bhi|\x,\bl)   \\ \intertext{by Definition \ref{eq:yitx}}
& = \int \E[Y_i(t;\x,\bhi)|\bHi^*=\bhi,\X=\x,\bL=\bl] \text{ d}G^*_{(i)}(\bhi|\x,\bl)   \\ 
\intertext{by Equation \eqref{eq:exchangeabilityforT} under Assumptions \ref{as:prevexchangeability} -- \ref{as:txexchangeability}}
& = \int \E[Y_i(t)\,|\, \bHi^*=\bhi, \X=\x, \bL=\bl] \text{ d}G^*_{(i)}(\bhi|\x,\bl)  \\ 
\intertext{by Assumption \ref{as:consistency}}
& = \E[Y_i(t) | \X=\x, \bL=\bl]  \intertext{as $G^*_{(i)}(\bhi|\x,\bl)$ does not depend on $Y_i(t)$} 
\end{split}	
\end{equation*}

\end{proof}

\begin{proof}[Proof of Proposition \ref{prop:overexposuretx}]
\begin{align*}
    \overline{Y}_i^p(t; x_i \,|\, \bL = \bl) 
    &= \sum_{x_{(i)}\in\mathcal{X}^{n-1}} \E\big[Y_i\big(t;x_i,\x_{(i)},\bHi^*(\x_{(i)})\big)\mid \bL=\bl\big] p(\x_{(i)}\,|\, X_i=x_i,\bL = \bl) \\
\intertext{by Assumption \ref{as:sufficientL}}
    & = \sum_{x_{(i)}\in\mathcal{X}^{n-1}} \E\big[Y_i\big(t;x_i,\x_{(i)},\bHi^*(\x_{(i)})\big)\mid \bL=\bl\big] p(\x_{(i)}\,|\,\bL = \bl)  \\
    \overline{Y}_i^p(t; x_i, \x_{(i)}' \,|\, \bL = \bl) 
    &= \sum_{x_{(i)}\in\mathcal{X}^{n-1}} \E\big[Y_i\big(t;x_i, \x_{(i)}, \bHi^*(\x_{(i)}')\big)\mid \bL=\bl\big] p(\x_{(i)}\,|\, X_i=x_i,\bL = \bl)\\
    \intertext{by Assumption \ref{as:sufficientL}}
    &= \sum_{x_{(i)}\in\mathcal{X}^{n-1}} \E\big[Y_i\big(t;x_i, \x_{(i)}, \bHi^*(\x_{(i)}')\big)\mid \bL=\bl\big] p(\x_{(i)}\,|\,\bL = \bl)\\    
  \end{align*}
\end{proof}

\begin{proof}[Proof of Theorem \ref{thm:haz_mainresult}]

\begin{equation*}
\begin{split}
& \E[Y_i(t;\bhi,\x)|\bL=\bl] \\
&  = \mathlarger{\mathlarger{\sum}}_{j=0}^{n-1} \left[ F_{I_i^j} \left( \min \left\{ t,t_{(i)}^{j+1} \right\} -t_{(i)}^{j}|\x,\bhi,\bl \right) \prod_{k=0}^{j-1} \left(1- F_{I_i^k}(t_{(i)}^{k+1}-t_{(i)}^{k})|\x,\bhi,\bl \right) \right] 
 \\ \intertext{by Theorem \ref{thm:exposure}}
& = \Pr \big( I_i^0(x_i) \le \min \{t,t_{(i)}^{1} \}|\bL=\bl \big) \\
& \quad + \Pr \big( I_i^1(x_i,\xminusi,\bhi(t_{(i)}^1)) \le \min \{t,t_{(i)}^{2}\}-t_{(i)}^{1} |\bL=\bl \big) \cdot \Pr \big( I_i^0(x_i) > t_{(i)}^{1} |\bL=\bl \big)  \\
& \quad + \Pr \big( I_i^2(x_i,\xminusi,\bhi(t_{(i)}^2)) \le \min \{t,t_{(i)}^{3}\}-t_{(i)}^{2} |\bL=\bl \big) \cdot \Pr \big( I_i^0(x_i) > t_i^{(1)} |\bL=\bl \big) \\
& \quad \quad \quad \cdot \Pr \big( I_i^1(x_i,\xminusi,\bhi(t_{(i)}^1)) >t_{(i)}^{2}-t_{(i)}^{1} |\bL=\bl \big) \\
& \quad + \ldots \\
& \quad + \Pr \big( I_i^{n-1} (x_i,\xminusi,\bhi(t_{(i)}^{n-1})) \le t- t_{(i)}^{n-1} |\bL=\bl \big) \\
& \quad \quad \quad \cdot \prod_{k=0}^{n-2} \Pr \big( I_i^k(x_i,\xminusi,\bhi(t^k_{(i)})) > t_i^{k+1}-t_i^{k} |\bL=\bl \big) \\
& = 1-e^{-\int_0^{\min \{t,t_{(i)}^{1}\}} \lambda_{0i}(s|x_i,\bl_i)ds} \\
& \quad + \Big( 1- e^{-\int_{t_{(i)}^{1}}^{\min \{t,t_{(i)}^{2}\}} \lambda_{0i}(s|x_i,\bl_i)+\lambda_{\varphi_i^1 i }(s-t_{(i)}^{1}|x_i,x_{(i)}^{1},\bl_i,\bl_{\varphi_i^1 }) ds} \Big) \cdot e^{-\int_0^{t_{(i)}^{1}} \lambda_{i0}(s|x_i,\bl_i)ds} \\ 
& \quad + \Big( 1- e^{-\int_{t_{(i)}^{2}}^{\min \{t,t_{(i)}^{3}\}} \lambda_{i0}(s|x_i,\bl_i)+\lambda_{\varphi_i^1 i}(s-t_{(i)}^{1}|x_i,x_{(i)}^{1},\bl_i,\bl_{\varphi_i^1})+\lambda_{\varphi_i^2 i}(s-t^2_{(i)}|x_i,x_{(i)}^{2},\bl_i,\bl_{\varphi_i^2}) ds} \Big) \\
& \quad \quad \quad \quad  \cdot e^{-\int_0^{t_{(i)}^{1}} \lambda_{i0}(s|x_i,\bl_i)ds} \cdot e^{-\int_{t_{(i)}^{1}}^{t_{(i)}^{2}} \lambda_{0i}(s|x_i,\bl_i)+\lambda_{\varphi_i^1 i }(s-t_{(i)}^{1}|x_i,x_{(i)}^{1},\bl_i,\bl_{\varphi_i^1 }) ds} \\ 
& \quad + \ldots \\
& \quad + \Big( 1- e^{-\int_{t_{(i)}^{n-1}}^t \lambda_{i0}(s|x_i,\bl_i)+\lambda_{\varphi_i^1 i}(s-t_{(i)}^1|x_i,x_{(i)}^{1},\bl_i,\bl_{\varphi_i^1})+ \ldots +\lambda_{\varphi_{i}^{n-1} i}(s-t{(i)}^{n-1}|x_i,x_{(i)}^{n-1},\bl_i,\bl_{\varphi_i^{n-1}}) ds} \Big) \\
& \quad \quad \quad \cdot e^{-\int_0^{t_{(i)}^{1}} \lambda_{i0}(s|x_i,\bl_i)ds} \cdot e^{-\int_{t_{(i)}^{1}}^{t_{(i)}^{2}} \lambda_{0i}(s|x_i,\bl_i)+\lambda_{\varphi_i^1 i }(s-t_{(i)}^{1}|x_i,x_{(i)}^{1},\bl_i,\bl_{\varphi_i^1 }) ds}\\
& \quad \quad \quad \cdot \ldots \cdot e^{-\int_{t_{(i)}^{n-2}}^{t_{(i)}^{n-1}} \lambda_{i0}(s|x_i,\bl_i)+\lambda_{\varphi_i^1 i}(s-t_{(i)}^1|x_i,x_{(i)}^{1},\bl_i,\bl_{\varphi_i^1}) + \ldots + \lambda_{\varphi_i^{n-2}i}(s-t_{(i)}^{n-2}|x_i,x_{(i)}^{(n-2)},\bl_i,\bl_{\varphi_i^{n-1}}) ds} \\ \intertext{by plugging in Equation \eqref{eq:sumhazard_forI}}
& = 1 - \exp ^{- \int_0^t  \lambda_{0i}(s|x_i,\bl_i) \text{d}s- \sum_{j=1}^{n-1} \int_{t_{(i)}^j}^{t} \lambda_{\varphi_i^ji}(s-t_{(i)}^j|x_i,x_{\varphi_i^j},\bl_i,\bl_{\varphi_i^j})  \text{d}s}\\
& = 1 - \exp ^{ - \int_0^t \Big{(} \lambda_{0i}(s|x_i,\bl_i) + \sum_{j \neq i} y_j(s) \lambda_{ji}(s-t_j|x_i,x_j,\bl_i,\bl_j) \Big{)} \text{d}s} \\ 
\end{split}
\end{equation*}
\end{proof}

\begin{proof}[Proof of Theorem \ref{thm:haz_Hresult}]
\begin{equation*}
\begin{split}
& \text{d}G_{(i)}^*(\bhi \,|\, \x_{(i)}, \bli) = \prod_{j=1}^{n-1} \bigg[f_{I_{\varphi_i^j}^{j-1}}\big( t_{(i)}^j-t_{(i)}^{j-1} \,\big|\, \x,\bh^i_{(\varphi_i^j)},\bl \big) \prod_{k=j+1}^{n-1} S_{I_{\varphi_i^k}^{j-1}}\big( t_{(i)}^j-t_{(i)}^{j-1} \,\big|\, \x,\bh^j_{(\varphi_i^k)},\bl\big)\bigg] \\
\intertext{by Theorem \ref{thm:exposure}}
& \quad = f_{I_{\varphi_i^1}^{0}}  \big( t_{(i)}^1 \,\big|\, \x,\bh^i_{(\varphi_i^1)},\bl \big)  \prod_{k=2}^{n-1}  S_{I_{\varphi_i^k}^{0}}\big( t_{(i)}^1 \,\big|\, \x,\bh^j_{(\varphi_i^k)},\bl\big)  \\
& \quad \quad \cdot f_{I_{\varphi_i^2}^{1}}  \big( t_{(i)}^2-t_{(i)}^1 \,\big|\, \x,\bh^i_{(\varphi_i^2)},\bl \big)  \prod_{k=3}^{n-1}  S_{I_{\varphi_i^k}^{0}}\big( t_{(i)}^2-t_{(i)}^1 \,\big|\, \x,\bh^j_{(\varphi_i^k)},\bl\big)  \\
& \quad \quad \cdot \ldots \cdot \\
& \quad \quad \cdot f_{I_{\varphi_i^{n-1}}^{n-2}}  \big( t_{(i)}^{n-1}-t_{(i)}^{n-2} \,\big|\, \x,\bh^i_{(\varphi_i^2)},\bl \big)  \\
& \quad = \lambda_{\varphi_i^1}(t_{(i)}^1|\x,\bh^i_{(\varphi_i^1)},\bl) \exp \left(-\int_0^{t_{(i)}^1}  \lambda_{\varphi_i^1}(s|\x,\bh^i_{(\varphi_i^1)},\bl) \text{d}s\right) \cdot  \prod_{k=2}^{n-1}  \exp \left( -\int_0^{t_{(i)}^1}  \lambda_{\varphi_i^k}(s|\x,\bh^i_{(\varphi_i^k)},\bl) \text{d}s\right) \\
& \quad \quad \cdot \lambda_{\varphi_i^2}(t_{(i)}^2|\x,\bh^i_{(\varphi_i^2)},\bl) \exp \left(-\int_{t_{(i)}^1}^{t_{(i)}^2}  \lambda_{\varphi_i^2}(s|\x,\bh^i_{(\varphi_i^2)},\bl) \text{d}s\right) \cdot  \prod_{k=3}^{n-1}  \exp \left( -\int_{t_{(i)}^1}^{t_{(i)}^2} \lambda_{\varphi_i^k}(s|\x,\bh^i_{(\varphi_i^k)},\bl) \text{d}s\right) \\
& \quad \quad \cdot \ldots \cdot \\
& \quad \quad \cdot \lambda_{\varphi_i^{n-1}}(t_{(i)}^{n-1}|\x,\bh^i_{(\varphi_i^1)},\bl) \exp \left(-\int_{t_{(i)}^{n-2}}^{t_{(i)}^{n-1}}  \lambda_{\varphi_i^{n-1}}(s|\x,\bh^i_{(\varphi_i^{n-1})},\bl) \text{d}s\right)  \\
& \quad = \prod_{j=1}^{n-1} \bigg[\lambda_{\varphi_i^j}(t_{(i)}^j \,|\, \x, \bh^i_{(\varphi_i^j)},  \bl) \prod_{k = j}^{n-1} \exp\bigg(-\int_{t_{(i)}^{j-1}}^{t_{(i)}^j} \lambda_{\varphi_i^k}(s \,|\, \x, \bh^i_{(\varphi_i^k)}, \bl) \,\mathrm{d}s\bigg)\bigg] \\
& \quad = \prod_{j=1}^{n-1} \lambda_{\varphi_i^j}(t_{(i)}^j \,|\, \x, \bh^i_{(\varphi_i^j)},  \bl)  \times \prod_{k = 1}^{n-1} \exp\bigg(-\int_0^{t_{(i)}^k} \lambda_{\varphi_i^k}(s \,|\, \x, \bh^i_{(\varphi_i^k)}, \bl) \,\mathrm{d}s\bigg) \\
\intertext{by grouping terms by $\varphi_i^k$}
\end{split}
\end{equation*}
\end{proof}


\begin{cor}
Under hazard model \eqref{eq:haz}, the exposure-controlled contagion effect $\CE_i(t,\x,\bhi,\bhi')>0$ for $\bhi \prec \bhi'$ as long as $\gamma(t)>0$ for any $t>0$.
\label{cor:contagionsign}	
\end{cor}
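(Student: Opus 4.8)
The plan is to reduce the corollary to a monotonicity statement about cumulative hazards. Fixing the common treatment $\x$ and conditioning on $\bL=\bl$, Theorem~\ref{thm:haz_mainresult} together with the pairwise model \eqref{eq:haz} writes $\E[Y_i(t;\x,\bhi)\mid\bL=\bl]=1-\exp[-\Lambda_i(t;\bhi,\bl)]$, where
\[
\Lambda_i(t;\bhi,\bl)=\int_0^t\Big(\lambda_{0i}(s\mid x_i,\bl_i)+\sum_{j\neq i}\indicator{s>t_j}\,\gamma(s-t_j)\,c_j\Big)\,\mathrm{d}s,
\qquad c_j=e^{\beta_1 x_i+\beta_2 x_j+\theta_1\bl_i+\theta_2\bl_j},
\]
and the same holds with $\bhi'$ and its ordered times $t_j'$. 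Since $u\mapsto 1-e^{-u}$ is strictly increasing, the sign of the conditional integrand $\E[Y_i(t;\x,\bhi)-Y_i(t;\x,\bhi')\mid\bL=\bl]$ equals the sign of $\Lambda_i(t;\bhi,\bl)-\Lambda_i(t;\bhi',\bl)$, so it suffices to show the cumulative hazard is strictly larger under the earlier history $\bhi$; integrating over $\bl$ with $Q$ then preserves the sign for the covariate-marginalized $\CE_i$ of Definition~\ref{defn:contagion}.

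Next I would note that the exogenous term $\lambda_{0i}(s\mid x_i,\bl_i)$ is identical in both histories and cancels. For each neighbor $j$, substituting $u=s-t_j$ turns its contribution into $c_j\int_0^{(t-t_j)_+}\gamma(u)\,\mathrm{d}u$, where $(a)_+=\max\{a,0\}$; here $c_j>0$ because it is an exponential and does not depend on the timing, so it is common to both arms. Writing $G(a)=\int_0^a\gamma(u)\,\mathrm{d}u$, the hypothesis $\gamma(u)>0$ for all $u>0$ makes $G$ strictly increasing on $[0,\infty)$. Hence the whole comparison reduces to
\[
\Lambda_i(t;\bhi,\bl)-\Lambda_i(t;\bhi',\bl)=\sum_{j\neq i}c_j\big(G((t-t_j)_+)-G((t-t_j')_+)\big).
\]

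Finally, $\bhi\prec\bhi'$ gives $t_j\le t_j'$ for every $j$, so $(t-t_j)_+\ge(t-t_j')_+$ and each bracket is nonnegative by monotonicity of $G$; with $c_j>0$ the sum is therefore nonnegative, establishing $\CE_i\ge 0$. The step I expect to require the most care is strictness: a bracket vanishes exactly when $(t-t_j)_+=(t-t_j')_+$, i.e.\ when $t_j=t_j'$ or both times are at or beyond $t$, so $\CE_i>0$ hinges on the existence of at least one neighbor infected strictly earlier under $\bhi$ and before the observation time (some $j$ with $t_j<t_j'$ and $t_j<t$). This nondegeneracy is precisely the content of ``earlier infection of cluster members'' within $[0,t)$ that $\bhi\prec\bhi'$ is meant to encode; since it concerns only the deterministic times and not $\bl$, it holds uniformly in $\bl$, the corresponding bracket is strictly positive because $G$ strictly increases, and the inequality becomes strict after integrating against $Q$.
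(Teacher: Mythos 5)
Your proposal is correct and follows essentially the same route as the paper's own proof: both invoke the hazard representation of Theorem~\ref{thm:haz_mainresult} under model \eqref{eq:haz}, cancel the common exogenous hazard $\lambda_{0i}$, change variables so each neighbor $j$ contributes a cumulative internal hazard $c_j\int_0^{t-t_j}\gamma(u)\,\mathrm{d}u$, and compare these via positivity of $\gamma$. The one point where you go beyond the paper is the strictness discussion, and it is a genuine refinement rather than a digression: the paper's definition of $\bhi \prec \bhi'$ requires only $t_j \le t_j'$ element-wise, yet its proof asserts ``$t_j < t_j'$'' for every $j$ and ignores truncation at the horizon $t$; as you note, under the stated hypotheses alone one only gets $\CE_i \ge 0$ (with equality possible, e.g.\ when $\bhi = \bhi'$ or when all timing differences occur after $t$), and strict positivity requires at least one neighbor with $t_j < t_j'$ and $t_j < t$. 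Your explicit handling of $(t-t_j)_+$ and of the covariate marginalization against $Q$ makes the argument airtight where the paper's version is informal.
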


\begin{proof}[Proof of Corollary \ref{cor:contagionsign}]
\begin{equation}
\begin{split}
\CE_i(t,\x,\bhi,\bhi')  & = \E[Y_i(t;\x, \bhi) - Y_i(t;\x, \bhi')] \\
& = 1 - \exp \big[- \int_0^t \Big{(} \lambda_{0i}(s|x_i,\bl_i) + \sum_{j \neq i} y_j(s) \lambda_{ji}(s-t_j|x_i,x_j,\bl_i,\bl_j) \Big{)} \mathrm{d}s \big] \\ 
& \quad - \Big( 1 - \exp \big[- \int_0^t \Big{(} \lambda_{0i}(s|x_i,\bl_i) + \sum_{j \neq i} y_j(s) \lambda_{ji}(s-t'_j|x_i,x_j,\bl_i,\bl_j) \Big{)} \mathrm{d}s \big] \Big) \\
& =  \exp \big[- \int_0^t \Big{(} \lambda_{0i}(s|x_i,\bl_i) \mathrm{d}s \big]\\
&  \quad \times  \Big( e^{ - \sum_{j \neq i} \int_0^t y_j(s) \lambda_{ji}(s-t'_j|x_i,x_j,\bl_i,\bl_j) \mathrm{d}s } - e^{ - \sum_{j \neq i} \int_0^t y_j(s) \lambda_{ji}(s-t_j|x_i,x_j,\bl_i,\bl_j) \mathrm{d}s } \Big)
\end{split}
\label{proof:contagion_main}	 
\end{equation}
For the term inside the round brackets, plugging in model \eqref{eq:haz}, we have
\begin{equation}
\begin{split}
& \exp \Big[ - \mathlarger{\sum}_{j \neq i} \int_0^t y_j(s) \lambda_{ji}(s-t'_j|x_i,x_j,\bl_i,\bl_j) \mathrm{d}s \Big] \\
& =\exp \Big[  - \mathlarger{\sum}_{j \neq i} \int_0^t y_j(s)   \gamma(s-t'_j) \exp(\beta_1 x_i + \beta_2 x_j + \theta_1 \bl_i + \theta_2 \bl_j) \mathrm{d}s \Big] \text{ by \eqref{eq:haz}} \\
& =\exp \Big[ - \mathlarger{\sum}_{j \neq i}     \exp(\beta_1 x_i + \beta_2 x_j + \theta_1 \bl_i + \theta_2 \bl_j)  \cdot \int_0^{t-t'_j} \gamma(s) \mathrm{d}s \Big]  \\
& > \exp \Big[ - \mathlarger{\sum}_{j \neq i}     \exp(\beta_1 x_i + \beta_2 x_j + \theta_1 \bl_i + \theta_2 \bl_j)   \cdot \int_0^{t-t_j} \gamma(s) \mathrm{d}s \Big] \text{ since $t_j < t'_j$ and $\gamma(t)>0$} \\
& = \exp \Big[ - \mathlarger{\sum}_{j \neq i} \int_0^t y_j(s) \lambda_{ji}(s-t_j|x_i,x_j,\bl_i,\bl_j) \mathrm{d}s \Big] \\
\end{split}
\label{proof:contagion_1}	 
\end{equation}
Plugging the result of \eqref{proof:contagion_1} in \eqref{proof:contagion_main}, we have $\CE_i(t,\x,\bhi,\bhi')>0$.

Similar conclusion applies for the exposure-marginalized contagion effect, and the proof is omitted here.
\end{proof}

\begin{cor}
Under hazard model \eqref{eq:haz}, the exposure-controlled susceptibility effect $\SE_i(t,\x_{(i)}, \bhi)$ has the same sign as $e^{\beta_1}$.
\label{cor:susceptibilitysign}	
\end{cor}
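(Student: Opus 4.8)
The plan is to reduce the corollary to Theorem~\ref{thm:haz_mainresult} and then exploit the one structural feature of the hazard model~\eqref{eq:haz} that actually drives the sign: the susceptible individual's own treatment $x_i$ enters \emph{every} hazard acting on $i$---both the exogenous hazard $\lambda_{0i}$ and every pairwise hazard $\lambda_{ji}$---through the identical multiplicative factor $e^{\beta_1 x_i}$. Consequently, switching $x_i$ from $0$ to $1$ rescales the entire cumulative infection hazard of $i$ by the constant $e^{\beta_1}$, uniformly over time and over the history $\bhi$.

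First I would apply Theorem~\ref{thm:haz_mainresult} to each of the two potential outcomes in Definition~\ref{defn:susceptibility} and factor the common $x_i$-dependent term out of the cumulative hazard, writing
\[
  \E[Y_i(t;x_i,\x_{(i)},\bhi)\mid\bL=\bl] = 1 - \exp\!\big[-e^{\beta_1 x_i}\,A_i(t,\bl)\big],
\]
where $A_i(t,\bl) = e^{\theta_1 \bl_i}\int_0^t\big(\alpha(s) + \sum_{j\neq i} y_j(s)\,\gamma(s-t_j)\,e^{\beta_2 x_j + \theta_2 \bl_j}\big)\,\mathrm{d}s$ collects all contributions that do not involve $x_i$. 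By Assumption~\ref{as:positivity} the baseline external hazard $\alpha$ is strictly positive, so $A_i(t,\bl)>0$ for every $t>0$ and every $\bl$. Substituting $x_i=1$ and $x_i=0$ then gives the conditional susceptibility effect as a difference of two exponentials,
\[
  \SE_i(t,\x_{(i)},\bhi\mid\bL=\bl) = \exp\!\big[-A_i(t,\bl)\big] - \exp\!\big[-e^{\beta_1}A_i(t,\bl)\big].
\]

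Next I would read off the sign. Because $A_i(t,\bl)>0$, the map $c\mapsto \exp[-c\,A_i(t,\bl)]$ is strictly decreasing, so the displayed difference has the same sign as $e^{\beta_1}-1$: it is negative precisely when $e^{\beta_1}<1$ (i.e.\ $\beta_1<0$), zero when $\beta_1=0$, and positive when $\beta_1>0$, independently of $t$, $\x_{(i)}$, $\bhi$, and the cluster size. Finally, marginalizing over covariates via Definition~\ref{prop:overcovariate} preserves this conclusion: the conditional integrand carries a single constant sign across all $\bl$, so the integral against $Q(\bl)$ inherits that sign, yielding the claim for $\SE_i(t,\x_{(i)},\bhi)$.

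I do not expect a genuine analytic obstacle here; the work is entirely bookkeeping. The only point requiring care is to verify that $\beta_1$ really does factor out of \emph{both} $\lambda_{0i}$ and every $\lambda_{ji}$ with the \emph{same} coefficient---this is exactly the structural property of~\eqref{eq:haz} that makes the rescaling $\Lambda_i\mapsto e^{\beta_1}\Lambda_i$ uniform and hence the sign determination immediate. A secondary, purely expository point is clarifying that ``the same sign as $e^{\beta_1}$'' is to be read as the sign of $e^{\beta_1}-1$ (equivalently of $\beta_1$), since $e^{\beta_1}$ is positive by construction.
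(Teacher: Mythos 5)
Your proposal is correct and takes essentially the same route as the paper's own proof: both apply Theorem~\ref{thm:haz_mainresult}, factor the common multiplicative term $e^{\beta_1 x_i}$ out of the cumulative hazard under model~\eqref{eq:haz}, and read off the sign from the difference of two survival terms $e^{-A} - e^{-e^{\beta_1}A}$. Your explicit verification that $A_i(t,\bl)>0$ and your clarification that ``same sign as $e^{\beta_1}$'' must mean the sign of $e^{\beta_1}-1$ (equivalently of $\beta_1$) are minor refinements the paper leaves implicit, not a different argument.
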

\begin{proof}[Proof of Corollary \ref{cor:susceptibilitysign}]

\begin{equation*}
\begin{split}
& \SE_i(t,\x_{(i)}, \bhi) = \E[Y_i(t;1,\x_{(i)},\bhi) - Y_i(t;0,\x_{(i)},\bhi)]	 \\
& = 1 - \exp \big[- \int_0^t \Big{(} \lambda_{0i}(s|1,\bl_i) + \sum_{j \neq i} y_j(s) \lambda_{ji}(s-t_j|1,x_j,\bl_i,\bl_j) \Big{)} \mathrm{d}s \big] \\ 
& \quad - \Big( 1 - \exp \big[- \int_0^t \Big{(} \lambda_{0i}(s|0,\bl_i) + \sum_{j \neq i} y_j(s) \lambda_{ji}(s-t'_j|0,x_j,\bl_i,\bl_j) \Big{)} \mathrm{d}s \big] \Big) \\
& =  \exp \big[- \int_0^t \alpha(t) \exp( \theta_1 \bl_i) + \sum_{j \neq i} y_j(s)   \gamma(t-t_j) \exp(\beta_2 x_j + \theta_1 \bl_i + \theta_2 \bl_j) \mathrm{d}s \big]\\
&  \quad - \exp \big[- e^{\beta_1} \int_0^t \alpha(t) \exp(\theta_1 \bl_i) + \sum_{j \neq i} y_j(s)   \gamma(t-t_j) \exp(\beta_2 x_j + \theta_1 \bl_i + \theta_2 \bl_j) \mathrm{d}s \big]\\
\end{split}	
\end{equation*}
Therefore, if $\beta_1<0$ thereby $e^{\beta_1}<1$, then $\SE_i(t,\x_{(i)}, \bhi)<0$; if $\beta_1=0$ thereby $e^{\beta_1}=1$, then $\SE_i(t,\x_{(i)}, \bhi)=0$; if $\beta_1>0$ thereby $e^{\beta_1}>1$, then $\SE_i(t,\x_{(i)}, \bhi)>0$.

Similar conclusions apply for the exposure-marginalized susceptibility effect and the exposure-and-treatment-marginalized susceptibility effect, and the proofs are omitted here.
\end{proof}

\begin{cor}
Under hazard model \eqref{eq:haz}, the exposure-controlled infectiousness effect $\IE_{i}(t,x_i,\bhi)$ has the same sign as $e^{\beta_2}$.
\label{cor:infectiousnesssign}	
\end{cor}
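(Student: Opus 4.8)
The plan is to mirror the argument used for Corollary~\ref{cor:susceptibilitysign}, reading ``same sign as $e^{\beta_2}$'' in the three-case sense (sign relative to $1$) made explicit there. First I would start from the exposure-controlled infectiousness effect of Definition~\ref{defn:infectiousness}, the contrast $\E[Y_i(t;x_i,\x_{(i)},\bhi) - Y_i(t;x_i,\x_{(i)}',\bhi)]$, and substitute the closed form of each potential outcome supplied by Theorem~\ref{thm:haz_mainresult}. This writes both terms as $1-\exp[-\int_0^t\lambda_i(s)\,\mathrm{d}s]$ for the relevant cumulative hazard, so the two $1$'s cancel and $\IE_i$ collapses to a difference of two exponentials.

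Next I would exploit the multiplicative structure of the hazard model~\eqref{eq:haz}. Because $x_i$, $\bhi$, and $\bl$ are held fixed across both arms of the contrast, the exogenous term $\lambda_{0i}(s\mid x_i,\bl_i)=\alpha(s)e^{\beta_1 x_i+\theta_1\bl_i}$ is identical in the two cumulative hazards, and the factor $e^{\beta_1 x_i+\theta_1\bl_i}$ multiplies every pairwise hazard as well. Factoring $e^{\beta_1 x_i+\theta_1\bl_i}$ out of the entire integrand, the only dependence on the others' treatments sits in the sum $\sum_{j\neq i} e^{\beta_2 x_j+\theta_2\bl_j}\int_0^t y_j(s)\gamma(s-t_j)\,\mathrm{d}s$. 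Consequently $\IE_i$ equals a strictly positive common prefactor, $\exp[-e^{\beta_1 x_i+\theta_1\bl_i}\int_0^t\alpha(s)\,\mathrm{d}s]$, times the difference of the two residual exponentials whose exponents differ only through the appearance of $e^{\beta_2 x_j}$ versus $e^{\beta_2 x_j'}$.

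Finally I would read off the sign. Since the prefactor is positive, the sign of $\IE_i$ is the sign of the difference of the residual exponentials, which is governed by comparing $\sum_{j\neq i} e^{\beta_2 x_j}e^{\theta_2\bl_j}\int_0^t y_j(s)\gamma(s-t_j)\,\mathrm{d}s$ under the two allocations. Because $\gamma(\cdot)>0$, $y_j(s)\in\{0,1\}$, and $e^{\theta_2\bl_j}>0$, each summand is nonnegative, and the map $x_j\mapsto e^{\beta_2 x_j}$ is increasing, constant, or decreasing exactly as $\beta_2$ is positive, zero, or negative. For the canonical contrast of all others treated versus all others untreated (the case abbreviated by $\IE_i(t,x_i,\bhi)$), this gives $\IE_i<0$ when $e^{\beta_2}<1$, $\IE_i=0$ when $e^{\beta_2}=1$, and $\IE_i>0$ when $e^{\beta_2}>1$, paralleling Corollary~\ref{cor:susceptibilitysign}.

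I expect the only real subtlety—rather than a genuine obstacle—to be keeping the comparison well posed so that the sign is pinned down by $\beta_2$ alone. This requires the two allocations to be monotonically ordered (each treated coordinate of one dominating the other); for non-comparable allocations the sign of the summed difference need not follow $\beta_2$. I would also flag the degenerate case in which no other cluster member is infected before $t$ under $\bhi$, so that every $y_j(s)=0$ on $[0,t]$: then both residual exponents vanish and $\IE_i=0$ regardless of $\beta_2$. The strict-sign conclusions therefore hold precisely when at least one neighbor's pairwise hazard contributes positive mass before $t$.
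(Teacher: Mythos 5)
Your proposal is correct and follows essentially the same route as the paper's proof: substitute the closed form from Theorem~\ref{thm:haz_mainresult} into the contrast of Definition~\ref{defn:infectiousness} with all others treated versus all others untreated, cancel the common exogenous and $e^{\beta_1 x_i+\theta_1\bl_i}$ factors, and read the sign off the $e^{\beta_2}$ multiplying the pairwise cumulative hazard sum. The two caveats you flag---that the strict sign conclusion requires at least one neighbor in $\bhi$ to be infected before $t$ with positive hazard mass, and that non-comparable allocations $\x_{(i)},\x_{(i)}'$ are not pinned down by $\beta_2$ alone---are genuine refinements that the paper's proof (which silently fixes the $\mathbf{1}$ versus $\mathbf{0}$ contrast and ignores the degenerate history) does not spell out, but they do not alter the argument.
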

\begin{proof}[Proof of Corollary \ref{cor:infectiousnesssign}]
	
\begin{equation*}
\begin{split}
&  \IE_{i}(t,x_i,\bhi) = \E[Y_i(t;x_i,\mathbf{1},\bhi) - Y_i(t;x_i,\mathbf{0}, \bhi)] \\
& = 1 - \exp \big[- \int_0^t \Big{(} \lambda_{0i}(s|x_i,\bl_i) + \sum_{j \neq i} y_j(s) \lambda_{ji}(s-t_j|x_i,1,\bl_i,\bl_j) \Big{)} \mathrm{d}s \big] \\ 
& \quad - \Big( 1 - \exp \big[- \int_0^t \Big{(} \lambda_{0i}(s|x_i,\bl_i) + \sum_{j \neq i} y_j(s) \lambda_{ji}(s-t'_j|x_i,0,\bl_i,\bl_j) \Big{)} \mathrm{d}s \big] \Big) \\
& =  \exp \big[- \int_0^t \alpha(t) \exp(\beta_1 x_i+ \theta_1 \bl_i) + \sum_{j \neq i} y_j(s)   \gamma(t-t_j) \exp(\beta_1 x_i + \theta_1 \bl_i + \theta_2 \bl_j) \mathrm{d}s \big]\\
&  \quad - \exp \big[-  \int_0^t \alpha(t) \exp(\beta_1 x_i+\theta_1 \bl_i) + e^{\beta_2} \sum_{j \neq i} y_j(s)   \gamma(t-t_j) \exp(\beta_1 x_i + \theta_1 \bl_i + \theta_2 \bl_j) \mathrm{d}s \big]\\
\end{split}	
\end{equation*}
Therefore, if $\beta_2<0$ thereby $e^{\beta_2}<1$, then $\IE_{i}(t,x_i,\bhi)<0$; if $\beta_2=0$ thereby $e^{\beta_2}=1$, then $\IE_{i}(t,x_i,\bhi)=0$; if $\beta_2>0$ thereby $e^{\beta_2}>1$, then $\IE_{i}(t,x_i,\bhi)>0$.

Similar conclusions apply for the exposure-marginalized infectiousness effect and the exposure-and-treatment-marginalized infectiousness effect, and the proofs are omitted here.
\end{proof}

\begin{proof}[Proof of Equations \eqref{eq:controlledSE}--\eqref{eq:contagion2}]
First, we prove the result in Equation \eqref{eq:controlledSE}.
\begin{equation*}
\begin{split}
HSE^C(t,\x_{(i)},\bhi,\bl) & = \frac{\lambda_{i}(t|x_i=1,\x_{(i)},\bhi,\bl)}{\lambda_{i}(t|x_i=0,\x_{(i)},\bhi,\bl)}\\
&= \frac{ e^{\beta_1 +\theta_1 \bl_i} \left( \alpha(t) + \sum_{j  \neq i}   y_j(t) \gamma(t-t_j) e^{\beta_2 X_j + \theta_2 \bl_j} \right)  }{e^{\theta_1 \bl_i} \left( \alpha(t) + \sum_{j \neq i}   y_j(t) \gamma(t-t_j) e^{\beta_2 X_j + \theta_2 \bl_j} \right)} \\
& =e^{\beta_1} \\
\end{split}
\end{equation*}

Second, we prove the result in Equation (\ref{eq:ie2}).
For $\bh_{(i,j)}$ with $\bt_{(i,j)}$, then 
\begin{equation*}
\begin{split}
& \lambda_{i}(t|x_{j}=1, \x_{(j)},h'_j, \bh_{(i,j)},\bl) - \lambda_{i}(t|x_{j}=1, \x_{(j)},h_j, \bh_{(i,j)},\bl) \\ 
& \quad =  \lambda_{i}(t|x_{j}=1, \x_{(j)},h'_j, \bh_{(i,j)},\bl) - \lambda_{i}(t|x_{j}=1, \x_{(j)},h_j, \bh_{(i,j)},\bl)  \\
& \quad= e^{\beta_1 X_i +\theta_1 \bl_i} \Big( \alpha(t) + \sum_{k \neq i,j}  y_k(t) \gamma(t-t_k) e^{\beta_2 X_k + \theta_2\bl_k} +  y_j(t)\gamma(t-t'_j)e^{\beta_2 + \theta_2\bl_j} \Big)   \\
& \quad \quad- e^{\beta_1 X_i +\theta_1 \bl_i} \Big( \alpha(t) + \sum_{k \neq i,j}  y_k(t) \gamma(t-t_k) e^{\beta_2 X_k + \theta_2\bl_k} \Big) \\
& \quad =  e^{\beta_1 X_i +\theta_1 \bl_i}  y_j(t)\gamma(t-t'_j)e^{\beta_2 + \theta_2\bl_j} \\
\intertext{Similarly,}
& \lambda_{i}(t|x_{j}=0,\x_{(j)}, h'_j, \bh_{(i,j)},\bl)- \lambda_{i}(t|x_{j}=0, \x_{(j)},h_j, \bh_{(i,j)},\bl) = e^{\beta_1 X_i +\theta_1 \bl_i} y_j(t) \gamma(t-t'_j)e^{\theta_2\bl_j}  
\end{split}
\end{equation*}
Thus,
\[
HIE^C(t;h_j,h'_j,\x_{(j)}, \bh_{(i,j)}) = \frac{ e^{\beta_1 X_i +\theta_1 \bl_i} \gamma(t-t'_j)e^{\beta_2 + \theta_2\bl_j}}{ e^{\beta_1 X_i +\theta_1 \bl_i} \gamma(t-t'_j)e^{\theta_2\bl_j}} = e^{\beta_2} 
\]

Third, we prove the result in Equation \eqref{eq:contagion2}.
\begin{equation*}
\begin{split}
& \int_0^t \big{[} \lambda_i(u| \mathbf{x}=\mathbf{0},h'_j,\bh_{(i,j)},\bl)-\lambda_i(u; \mathbf{x}=\mathbf{0},h_j,\bh_{(i,j)},\bl) \big{]} du \\
& = \int_0^t \Big{\{} e^{\theta_1 \bl_i} \big{[}\alpha(u) + \sum_{k \neq i,j}  y_k(u) \gamma(u-t_k) e^{\rho_k \bl_k}+  y_j(t)  \gamma(u-t'_j) e^{\theta_2\bl_j}] \\
& \quad \quad \quad \quad - e^{\theta_1 \bl_i} \big{[}\alpha(u) + \sum_{k \neq i,j}  y_k(u) \gamma(u-t_k) e^{\rho_k \bl_k}]  \Big{\}} du  \\
&=  \int_0^t e^{\theta_1 \bl_i}   y_j(u)\gamma(u-t'_j) e^{\theta_2\bl_j} du =e^{\theta_1 \bl_i+\theta_2\bl_j}  \cdot \int_0^t   y_j(u) \gamma(u-t'_j)du  \\
\intertext{Similarly,}
& \int_0^t \big{[} \lambda_i(u; \mathbf{x}=\mathbf{0},h''_j,\bh_{(i,j)})-\lambda_i(u; \mathbf{x}=\mathbf{0},h_j,\bh_{(i,j)}) \big{]} du =  e^{\theta_1 \bl_i+\theta_2\bl_j}  \int_0^t   y_j(u) \gamma(u-t''_j) du 
\end{split}
\end{equation*}
Thus,
\[
HCE^C(t;h^{''}_j(t),h'_j(t),\mathbf{h}_{(i,j)}(t))  = \frac{\int_{t'_j}^t \gamma(u-t'_j) du}{\int_{t''_j}^t \gamma(u-t''_j)  du} 
\]

\end{proof}


\end{document}